\DeclareMathOperator*{\bigplus}{\scalerel*{+}{\sum}}
\renewcommand{\paragraph}[1]{\par\medskip\noindent\textbf{#1.}}
\theoremstyle{plain}
\newtheorem{theorem}{Theorem}[section]
\newtheorem{proposition}[theorem]{Proposition}
\newtheorem{lemma}[theorem]{Lemma}
\newtheorem{corollary}[theorem]{Corollary}
\newtheorem*{claim}{Claim}
\theoremstyle{definition}
\newtheorem{example}[theorem]{Example}
\renewenvironment{proof}[1][\proofname]{\par
  \pushQED{\qed}%
  \normalfont \topsep6\p@\@plus6\p@\relax
  \trivlist
\item\relax
  {\scshape
    #1\@addpunct{.}}\hspace\labelsep\ignorespaces
}{%
\popQED\endtrivlist\@endpefalse
}
\newif\ifremark
\newcommand{\cameraready}[2]{#2}
\def\iff{\quad\text{\it iff}\quad}
\def\minus{\mathop{\setminus}}
\newcommand{\dom}{\mathit{dom}}
\def\clap#1{\hbox to 0pt{\hss#1\hss}}
\def\mathrlap{\mathpalette\mathrlapinternal}
\def\mathrlapinternal#1#2{%
\rlap{$\mathsurround=0pt#1{#2}$}}
\newcommand{\PTIME}{\textnormal{PTIME}\xspace}
\newcommand{\ASPACE}{\textnormal{ASPACE}\xspace}
\newcommand{\EXPTIME}{\textnormal{EXPTIME}\xspace}
\newcommand{\EXPSPACE}{\textnormal{EXPSPACE}\xspace}
\newcommand{\TWOEXPTIME}{\textnormal{2EXPTIME}\xspace}
\newcommand{\ALCIF}{\ensuremath{\mathcal{A\hspace{-0.8pt}L\hspace{-0.8pt}C\hspace{-1.5pt}I\hspace{-1.8pt}F}}\xspace}
\newcommand{\ALCI}{\ensuremath{\mathcal{A\hspace{-0.8pt}L\hspace{-0.8pt}C\hspace{-1.5pt}I}\xspace}}
\newcommand{\HornALCIF}{\ensuremath{\text{Horn-}\mathcal{A\hspace{-0.8pt}L\hspace{-0.8pt}C\hspace{-1.5pt}I\hspace{-1.8pt}F}}\xspace}
\newcommand{\ALC}{\ensuremath{\mathcal{A\hspace{-0.8pt}L\hspace{-0.8pt}C}\xspace}}
\def\P{\ensuremath{\mathcal{P}\xspace}}
\def\L{\ensuremath{\mathcal{L}\xspace}}
\def\A{\ensuremath{\mathcal{A}\xspace}}
\def\N{\ensuremath{\mathcal{N}\xspace}}
\def\Q{\ensuremath{\mathcal{Q}\xspace}}
\def\F{\ensuremath{\mathcal{F}\xspace}}
\def\M{\ensuremath{\mathcal{M}\xspace}}
\def\G{\ensuremath{\mathcal{G}\xspace}}
\def\T{\ensuremath{\mathcal{T}\xspace}}
\newcommand{\interval}[1]{{\ensuremath{\mathord{\text{\normalfont\fontfamily{lmtt}\selectfont{}#1}}}}}
\newcommand{\NONE}{\interval{0}\xspace}
\newcommand{\ONE}{\interval{1}\xspace}
\newcommand{\MAYBE}{\interval{?}\xspace}
\newcommand{\MANY}{\interval{*}\xspace}
\newcommand{\PLUS}{\interval{+}\xspace}
\newcommand{\PCOne}{{\mathrm{PC}\kern-0.1pt1}}
\newcommand{\trim}{\mathit{trim}}
\newcommand{\Vaccine}{\textsl{Vaccine}\xspace}
\newcommand{\Antigen}{\textsl{Antigen}\xspace}
\newcommand{\Pathogen}{\textsl{Pathogen}\xspace}
\newcommand{\exhibits}{\textsl{exhibits}\xspace}
\newcommand{\targets}{\textsl{targets}\xspace}
\newcommand{\designTarget}{\textsl{designTarget}\xspace}
\newcommand{\crossReacting}{\textsl{crossReacting}\xspace}
\gdef\@copyrightpermission{
  \begin{minipage}{0.2\columnwidth}
   \href{https://creativecommons.org/licenses/by/4.0/}{\includegraphics[width=0.90\textwidth]{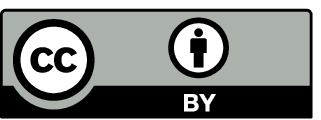}}
  \end{minipage}\hfill
  \begin{minipage}{0.8\columnwidth}
   \href{https://creativecommons.org/licenses/by/4.0/}{This work is licensed under a Creative Commons Attribution International 4.0 License.}
  \end{minipage}
  \vspace{5pt}
}
\begin{document}

\title{Static Analysis of Graph Database Transformations}

\author{Iovka Boneva}
\email{iovka.boneva@univ-lille.fr}
\orcid{0000-0002-2696-7303}
\affiliation{%
  \institution{Univ. Lille, CNRS, UMR 9189 CRIStAL}
  \city{F-59000 Lille}
  \country{France}
}

\author{Benoît Groz}
\email{groz@lri.fr}
\orcid{0000-0001-7292-6409}
\affiliation{%
  \institution{Univ. Paris Saclay, CNRS, UMR 9015 LISN}
  \city{91405 Orsay}
  \country{France}
}

\author{Jan Hidders}
\email{j.hidders@bbk.ac.uk}
\orcid{0000-0002-8865-4329}
\affiliation{%
  \institution{Birkbeck, University of London}
  \city{London}
  \country{United Kingdom}
}

\author{Filip Murlak}
\email{f.murlak@uw.edu.pl}
\orcid{0000-0003-0989-3717}
\affiliation{%
  \institution{University of Warsaw}
  \city{Warsaw}
  \country{Poland}
}

\author{Sławek Staworko}
\email{slawek.staworko@relational.ai}
\orcid{0000-0003-3684-3395}
\affiliation{%
   \institution{RelationalAI}
   \city{Berkeley}
   \country{USA}
}
\affiliation{%
   \institution{Univ. Lille, CNRS, UMR 9189 CRIStAL}
   \city{F-59000 Lille}
   \country{France}  
}

\begin{abstract}
  We investigate graph transformations, defined using Datalog-like rules based
  on acyclic conjunctive two-way regular path queries (acyclic C2RPQs), and we
  study two fundamental static analysis problems: \emph{type checking} and
  \emph{equivalence} of transformations in the presence of graph
  schemas. Additionally, we investigate the problem of \emph{target schema
    elicitation}, which aims to construct a schema that closely captures all
  outputs of a transformation over graphs conforming to the input schema. We
  show all these problems are in \EXPTIME by reducing them to C2RPQ containment
  modulo schema; we also provide matching lower bounds. We use \emph{cycle
    reversing} to reduce query containment to the problem of unrestricted
  (finite or infinite) satisfiability of C2RPQs modulo a theory expressed in a
  description logic. 
\end{abstract}

\begin{CCSXML}
<ccs2012>
<concept>
<concept_id>10003752.10010070.10010111.10011734</concept_id>
<concept_desc>Theory of computation~Logic and databases</concept_desc>
<concept_significance>300</concept_significance>
</concept>
</ccs2012>
\end{CCSXML}

\ccsdesc[300]{Theory of computation~Logic and databases}

\keywords{graph databases, static analysis, schemas, query containment}

\maketitle

\section{Introduction}
\label{sec:introduction}

The growing adoption of graph databases calls for suitable data processing
methods. Query languages for graph databases typically define their semantics as
a set of tuples, which alone is inadequate for scenarios such as (materialized)
graph database views and data migration in the context of schema
evolution~\cite{BFGHOV19}, with the schema describing the expected structure of the graph. A more adequate mechanism is that of a
\emph{transformation}, which takes a graph as input and produces a graph on the
output.
\begin{example}
  \label{ex:medical-database}
  Consider a scenario where the schema of a medical knowledge graph undergoes
  changes due to advances in the understanding of biomolecular processes. The
  purpose of this knowledge graph is to catalog vaccines based on the antigen
  they are designed to target and to identify the pathogens that exhibit the
  antigens, each antigen being exhibited by at least one pathogen. Additionally,
  some pairs of antigens are known to be \emph{cross reacting}: if a vaccine $v$
  targets an antigen $x$ that is cross reacting with an antigen $y$, then $v$
  also targets $y$. Thus, the set of all antigens targeted by a vaccine is
  represented implicitly.

  The schema $S_0$ of the original knowledge graph is presented in
  Figure~\ref{fig:medical-database-schema} as a graph itself.
  \begin{figure}[htb]
  \begin{tikzpicture}[
    >=stealth',
    punkt/.style={circle,minimum size=0.1cm,draw,fill,inner sep=0pt, outer sep=0.125cm},
    dot/.style={}
    ]
  \path[use as bounding box, red!75!black] (-1,0.75) rectangle (6.55,-1.5);
      
  \small
  \begin{scope}[]
    \node at (-1.25,0) {$S_0$:};
  \begin{scope}[xshift=0cm]
    \node[dot] (x1) at (2.35,0.45) {};
    \node[dot] (x2) at (3,0.125) {};
    \node[dot] (x3) at (3.65,0.45) {};
    \node[dot] (x4) at (3.75,1) {};
    \node[dot] (x5) at (2.25,1) {};
    \fill[red!20!white] plot[smooth cycle,tension=0.75] coordinates { (x1)  (x2)  (x3)  (x4) (x5) };
  \end{scope}
    
  \path (0,0) node[punkt] (vaccine) {} node[left] {\Vaccine};
  \path (3,0) node[punkt] (antigen) {} node[below] {\Antigen};
  \path (5.5,0) node[punkt] (pathogen) {} node[right]  {\Pathogen};

  \draw[->] (antigen)
  .. controls +(110:0.85) and +(70:0.85) ..
  node[above] {\crossReacting}
  node[left,pos=0.2] {\footnotesize\MANY}
  node[right,pos=0.8] {\footnotesize\MANY}
  (antigen);

  \draw (pathogen) edge[->]
  node[above=-1pt] {\exhibits}
  node[above,pos=0.15] {\footnotesize\PLUS}
  node[above,pos=0.85] {\footnotesize\MANY}
  (antigen);

  \draw (vaccine) edge[->]
  node[above=-1pt] {\designTarget}
  node[above,pos=0.15] {\footnotesize\MANY}
  node[above,pos=0.85] {\footnotesize\ONE}
  (antigen);

  \begin{scope}[yshift=-1cm]
  \node at (-1.25,0) {$S_1$:};

  \begin{scope}[yshift=-0.4cm]
    \node[dot] (x1) at (1.5,-0.275) {};
    \node[dot] (x2) at (2.5,-0.15) {};
    \node[dot] (x3) at (2.5,0.25) {};
    \node[dot] (x4) at (1.5,0.15) {};
    \node[dot] (x5) at (.5,0.25) {};
    \node[dot] (x6) at (.5,-0.15) {};
    \fill[green!20!white] plot[smooth cycle] coordinates { (x1) (x2) (x3) (x4) (x5) (x6)};
  \end{scope}

  \path (0,0) node[punkt] (vaccine) {} node[left] {\Vaccine};
  \path (3,0) node[punkt] (antigen) {} node[below right=0.0cm and -0.1cm] {\Antigen};
  \path (5.5,0) node[punkt] (pathogen) {} node[right]  {\Pathogen};

  \draw (pathogen) edge[->]
  node[above=-1pt] {\exhibits}
  node[above,pos=0.15] {\footnotesize\PLUS}
  node[above,pos=0.85] {\footnotesize\MANY}
  (antigen);

  \draw (vaccine) edge[->]
  node[above=-1pt] {\designTarget}
  node[above,pos=0.15] {\footnotesize\MANY}
  node[above,pos=0.85] {\footnotesize\ONE}
  (antigen);

  \draw[bend angle=20] (vaccine) edge[->,bend right]
  node[below=-1pt] {\targets}
  node[below,pos=0.15] {\footnotesize\MANY}
  node[below,pos=0.85] {\footnotesize\PLUS}
  (antigen);
  \end{scope}
  \end{scope}

  \end{tikzpicture}
  \caption{Evolving schema of a medical knowledge graph.}
  \label{fig:medical-database-schema}
\end{figure}
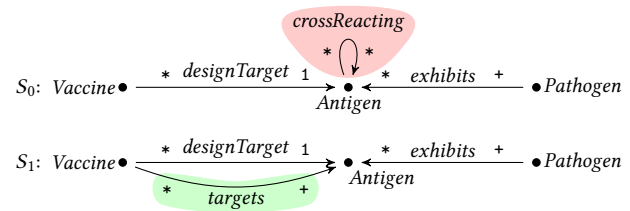
It specifies the allowed node and edge labels, and expresses participation
constraints on edges in a manner that is typical for data modeling languages, 
e.g.,
\tikz[]{
  \path[use as bounding box] (-0.1,-0.1) rectangle (1.1,0.125);
  \node (A) at (0,0) {\small \sl A};
  \node (B) at (1,0) {\small \sl B};
  \draw (0.125,-0.05) edge[-stealth']
  node[pos=0.45,above=-2pt] {\scriptsize \sl r}
  node[pos=0.2,above=-2.25pt]{\scriptsize \MANY}
  node[pos=0.75,above=-2.25pt] {\scriptsize \ONE}(0.9,-0.05);
}
indicates that every $A$-node has one outgoing $r$-edge to a $B$-node but a $B$-node may have arbitrarily many incoming $r$-edges from $A$-nodes.

Now, suppose that new findings refute the rule of cross-re\-ac\-ti\-vity of
antigens. The \textsl{cross-reacting} edges between antigens are no longer
adequate for representing information about the antigens that a vaccine targets,
and so, in the new schema $S_1$, this information is recorded explicitly with
\textsl{targets} edges. Since up to that point, the knowledge graph did not
contain any data points that contradicted the cross-reactivity rule, the logic
of the rule can be used to transform the old knowledge graph to one that
conforms to the new schema. Afterwards \textsl{cross-reacting} edges are
removed. \qed
\end{example}

In the present paper, we study two classical problems of static analysis on
graph transformations: \emph{type checking}, that verifies if for every graph
conforming to the source schema the transformation outputs a graph conforming to
the target schema, and \emph{equivalence}, that verifies if two transformations
produce the same output for every graph conforming to the source
schema. Additionally, when the target schema is not known, we investigate the
problem of \emph{target schema elicitation} that constructs the
containment-minimal target schema that captures the graphs produced by the
transformation.

We study \emph{executable} graph transformations defined with Datalog-like
rules. The rules specify how to construct the output graph from the results of
regular path queries evaluated over the input graph. To allow multiple copies of
the same input node the rules use \emph{node constructors}, essentially
explicit Skolem functions that create nodes. As an example, the
cross-reactivity rule from Example~\ref{ex:medical-database} gives rise to the
following graph transformation rule
\[
  \targets(f_V(x),f_A(y))\leftarrow
  (\designTarget\cdot\crossReacting^*)(x,y)\,,
\]
where $f_V(x)$ and $f_A(y)$ are constructors of \Vaccine and \Antigen nodes
respectively. 
The two constructors can, for instance, have the following  definitions 
$f_V(x)=(\Vaccine,x)$ and
$f_A(y)=(\Antigen,y)$; essentially, they take the identifiers of the original nodes and decorate them with their type.

We investigate transformations that use only \emph{acyclic two-way conjunctive
  regular path queries} (acyclic C2RPQs), which is argu\-ably of practical
relevance in the context of graph transformations. For instance, we have found
no cyclic queries in the transformations implementing graph data migration
between consecutive versions of the FHIR data format~\cite{FHIR,PrSoJi17} (Fast
Healthcare Interoperability Resources is an international standard for
interchange of medical healthcare data). Our constructions
rely on acyclicity of C2RPQs to obtain relatively low computational complexity. We argue that the acyclicity assumption cannot be lifted 
without a significant complexity increase (see Section~\ref{sec:conclusions}).

Node constructors are closely related to object creating
functions~\cite{HuYo90,HuYo91}. Our use of node constructors is inspired by
analogous constructions in transformation languages such as
R2RML~\cite{CMRRS16,Sequeda13,R2RML}, where node IRIs are typically obtained by
concatenation of a URL prefix and the key values of a object represented by the
constructed node. Our node constructors can have an arbitrary arity, thus
allowing for instance to create nodes in the target graph that represent
relationships (edges) between nodes in the source graph. To isolate the concern
of possible overlaps between node constructors, we make the natural assumption
that node constructors are injective, have pair-wise disjoint ranges, and for
every node kind (label) a single dedicated node constructor is used. These
assumptions allow us to remove the need to analyze the definitions of node
constructors, which is out of the scope of the present paper, and they are
consistent with how the analogous constructions are used in languages such as
R2RML and FHIR mapping language.

For schemas, we employ a natural formalism of \emph{graph schemas with
  participation constraints}, inspired by standard data modeling languages such
as Entity-Relationship diagrams~\cite{Chen75}, and already studied, for
instance, in the context of graph database evolution~\cite{BFGHOV19}. Such
schemas allow one to declare the available labels of nodes and edges and to express
participation constraints.  In contrast to more expressive languages as ShEx and
SHACL~\cite{SBLGHPS15,CoReSa18}, our formalism allows a \emph{single label per
  node}, which determines the node type. Thus, roughly speaking, our schema
formalism is to ShEx and SHACL what DTD is to XML Schema.

The key contributions of the present paper are as follows. 
\begin{enumerate}
\item 
We define graph database transformations and we reduce the problems of interest to \emph{containment} of C2RPQs in unions of acyclic C2RPQs \emph{modulo schemas}.
\item 
We reduce the query containment problem to the unrestricted (finite or infinite) satisfiability of a C2RPQ modulo a set of constraints expressed in the Horn fragment of a description logic known as \ALCIF.

The reduction involves an application of the \emph{cycle reversing} technique \cite{CosmadakisKV90,GarciaLS14}, carefully tailored to our needs.
\item 
The unrestricted satisfiability problem for \ALCIF can be solved in \EXPTIME owing to a simple model property~\cite{CalvaneseOS11},  but applying this result directly to the instance obtained via cycle reversing would lead to  doubly exponential complexity due to an exponential blow-up inherent to cycle reversing. 
We provide a new algorithm with \emph{improved complexity bounds}, 
which allows to accommodate the blow-up while keeping the overall complexity in \EXPTIME.

We also reformulate the simplicity of models in terms of a graph-theoretical notion of \emph{$(k,l)$-sparsity}~\cite{LeeS08}, which allows to streamline the reasoning.
\end{enumerate}
These reductions allow to solve all problems of interest in \EXPTIME and we also
establish the matching lower bounds.

The paper is organized as follows. In Section~\ref{sec:related} we discuss
related work. In Section~\ref{sec:preliminaries} we introduce basic notions. In
Section~\ref{sec:transformations} we define graph transformations and the
problems of interest, which we reduce to query containment modulo schema. In
Section~\ref{sec:containment} we reduce the latter to satisfiability of a query
modulo \HornALCIF theory, which we solve in Section~\ref{sec:satisfiability}. In
Section~\ref{sec:conclusions} we summarize our findings and identify directions
of future work. \cameraready{A technical report containing full proofs can be found in \cite{boneva:hal-03937274}.}{Full proofs and some standard definitions have been moved to Appendix.}

\section{Related Work}
\label{sec:related}

\textbf{Graph-based data models} have been proposed in various forms and shapes since the 1980s \cite{AngGut08}.

The proposals in the 1980s and 1990s included labeled graphs \cite{GPVdBvG94}, graphs where certain nodes represent complex values \cite{KupVar93, Hidders03}, graphs where nodes have associated complex values \cite{Abiteboul87, AbiKan89}, and graphs where nodes are associated with nested graphs \cite{LevPou90}. 
More recently the RDF data model \cite{RDF04} and the Property Graph data model \cite{Angles18} have become popular. RDF graphs are similar to labeled graphs except that nodes are unlabeled and participate in at least one edge, and the labels of edges can be nodes and participate in edges. Property Graphs are also similar to labeled graphs except that nodes and edges have multiple labels and properties, and edges have identity. In our work we assume one of the simplest models, namely, labeled graphs where nodes have multiple labels and edges have a single label; our schemas require exactly one label per node. Since we focus here on transformations of the graph structure, we have no explicit notion of value associated with nodes and edges, but there are straightforward ways of adding this, as is done for example in \cite{GPVdBvG94}.

The term \textbf{graph transformations} can refer to different formalisms \cite{Rozenberg1997handbook}:
the purpose of graph grammars is to define graph languages;
algebraic graph transformations are mainly used to mo\-del systems with infinite behavior and are not functional (they produce multiple outputs on single input).
Therefore, not only are these formalisms ill-suited for defining transformations of graph databases, but also the problems studied for them are unrelated to the problems we study here. Monadic second-order (MSO) graph transductions \cite{Courcelle1994} can capture our transformations only when restricted to unary node constructors; moreover, resorting to MSO logic typically incurs a prohibitive complexity overhead. 

\textbf{Transformation languages for graph databases} are often based on Datalog extended with node-creation syntax in the head of the rules. It could be just a variable that is not bound in the body of the rule, like in IQL \cite{AbiKan89} and G-Log \cite{PePeTa95}; this ensures a fresh node is created for each valuation that makes the body true. Another option is to replace the unbound variable with a term consisting of a constructor function (sometimes called a Skolem function) applied to bound variables, like in O-logic \cite{Maier86} and F-logic \cite{KifLau89}; the constructor creates a fresh node when called for the first time for certain arguments, and after that the same node for the same arguments. 
We adopt the idea of node constructors because we believe it provides a powerful and intuitive way to control the identity of new nodes.

A different proposal, based on structural recursion, is offered by UnQL \cite{Buneman2000}, but the underlying data model considers graphs equivalent if they are bisimilar, which makes the expressive power quite different.

Graph transformations can also be expressed using query languages such as SPARQL and Cypher.

Nevertheless, we believe that a rule-based transformation language is more convenient for defining transformations and it can co-exist with an expressive query language. 
For instance, in the XML world, XSLT \cite{xslt} (rule-based) focuses on transformations, while XQuery \cite{xquery} is mostly used for querying XML data.

In the context of data exchange, \textbf{schema mappings} provide a declarative way to define database transformations \cite{Fagin2005,Calvanese2011simplifying,Barcelo2013}.
Our transformations could be simulated by considering canonical solutions for plain SO-tgds \cite{Arenas2013}

extended to allow acyclic C2RPQs in rule bodies.
Note, however, that equivalence is undecidable for plain SO-tgds with keys \cite{Feinerer2015}, and open for plain SO-tgds~\cite{KolaitisPSS20}.

The \textbf{static type checking problem} originates in formal language theory and has been studied for finite state transducers on words and for various kinds of tree transducers, including some designed to capture XML transformation languages \cite{Milo2003,Maneth2005,Martens2007,MaNeGy08}.
Type checking has also been studied for graph transformations.
In \cite{Hidders03} labelled graphs are transformed using addition, deletion, and reduction operations, and type checking is investigated for schemas similar to ours but without participation constraints. 
The typing problem for UnQL is studied in \cite{Inaba2011}, but the approach relies on schemas specifying graphs up to bisimulation, which limits their power to express participation constraints.  
Regarding transformations defined by schema mappings, if the mapping does not define target constraints, then the target schema is simply a relational signature and type checking is reduced to trivial syntactic check, and as such it is irrelevant.
This is most often the case for graph schema mappings \cite{Calvanese2011simplifying,Barcelo2013}, with seldom exceptions such as \cite{Boneva2020} for mapping relational to graph-shaped data.
Their notion of consistency is related to type checking, but is studied for a simpler formalism without path queries.
In the context of XML schema mappings, absolute consistency can be seen as a counterpart of type checking for non-functional transformations \cite{Bojanczyk2013}.

\section{Preliminaries}
\label{sec:preliminaries}
\paragraph{Graphs}
We fix an enumerable set $\N$ of node identifiers, a recursively enumerable set $\Gamma$ of
node labels, and an recursively enumerable set $\Sigma$ of edge labels. We work with labeled
directed graphs, and in general, a node may have multiple labels while an edge
has precisely one label.
We allow, however, multiple edges between the same pair
of nodes, as long as these edges have different labels. We model graphs as
relational structures over unary relation symbols $\Gamma$ and binary relation
symbols $\Sigma$. That is, a graph $G$ is a pair $\big(\dom(G), \cdot^G\big)$
where $\dom(G) \subseteq \N$ is the set of nodes of $G$ and the function
$\cdot^G$ maps each $A\in\Gamma$ to a set $A^G \subseteq\dom(G)$ and each
$r\in\Sigma$ to a binary relation $r^G\subseteq\dom(G)\times\dom(G)$. A graph
$G$ is \emph{finite} if $\dom(G)$ is finite and $A^G$ and $r^G$ are empty for
all but finitely many $A\in\Gamma$ and $r\in\Sigma$. In the sequel, we use
$u,v,\ldots$ to range over node identifiers, $A,B,C,\ldots$ to range over node
labels, and $r,r',\ldots$ to range over edge labels. Also, we use $r^-$ for
inverse edges and let $(r^-)^G = \big\{(u,v) \mid (v,u)\in r^G\big\}$. We let
$\Sigma^\pm=\Sigma\cup\{r^-\mid r\in\Sigma\}$ and use $R,R',\ldots$ to range
over $\Sigma^\pm$.

\paragraph{Schemas}
We consider a class 
of schemas that constrain the number of edges between nodes of given labels and
we express these constraints with the usual symbols: $\MAYBE$ for at most one,
$\ONE$ for precisely one, $\PLUS$ for at least one, $\MANY$ for arbitrary many,
and $\NONE$ for none. We focus on these basic cardinality constraints that are most commonly used in practice; e.g., Chen's  original ER diagrams only used those~\cite{Chen75}. In fact, we were unable to find any non-basic cardinality constraints in the FHIR specifications~\cite{FHIR}, while in the SHACL schemas in Yago 4.0~\cite{YAGO} we found only one: a person may have at most two parents.

Now, a  \emph{schema} is a triple
$S=(\Gamma_S,\Sigma_S,\delta_S)$, where $\Gamma_S\subseteq\Gamma$ is a finite
set of allowed node labels, $\Sigma_S\subseteq\Sigma$ is a finite set of allowed
edge labels, and
$\delta_S:\Gamma_S\times\Sigma_S^\pm\times\Gamma_S\rightarrow\{\MAYBE,\ONE,\PLUS,\MANY,\NONE\}$.
Schemas can be presented as graphs themselves, interpreted as illustrated next. 
\begin{example}

  Take the schema $S_0$ in Figure~\ref{fig:medical-database-schema}
  and consider, for instance, the \designTarget edge. It indicates that every
  \Vaccine has a single design target \Antigen, in symbols
  \[
    \delta_{S_0}(\Vaccine,\designTarget,\Antigen)=\ONE\,,
  \]
  and that every \Antigen may be the design target of an arbitrary
  number of \Vaccine{}s, in symbols
  \[
    \delta_{S_0}(\Antigen,\designTarget^-,\Vaccine)=\MANY\,.
  \]
  Edges that are not present are implicitly forbidden, e.g., no \exhibits
  edge is allowed from \Vaccine to \Pathogen:
  \begin{gather*}
    \delta_{S_0}(\Vaccine,\exhibits,\Pathogen)=\NONE\,,\\
    \delta_{S_0}(\Pathogen,\exhibits^-,\Vaccine)=\NONE\,.
    \tag*{\qed}
  \end{gather*}
\end{example}
\noindent
Now, a graph $G$ \emph{conforms} to a schema $S$ if 1) every node in $G$ has a
single node label in $\Gamma_S$ and every edge has a label in $\Sigma_S$, and 2)
for all $A,B\in\Gamma_S$ and $R\in\Sigma_S^\pm$, for every node with label $A$
the number of its $R$-successors with label $B$ is as specified by
$\delta_S(A,R,B)$. By $L(S)$ we denote the set of all \emph{finite} graphs that
conform to $S$.

\paragraph{Queries} We work with conjunctive two-way regular path queries
(C2RPQs) that have the form
\[
  q(\bar{x}) = \exists \bar{y}.  \varphi_1 (z_1,z_1')\land\ldots\land \varphi_k
  (z_k,z_k')\,,
\]
where $\bar{x}=\{z_1,z_1',\ldots,z_k,z_k'\}\minus\bar{y}$ and for every
$i\in\{1,\ldots,k\}$,  $z_i$ and $z_i'$ are variables and the formula $\varphi_i$ is a regular expression that
follows the grammar
\[
  \varphi \coloncolonequals \varnothing \mid \epsilon \mid A \mid R \mid
  \varphi\cdot\varphi \mid \varphi + \varphi \mid \varphi^*\,,
\]
where $A\in\Gamma$ matches nodes, $R\in\Sigma^\pm$ matches edges, $\epsilon$
matches empty paths, and $\varnothing$ matches no path.
The semantics of C2RPQs is defined in the standard fashion~\cite{CGLV00} and we
denote the set of \emph{answers} to $q(\bar{x})$ in $G$ by $[q(\bar{x})]^G$.

\begin{example}
  \label{ex:queries}
  Recall the schema $S_0$ in Figure~\ref{fig:medical-database-schema}. The
  following query selects vaccines together with the antigens they are designed
  to target or target through cross-reaction.
  \begin{equation*}
    q(x,y) = (
    \Vaccine\cdot
    \designTarget\cdot
    \crossReacting^*\!\cdot
    \Antigen) (x,y). \ \ \square
  \end{equation*}
\end{example}

\noindent

\noindent
Trivial atoms are of the form $\varnothing(x,x)$, $\epsilon(x,x)$, and $A(x,x)$, and in the sequel, we abuse notation and write them as unary atoms: $\varnothing(x)$, $\epsilon(x)$, and $A(x)$, respectively.
The \emph{multigraph} of a C2RPQ $q$ has variables of $q$ as nodes and an edge  from $x$ to $y$ for every non-trivial atom $\varphi(x,y)$. The subclass of
\emph{acyclic} C2RPQs consists of queries whose multigraph is acyclic i.e., it does not have a path consisting of distinct edges that visits the same node twice.
Note that acyclicity for C2RPQs needs to be more restrictive than the classical acyclicity of conjunctive queries based on Gaifman graphs. Indeed, the Gaifman graph of a C2RPQ $\varphi(x,y)\land \psi(x,y)$ is acyclic but its matches may form nontrivial cycles in the input graph.

A \emph{Boolean} C2RPQ $q$ has all its variables existentially quantified, and
it may have only a single answer, the empty tuple, in which case, we say that
$q$ is \emph{satisfied} in $G$ and write $G\models q$.  We also use
\emph{unions} of C2RPQs (abbreviated as UC2RPQs) represented as sets of C2RPQs
$Q(\bar{x})=\{q_1(\bar{x}),\ldots, q_k(\bar{x})\}$ and extend the notions of
answers, satisfaction, and acyclicity to UC2RPQs in the natural fashion. Given
two UC2RPQs $P(\bar{x})$ and $Q(\bar{x})$, and a schema $S$, we say that
\emph{$P(\bar{x})$ is contained in $Q(\bar{x})$ modulo $S$}, in symbols
$P(\bar{x})\subseteq_S Q(\bar{x})$, if $[P(\bar{x})]^G\subseteq [Q(\bar{x})]^G$
for every $G\in L(S)$.
 
\paragraph{Description logics}
We operate on properties of graphs formulated in the
\emph{description logic} \ALCIF (and its fragments)~\cite{DLBook}. In
description logics, elements of $\Gamma$ and $\Sigma$ are called \emph{concept
  names} and \emph{role names}, respectively. \ALCIF allows to build more
complex concepts with the following grammar:
\[
  C \coloncolonequals
  \bot
  \mid
  A
  \mid
  C \sqcap C
  \mid
  \lnot C
  \mid
  \exists R.C \mid
  \exists^{\leq 1} R. C\,, 
\]
where $A\in\Gamma$ and $R\in\Sigma^\pm$.  We also use additional operators that
are redundant but useful when defining fragments; for brevity we introduce them
as syntactic sugar: $\top\colonequals \lnot \bot$,
$C_1\sqcup C_2\colonequals \lnot(\lnot C_1 \sqcap \lnot C_2)$,
$\forall R.C\colonequals \lnot \exists R.\lnot C$,
$\nexists R. C\colonequals \lnot \exists R.C$. We extend the interpretation function
$\cdot^G$ to complex concepts as follows:
\begin{gather*} 
\bot^G = \emptyset\,, \quad
      (C_1\sqcap C_2)^G = C_1^G\cap C_2^G\,, \quad
      (\lnot C)^G = \dom(G) \minus C^G\,, \\
(\exists R. C)^G = \big\{u\in \dom(G) \mid \exists v .\ (u,v)\in R^G \land v \in C^G\big\}\,,\\
(\exists^{\leq 1} R. C)^G = \big\{u\in \dom(G) \mid \exists^{\leq 1}v.\ (u,v)\in R^G \land v \in C^G\big\}\,.
\end{gather*}
\noindent
Statements in description logics have the form of \emph{concept inclusions},
\[
  C \sqsubseteq D
\]
where $C$ and $D$ are concepts. A graph $G$ \emph{satisfies} $C \sqsubseteq D$,
in symbols $G \models C \sqsubseteq D$, if $C^G \subseteq D^G$. A set $\T$ of
concept inclusions is traditionally called a \emph{TBox} and we extend
satisfaction to TBoxes in the canonical fashion: $G \models \T$ if
$G \models C \sqsubseteq D$ for each $C \sqsubseteq D \in \T$.

In the \emph{Horn fragment} of \ALCIF, written \HornALCIF, we only allow concept
inclusions in the following normal forms:
\begin{align*}
  &K \sqsubseteq A\,,&
  &K \sqsubseteq \bot\,,&
  &K \sqsubseteq \forall R. K'\,,\\[-2pt]
  &K \sqsubseteq \exists R. K'\,,&
  &K \sqsubseteq \nexists R. K'\,,& 
  &K \sqsubseteq \exists^{\leq 1} R. K'\,,
\end{align*}
where $A \in \Gamma$, $R\in\Sigma^\pm$, and $K,K'$ are intersections of concept
names (intersection of the empty set of concepts is $\top$).  
If statements of the form $K \sqsubseteq A_1 \sqcup A_2 \sqcup \dots \sqcup A_n$ are allowed too, then we recover the full power of $\ALCIF$ (up to introducing auxiliary concept names).

Participation constraints of schemas can be expressed with simple \HornALCIF statements as
illustrated in following example.
\begin{example}  
  For instance, the assertion in $S_0$
  (Figure~\ref{fig:medical-database-schema}) that \Pathogen manifests at least
  one \Antigen is expressed with the statement
  $\Pathogen \sqsubseteq \exists \exhibits.\Antigen$. The assertion that an
  \textsl{Antigen} may be exhibited by an arbitrary number of \emph{Pathogens}
  needs no \HornALCIF statement. However, statements are needed for implicitly
  forbidden edges, e.g.,
  $\Vaccine \sqsubseteq \nexists \exhibits. \Antigen$. \qed
\end{example}

\section{Graph Transformations}
\label{sec:transformations}
We propose transformations of graphs defined with Datalog-like rules that use
acyclic C2RPQs in their bodies. To allow multiple copies of the same source node
we use node constructors. Formally, a \emph{$k$-ary node constructor} is a
function $f:\N^k\rightarrow\N$ and we denote the set of node constructors by
$\F$. To remove the concern of overlapping node constructors, and the need to
analyze their definitions, we assume that for every node label $A\in\Gamma$ we
have precisely one node constructor $f_A$, all node constructors are injective,
and their ranges are pairwise disjoint.

We introduce two kinds of \emph{graph transformation rules}: node rules and edge
rules. A \emph{node rule} has the form
\[
  A\big(f_A(\bar{x})\big) \leftarrow q(\bar{x})\,,
\]
where $A\in\Gamma$, $f_A\in\F$, and $q$ is an acyclic C2RPQ. An \emph{edge rule}
has the form
\[
  r\big(f(\bar{x}),f'(\bar{y})\big) \leftarrow q(\bar{x},\bar{y})\,,
\]
where $r\in\Sigma$, $f,f'\in\F$, and $q$ is an acyclic C2RPQ. Note that an equality between variables $z=z'$ can be expressed as $\epsilon(z,z')$, and consequently, we can assume that $\bar{x}$ and $\bar{y}$ are disjoint.

Now, a
\emph{graph transformation} $T$ is a finite set of graph transformation rules.
By $\Gamma_T$ and $\Sigma_T$ we denote the finite sets of node and edge labels,
respectively, used in the heads of the rules of $T$.

\begin{example}
  \label{ex:transformation-definition}
  Below we present rules defining the transformation $T_0$ of the medical
  database, described in Example~\ref{ex:medical-database}. We use 3 unary node
  constructors $f_A(x)$ for $\Antigen$ nodes, $f_P(x)$ for $\Pathogen$ nodes,
  and $f_V(x)$ for $\Vaccine$ nodes.
  \begin{align*}
    \Vaccine(f_V(x))
    & \leftarrow (\Vaccine)(x)\,,\\[-2pt]    
    \Antigen(f_A(x))
    & \leftarrow (\Antigen)(x)\,,\\[-2pt]
    \designTarget(f_V(x),f_A(y))
    & \leftarrow (\designTarget)(x,y)\,,\\[-2pt]
    \targets(f_V(x),f_A(y))
    & \leftarrow  (\designTarget\cdot\crossReacting^*)(x,y)\,,\\[-2pt]
    \Pathogen(f_P(x))
    & \leftarrow (\Pathogen)(x)\,,\\[-2pt]
    \exhibits(f_P(x),f_A(y))
    &\leftarrow (\exhibits)(x,y)\,. \tag*{\qed}
  \end{align*}  
\end{example}
\noindent
Now, given a graph $G$ and a graph transformation $T$ the \emph{result of applying
  $T$ to $G$} is a graph $T(G)$ such that (for $A\in\Gamma$ and $r\in\Sigma$)
\begin{gather*}
  A^{T(G)} = \big\{
    f_A(t) \bigm{|} 
    A\big(f_A(\bar{x})\big) \leftarrow q(\bar{x})\in T,\ t\in [q(\bar{x})]^G
    \big\}\,,\\[-2pt]
    r^{T(G)} = \big\{
    \big(f(t),f'(t')\big) \bigm{|}
    r\big(f(\bar{x}),f'(\bar{y})\big) \leftarrow q(\bar{x},\bar{y}) \in T,\\[-2pt]
    \hspace{14em}(t,t')\in [q(\bar{x},\bar{y})]^G\big\}\,.
\end{gather*}
We are interested in the following two classical static analysis tasks.
\begin{list}{}{}
\item[\bf Type checking] Given a transformation $T$, a source schema $S$, and a
  target schema $S'$ check whether for every $G$ that conforms to $S$ the output
  of transformation $T(G)$ conforms to $S'$.
\end{list}
\begin{list}{}{}
\item[\bf Equivalence] Given a source schema $S$ and two transformations $T_1$
  and $T_2$ check whether $T_1$ and $T_2$ agree on every graph that conforms to $S$. 
\end{list}
In settings where the target schema is not known, it might be useful to
construct one. Naturally, we wish to preclude a trivial solution that produces the universal schema that accepts all graphs over a given set of node and edge labels. Instead, we propose to construct a schema that offers the tightest fit to the set of output graphs. To define formally this requirement, we define schema containment in the classical fashion: a schema $S$ is contained in $S'$ if and only if $L(S)\subseteq L(S')$.
\begin{list}{}{}
\item[\bf Schema elicitation] Given a transformation $T$ and a source schema
  $S$, construct the containment-minimal target schema $S'$ such that
  $T(G)\in L(S')$ for every $G\in L(S)$.
\end{list}
We observe that $T(G)$ may have nodes with no label, which may preclude it from
satisfying any schema, and consequently, schema elicitation may also return
error.

We prove the main result by reducing the problems of interest to query
containment modulo schema (and vice versa), which we later show to be
\EXPTIME-complete. Although schema elicitation is not a decision problem, we show \EXPTIME-completeness of deciding if the result of schema elicitation is equivalent to a given schema. Should schema elicitation have lesser complexity, so would have the corresponding decision problem since schema equivalence is easily decided in polynomial time.
\begin{theorem} \label{thm:main}
  Type checking, schema elicitation, and equivalence of graph transformations
  are \EXPTIME-complete.
\end{theorem}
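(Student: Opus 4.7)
My plan is to reduce each of the three problems to polynomially many instances of containment of a C2RPQ in a union of acyclic C2RPQs modulo a schema, and then invoke the \EXPTIME bound for this containment problem (announced in the introduction and established in Sections~\ref{sec:containment} and~\ref{sec:satisfiability}); matching \EXPTIME lower bounds come from reductions in the opposite direction. The key algebraic facts that make the upper-bound reductions clean are that node constructors are injective, have pairwise disjoint ranges, and are associated with a unique label, so every element of $T(G)$ is of the form $f_A(\bar t)$ with a uniquely determined label and argument tuple. Consequently, statements about $T(G)$ translate directly into statements about which combinations of rule bodies are simultaneously satisfied in $G$.

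For type checking $T$ against a target schema $S'$, I would decompose conformance of $T(G)$ to $S'$ into three families of conditions: (a) no orphan nodes---every endpoint $f_A(\bar t)$ produced by an edge rule must also be produced by a node rule for $A$; (b) no upper-bound violation---for each $(A,R,B)$ the number of $R$-successors of an $A$-labeled node with label $B$ is at most that allowed by $\delta_{S'}$; and (c) each non-zero lower bound of $\delta_{S'}$ is realized. Condition (a) becomes a containment of an edge-rule body (projected to the relevant arguments) in the union of the node-rule bodies for $A$. Condition (b) for $\NONE$ becomes unsatisfiability modulo $S$ of a conjunction of bodies producing the label $A$, the edge $R$, and the label $B$---that is, containment in the empty union; for $\MAYBE$ and $\ONE$ it asserts that any two $R$-successors of a common $A$-source with label $B$ coincide, which by injectivity reduces to equality of the corresponding argument tuples and is expressible as containment in a union of acyclic C2RPQs. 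Condition (c) asserts that any match producing an $A$-node extends to a match producing the required $R$-successor with label $B$. All resulting right-hand sides are unions of acyclic C2RPQs built from a bounded combination of acyclic rule bodies.

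Equivalence of $T_1$ and $T_2$ reduces, one head atom at a time, to two symmetric containments between the unions of bodies that produce that head atom in $T_1$ and in $T_2$. For schema elicitation, the candidate target schema uses labels $\Gamma_T$ and $\Sigma_T$; for every triple $(A,R,B)$ the tightest constraint is selected from $\{\NONE,\MAYBE,\ONE,\PLUS,\MANY\}$ by a constant number of containment checks analogous to those in (b) and (c), and the presence of orphan nodes detected by (a) turns the output into ``error''. The decision variant---whether the elicited schema equals a given $S'$---then follows by comparing the results of these checks with the constraints of $S'$.

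For the lower bound, I would encode a containment instance $P\subseteq_S Q$ with $Q$ a union of acyclic C2RPQs as a type checking instance: enrich $S$ with a fresh ``probe'' node label and fresh auxiliary edge labels, design $T$ so that the body $P$ drives the creation of a distinguished edge and each disjunct of $Q$ drives the creation of a parallel edge, and pick $S'$ so that conformance forces every $P$-witnessed edge to be paralleled by a $Q$-witnessed one. Analogous gadgets reduce containment to equivalence and to the decision variant of schema elicitation. The main subtlety in both directions is keeping all right-hand sides unions of \emph{acyclic} C2RPQs so as not to leave the complexity class under consideration; this is immediate in the upper-bound reductions because rule bodies are required to be acyclic, and in the lower-bound reductions because the queries placed on the right are exactly the acyclic disjuncts of the given~$Q$.
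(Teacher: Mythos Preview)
Your proposal is correct and follows essentially the same approach as the paper: you reduce each problem to polynomially many instances of containment of a (possibly cyclic) C2RPQ in an acyclic UC2RPQ modulo the source schema---via the same three families of checks (orphan nodes, upper bounds, lower bounds) corresponding to the paper's entailment conditions $\top\sqsubseteq\bigsqcup\Gamma_T$, $A\sqsubseteq\nexists R.B$ / $A\sqsubseteq\exists^{\leq 1}R.B$, and $A\sqsubseteq\exists R.B$---and then invoke Theorem~\ref{thm:containment}; the lower-bound gadgets you sketch are close variants of those in Lemma~\ref{lemma:static-analysis-hardness}. One small imprecision: the right-hand side for the at-most-one check is $\bigwedge_i\epsilon(y_i,z_i)$ rather than something built from rule bodies, but this is trivially acyclic and does not affect the argument.
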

\noindent
We outline the main ideas of the proof by illustrating how a transformation $T$
can be analyzed with a toolbox of methods based on query containment modulo
source schema $S$. We formulate these methods with an entailment relation:
\[
  (T,S)\models K \sqsubseteq K' \iff \text{$T(G)\models K \sqsubseteq K'$ for every
  $G\in L(S)$.}
\]
W.l.o.g.\ we assume that every rule of transformation $T$ is \emph{trim} i.e., it uses
in its body a query $q(\bar{x})$ that is satisfiable modulo $S$, in symbols
$\exists \bar{x}.q(\bar{x})\not\subseteq_S \varnothing$; otherwise, $T$ can be
trimmed.

First, we group queries from rules of $T$ based on the labels of nodes and edges
they create. For $A,B\in\Gamma_T$ and $r\in\smash{\Sigma_T}$ we define
\begin{align*}
  & Q_{A}(\bar{x})=\big\{q(\bar{x}) \bigm{|}
    A\big(f_A(\bar{x})\big)\leftarrow q(\bar{x})\in T\big\}\,, \\[-2pt]
  & Q_{A,r,B}(\bar{x},\bar{y})=\big\{q(\bar{x},\bar{y}) \bigm{|}
    r\big(f_A(\bar{x}),f_B(\bar{y})\big)\leftarrow q(\bar{x},\bar{y})\in T \big\}\,,\\[-2pt]
  & Q_{A,r^-,B}(\bar{x},\bar{y})=\big\{q(\bar{y},\bar{x}) \bigm{|}
    r\big(f_B(\bar{y}),f_A(\bar{x})\big)\leftarrow q(\bar{y},\bar{x})\in T \big\}\,.
\end{align*}
In essence, $Q_{A}(\bar{x})$ identifies tuples over the input graph that yield a
node constructed with $f_A$ and with label $A$ while
$Q_{A,R,B}(\bar{x},\bar{y})$ identifies tuples that yield $R$-edges from a node
created with $f_A$ to a node created with $f_B$.

\begin{example}
  A couple of examples of above queries for the transformation $T_0$ in Example
  \ref{ex:transformation-definition} follow.
  \begin{align*}
    & Q_\Vaccine(x) = (\Vaccine)(x)\,,\\[-2pt]
    & Q_{\Vaccine,\targets,\Antigen}(x,y) = (\designTarget\cdot\crossReacting^*)(x,y)\,,\\[-2pt]
    & Q_{\Vaccine,\designTarget,\Antigen}(x,y) = (\designTarget) (x,y)\,.
      \tag*{\qed}
  \end{align*}
\end{example}
\noindent
Since an edge rule does not assign labels to nodes it creates, the result of a
transformation may be a graph with nodes without a label. Such a situation
precludes type checking from passing and prevents schema elicitation from
producing meaningful output. Consequently, we first verify that every node in
every output graph has exactly one label, in symbols
$(T,S)\models\top\sqsubseteq\bigsqcup\Gamma_T$, where
$\bigsqcup \{A_1,\ldots, A_k\}$ is a shorthand for $A_1\sqcup\ldots\sqcup
A_k$. We prove the following\cameraready{.}{ (Lemma~\ref{lemma:transformations-consistency}).}
\begin{align*}
  & (T,S)\models\top\sqsubseteq\textstyle\bigsqcup\Gamma_T
  \iff {}\\[-2pt]
    &\hspace*{3em}
    \exists\bar{y}.Q_{A,R,B}(\bar{x}, \bar{y}) \subseteq_S Q_A(\bar{x})
    \quad\text{for all $A,B\in\Gamma_T$ and  $R\in\Sigma_T^\pm$\,.}
\end{align*}
We point out that the restriction of one node constructor per node label ensures
that each node of the output has at most one label.
\begin{example}
  Take $T_0$ from Example~\ref{ex:transformation-definition} and $S_0$ in
  Figure~\ref{fig:medical-database-schema}. Verifying that
  $(T_0,S_0)\models \top\sqsubseteq \bigsqcup \Gamma_{T_0}$ requires a number of
  containment tests including the following two. 
  \begin{align*}
    & \exists y. (\designTarget\cdot\crossReacting^*)(x,y) \subseteq_{S_0}
      (\Vaccine)(x)\,,\\[-2pt]
    & \exists y. (\designTarget)(x,y) \subseteq_{S_0} 
      (\Vaccine)(x)\,. \tag*{\qed}
  \end{align*}
\end{example}

\noindent
Now, to perform type checking against a given target schema $S'$, we verify that
$\Gamma_T\subseteq\Gamma_{S'}$ and $\Sigma_T\subseteq\Sigma_{S'}$. Then, we take
the TBox $\T_{S'}$ of concept inclusions that expresses participation
constraints of the target schema $S'$ and we verify that
$(T,S)\models\T_{S'}$. Type checking succeeds if and only if all the above tests
succeed\cameraready{.}{ (Lemma~\ref{lemma:transformations-type-checking}).}

The TBox $\T_{S'}$ consists of statements from a small fragment $\L_0$ of \HornALCIF
which allows only statements of the forms
\begin{align*}
  &A \sqsubseteq \exists R. B\,, &
  &A \sqsubseteq \nexists R. B\,, &
  &A \sqsubseteq \exists^{\leq 1} R. B\,,
\end{align*}
where $A,B \in \Gamma$ and $R\in\Sigma^\pm$. The entailment of such statements is
also reduced to query containment\cameraready{:}{ (Lemma~\ref{lemma:transformations-entailment}):} 
\begin{align*}
  & (T,S)\models A \sqsubseteq \exists R.B\iff 
    Q_A(\bar{x}) \subseteq_S \exists \bar{y}. Q_{A,R,B}(\bar{x},\bar{y})\,,\\[-2pt]
  &(T,S)\models A \sqsubseteq \nexists R.B \iff 
  \exists \bar{y}.Q_A(\bar{x}) \!\land\! Q_{A,R,B}(\bar{x},\bar{y})
  \subseteq_S \textstyle\bigwedge_i\varnothing(x_i)\,,\\[-2pt]
  &(T,S)\models A \sqsubseteq \exists^{\leq 1} R.B\iff\\[-2pt]
  & \hspace{7.75ex}\exists \bar{x}. Q_A(\bar{x}) \!\land\! Q_{A,R,B}(\bar{x},\bar{y}) \!\land\!
    Q_{A,R,B}(\bar{x},\bar{z}) \subseteq_S
    \textstyle\bigwedge_i \epsilon(y_i,z_i)\,.
\end{align*}
\begin{example}
  Take the transformation $T_0$ and the schemas $S_0$ and $S_1$ in
  Figure~\ref{fig:medical-database-schema}. The schema $S_1$ requires every
  vaccine to target at least one antigen, in symbols
  $\Vaccine\sqsubseteq\exists\targets.\Antigen$. This statement is entailed by
  $T_0$ and $S_0$ if and only if the following holds
  \begin{equation*}
    (\Vaccine) (x) \subseteq_{S_0} \exists y. (\designTarget\cdot\crossReacting^*)(x,y)\,.
    \quad\square
  \end{equation*}
\end{example}

\noindent
For schema elicitation, we use a close correspondence between schemas and $\L_0$
TBoxes. It is sufficient to construct the TBox $\T$ containing all $\L_0$
statements that are entailed by $T$ and $S$; $\T$ corresponds to the
containment-minimal target schema\cameraready{.}{ (Lemma~\ref{lemma:transformations-schema-elicitation}).}

Finally, the equivalence of two transformations $T_1$ and $T_2$ is essentially
the equivalence (modulo $S$) of the respective queries $Q_A$ and $Q_{A,R,B}$ of
both transformations\cameraready{.}{ (Lemma~\ref{lemma:transforamtions-equivalence}).} Naturally,
query equivalence is reduced to query containment, as usual.

We have shown that type checking, schema elicitation, and equivalence of graph
transformations are Turing-reducible in polynomial time to testing containment
of UC2RPQs in acyclic UC2RPQs modulo schema. We also show polynomial-time
reductions of containment of 2RPQs modulo schema to all above problems of
interest\cameraready{.}{ (Lemma~\ref{lemma:static-analysis-hardness}).}  With that,
Theorem~\ref{thm:main} follows from Theorem~\ref{thm:containment}.

\section{Query Containment modulo Schema}
\label{sec:containment}

The aim of this section is to show the following result. 

\begin{theorem}
\label{thm:containment}
Containment of UC2RPQs in acyclic UC2RPQs modulo schema is \EXPTIME-complete. 
\end{theorem}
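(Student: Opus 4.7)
The plan is to prove Theorem~\ref{thm:containment} by separately establishing the \EXPTIME upper and lower bounds, with the upper bound being the main technical work.

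For the \textbf{upper bound}, I would follow the roadmap already sketched in the excerpt: reduce containment modulo schema to satisfiability of a C2RPQ modulo a \HornALCIF theory. The first step is to encode the schema $S$ as a TBox $\T_S$ of \HornALCIF concept inclusions, exploiting the small fragment $\L_0$ described in Section~\ref{sec:transformations}. Non-containment $P \not\subseteq_S Q$ then amounts to the existence of a graph $G \models \T_S$ and a tuple $\bar t \in [P]^G$ that is not matched by any disjunct of $Q$. The key idea is that, because each disjunct $q_i$ of $Q$ is \emph{acyclic}, the statement ``$\bar t$ is not a match of $q_i$'' can be encoded via fresh auxiliary concept names that are propagated along the (tree-like) structure of $q_i$ using \HornALCIF statements of the form $K \sqsubseteq \forall R. K'$ and $K \sqsubseteq \bot$. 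Acyclicity is crucial here: it ensures that the propagation is polynomial and avoids introducing disjunction. Combining these constraints for all disjuncts of $Q$ with $\T_S$ and with the Boolean query $\exists \bar x.\ P(\bar x) \wedge \text{(roots marked)}$ yields an instance of satisfiability of a C2RPQ modulo \HornALCIF.

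The second step is to apply \emph{cycle reversing} (Section~\ref{sec:containment}) to massage the remaining C2RPQ $P$ into a form amenable to the \HornALCIF satisfiability machinery of Section~\ref{sec:satisfiability}. Cycle reversing is known to cause an exponential blow-up, so a naive composition with the $\EXPTIME$ decision procedure for \HornALCIF satisfiability would land us in $\TWOEXPTIME$. The authors explicitly flag this as the bottleneck, and I would rely on their refined algorithm that exploits $(k,l)$-sparsity of countermodels to absorb the blow-up and keep the total running time singly exponential. This is the \textbf{main obstacle}: one must verify that the improved algorithm indeed accommodates the blown-up input, i.e.\ that the sparsity bound depends only polynomially on the auxiliary concepts introduced for the acyclic disjuncts, and only exponentially (rather than doubly so) on the cycle-reversed query.

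For the \textbf{lower bound}, \EXPTIME-hardness cannot come from plain C2RPQ containment alone (the containment target here is restricted to be acyclic, and the input side is a UC2RPQ), so I would reduce from concept satisfiability in \HornALCIF (or equivalently from a canonical $\EXPTIME$-complete problem such as alternating PSPACE reachability encoded via role hierarchies and functionality). The reduction encodes the TBox inclusions as participation constraints in a schema $S$ (this is already half-done via the correspondence between $\L_0$ and basic cardinality schemas; richer inclusions are simulated by additional 2RPQ edges in the left-hand query). A concept $C_0$ is unsatisfiable modulo $\T$ iff a suitably chosen 2RPQ is contained modulo $S$ in the empty UC2RPQ (trivially acyclic). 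Because the reduction from the simpler problem of 2RPQ containment modulo schema already yields \EXPTIME-hardness, this also justifies the hardness claims alluded to for the static-analysis tasks in Section~\ref{sec:transformations}.
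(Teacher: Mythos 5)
Your upper bound follows the paper's route (roll up the acyclic $Q$ into a \HornALCIF TBox $\T_{\lnot Q}$, combine with the schema TBox, pass from finite to unrestricted satisfiability by cycle reversing, then invoke the refined satisfiability algorithm of Section~\ref{sec:satisfiability}), but it leaves the decisive step unproved. Two remarks. First, cycle reversing is applied to the TBox $\widehat\T_S\cup\T_{\lnot Q}$, not to the query $P$; its role is to capture, in unrestricted models, the extra properties that finite models conforming to $S$ enjoy. Second, and more importantly, you correctly flag that the exponential blow-up of the completion is the obstacle, but you only state that one ``must verify'' the refined algorithm absorbs it, without an argument. In the paper this is exactly where the real work lies: one shows that $\widehat\T_S\cup\T_{\lnot Q}$ is \emph{$S$-driven}, that (a weak form of) $S$-drivenness is preserved when reversing finmod cycles, and that any $S$-driven TBox can be simplified to have only polynomially many at-most constraints while introducing no new concept names. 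Only with this invariant does (i) the completion itself become computable in \EXPTIME (each step needs entailment tests over a TBox that may already be exponentially large, so the refined bound $O(\mathrm{poly}(|\T|)\cdot 2^{\mathrm{poly}(k,\ell)})$ with $k$ concept names and $\ell$ at-most constraints is what saves you), and (ii) the final satisfiability call stay singly exponential. Without supplying this invariant, your upper-bound argument stops precisely at the point where \TWOEXPTIME would otherwise be the honest conclusion.

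Your lower bound diverges from the paper's (which gives a direct reduction from acceptance of polynomial-space alternating Turing machines, producing a schema and two Boolean 2RPQs, so hardness already holds for 2RPQ vs.\ 2RPQ containment), and as sketched it does not work. Reducing \HornALCIF (or \ALCI) concept satisfiability to containment in the \emph{empty} UC2RPQ amounts to reducing it to satisfiability of a single existential query modulo a schema, and two things break: the schema formalism can only express $\L_0$-style constraints with single labels on both sides, so general TBox axioms such as $K \sqsubseteq \forall R.K'$ or inclusions with conjunctive left-hand sides cannot be pushed into $S$; and they cannot be ``simulated by additional 2RPQ edges in the left-hand query'' either, because $P$ is purely existential and asserts the existence of one match, whereas TBox axioms are universal constraints on the whole countermodel. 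The paper's construction enforces such global conditions only through the interplay of a traversal-style positive query \emph{and} a nonempty negative query that detects violations (no two incoming edges, no two symbols per cell, etc.); with the right-hand side empty you lose exactly this mechanism. There is also a finite/unrestricted mismatch: containment modulo schema is over finite graphs, while the standard \EXPTIME-hardness of DL satisfiability with inverses and functionality is stated for unrestricted models, so a further argument would be needed even if the encoding issues were fixed. A correct proof should either reproduce the ATM reduction (schema for the configuration-tree skeleton, negative 2RPQ for structural violations, positive 2RPQ traversing the tree and checking transitions) or carefully adapt an existing hardness proof for 2RPQ containment modulo simple TBoxes, as the paper notes.
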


The lower bound can be derived from the \EXPTIME-hardness of unrestricted containment of 2RPQs (using only edge labels) modulo very simple TBoxes. 
The latter is obtained by reduction from another reasoning task (satisfiability of $\ALCI$ TBoxes) and relies on the inner workings of its hardness proof. 
For completeness, we provide a direct reduction from the acceptance problem for polynomial-space alternating Turing machines\cameraready{.}{ (Theorem~\ref{theorem:query-containment-hardness}).}
The remainder of this section
is devoted to the upper bound. We show it by reduction to unrestricted (finite
or infinite) satisfiability of C2RPQs modulo a $\HornALCIF$ TBox, which we discuss
in Section~\ref{sec:satisfiability}. The principal technique applied in the reduction is \emph{cycle reversing} \cite{CosmadakisKV90}.

Let $S$ be a schema, $P$ a UC2RPQ, and $Q$ an acyclic UC2RPQ. Without loss of generality we may assume that $P$ and $Q$ are Boolean\cameraready{.}{ (see Lemma~\ref{lem:boolean-containment}).} The key idea is to pass from finite to possibly infinite graphs, thus making canonical witnesses for non-containment easier to find. However, as Example~\ref{ex:no-finite-controllability} shows, we cannot pass freely from finite to possibly infinite graphs, as this may affect the answer.

\begin{example} \label{ex:no-finite-controllability}
  Consider the schema $S$ in Figure~\ref{fig:query-containment}.
  \begin{figure}
  \begin{tikzpicture}[
    >=stealth',
    empty/.style={minimum size=0pt,inner sep=0pt, outer sep=0pt},
    punkt/.style={circle,minimum size=0.125cm,draw=solarized-red,fill=solarized-red,inner sep=0pt, outer sep=0.125cm},
    kwadrat/.style={rectangle,minimum size=0.125cm,draw=solarized-blue,fill=solarized-blue,inner sep=0pt, outer sep=0.125cm},
    punkcik/.style={circle,minimum size=0.1cm,draw=solarized-red,fill=solarized-red,inner sep=0pt, outer sep=0.075cm},
    kwadracik/.style={rectangle,minimum size=0.1cm,draw=solarized-blue,fill=solarized-blue,inner sep=0pt, outer sep=0.05cm},
    kropka/.style={rectangle,minimum size=0.0125cm,draw,fill,inner sep=0pt, outer sep=0.025cm},
    ]
    
    \path[use as bounding box] (-1.1,0.75) rectangle (7.25,-1.4);

  \begin{scope}
  \node[left] at (-0.6,0.4) {$S$:};
  \node[punkt] (A) at (0,0)   (A) {}; \node[above=1pt] {\sl A};
  \draw (A) edge[->,loop left] node[left] {\sl s} node[pos=0.2,below] {\MAYBE} node[pos=0.8,above] {\PLUS} (A);
  \draw (A) edge[->,loop right] node[right] {\sl r} node[pos=0.2,above] {\MANY} node[pos=0.8,below] {\MANY} (A);
  \end{scope}

  \begin{scope}[yshift=-1.25cm]
  \node[left] at (-0.6,0.4) {$S^*$:};
  \node[kwadrat] (A) at (0,0)   (A) {}; \node[above=1pt] {\sl A};
  \draw (A) edge[->,loop left] node[left] {\sl s} node[pos=0.2,below] {\ONE} node[pos=0.8,above] {\ONE} (A);
  \draw (A) edge[->,loop right] node[right] {\sl r} node[pos=0.2,above] {\MANY} node[pos=0.8,below] {\MANY} (A);
  \end{scope}

  \begin{scope}[xshift=2.25cm]
  \node[left] at (-0.5,0.4) {$G_0$:};
  
  \node[punkcik] (x0) at (90:0.5) {};
  \node[punkcik] (x1) at (162:0.5) {};
  \node[punkcik] (x2) at (234:0.5) {};
  \node[punkcik] (x3) at (306:0.5) {};
  \node[punkcik] (x4) at (378:0.5) {};
  
  \begin{scope}[bend angle=20]
  \draw (x0) edge[->,bend right] node[above,sloped,pos=0.4] {\sl s} (x1);
  \draw (x1) edge[->,bend right] node[below,sloped,pos=0.35] {\sl s} (x2);
  \draw (x2) edge[->,bend right] node[below,sloped,pos=0.3] {\sl s} (x3);
  \draw (x3) edge[->,bend right] node[below,sloped,pos=0.35] {\sl s} (x4);
  \draw (x4) edge[->,bend right] node[above,sloped,pos=0.4] {\sl s} (x0);
  \end{scope}

  \draw (x4) edge[->,loop right] node {\sl r} (x4);  
  \end{scope}

  \begin{scope}[xshift=4.7cm]
  \node[left] at (-.25,0.4) {$G_\infty$:};
  
  \node[punkcik] (x0) at (0,0) {};
  \node[punkcik] (x1) at (0.75,0) {};
  \node[punkcik] (x2) at (1.5,0) {};
  \node (x3) at (2.25,0) {\rlap{\ldots}};
  \draw (x0) edge[->] node[above] {\sl s} (x1);
  \draw (x1) edge[->] node[above] {\sl s} (x2);
  \draw (x2) edge[->] node[above] {\sl s} (x3);
  \draw (x0) edge[->,loop left] node {\sl r} (x0);  

  \node[punkcik] (y1) at (0.5,0.5) {};
  \node[punkcik] (y2) at (1.25,0.5) {};
  \node (y3) at (2,0.5) {\rlap{\ldots}};
  \draw[bend angle=15] (x0) edge[->,bend left] node[above,sloped,pos=0.4] {\sl s} (y1);
  \draw (y1) edge[->] node[above] {\sl s} (y2);
  \draw (y2) edge[->] node[above] {\sl s} (y3);

  \node[punkcik] (z2) at (1.25,-0.5) {};
  \node (z3) at (2,-0.5) {\rlap{\ldots}};
  \draw[bend angle=15] (x1) edge[->,bend right] node[below,sloped,pos=0.4] {\sl s} (z2);
  \draw (z2) edge[->] node[above] {\sl s} (z3);
  \end{scope}

  \begin{scope}[xshift=4.7cm,yshift=-1.25cm]
  \node[left] at (-0.5,0.5) {$G_\infty^*$:};
  
  \node[] (y3) at (-2.25,0) {\llap{\ldots}};
  \node[kwadracik] (y2) at (-1.5,0) {};
  \node[kwadracik] (y1) at (-0.75,0) {};
  \node[kwadracik] (x0) at (0,0) {};
  \node[kwadracik] (x1) at (0.75,0) {};
  \node[kwadracik] (x2) at (1.5,0) {};
  \node (x3) at (2.25,0) {\rlap{\ldots}};
  \draw (x0) edge[->] node[above] {\sl s} (x1);
  \draw (x1) edge[->] node[above] {\sl s} (x2);
  \draw (x2) edge[->] node[above] {\sl s} (x3);
  \draw (x0) edge[->,loop above] node {\sl r} (x0);  
  \draw (y3) edge[->] node[above] {\sl s} (y2);
  \draw (y2) edge[->] node[above] {\sl s} (y1);
  \draw (y1) edge[->] node[above] {\sl s} (x0);
  \end{scope}
\end{tikzpicture}
    \caption{\label{fig:query-containment}Query containment over finite and infinite graphs.}
  \end{figure}
Observe that $S$ allows infinite graphs that are essentially infinite trees when
restricted to $s$-edges, e.g. $G_\infty$ in Figure~\ref{fig:query-containment}. In
fact, every infinite graph satisfying $S$ that is connected when restricted to
$s$-edges is an infinite tree. On the other hand, every non-empty finite graph that conforms
to $S$ is a collection of disjoint cycles when restricted to
$s$-edges, e.g., $G_0$ in Figure~\ref{fig:query-containment}. Clearly, the topology
of finite and infinite graphs defined by the schema differs drastically.

Now, take the queries $P=\exists x. r(x,x)$,
$Q = \exists x,y. (r \cdot s^+ \cdot r)(x,y)$, and observe that
$P \subseteq_S Q$. However, the containment does not hold over infinite graphs:
$P$ is satisfied by $G_\infty$ while $Q$ is not. \qed
\end{example}

The reason why we cannot pass directly to infinite models is that finite graphs conforming to schema $S$ may display certain additional common properties, detectable by queries, but not shared by infinite graphs conforming to $S$. 
The cycle \emph{reversing technique} \cite{CosmadakisKV90} captures these properties in $S^*$ such that 
\[P \subseteq_S Q \quad \iff \quad  P \subseteq^\infty_{S^*} Q \]
where by $\subseteq^\infty_{S^*}$ we mean  containment over possibly infinite graphs conforming to $S^*$. However, as the following example shows, we cannot obtain $S^*$ by analysing $S$ alone.

\begin{example} 
\label{ex:schema-completion}
In Example~\ref{ex:no-finite-controllability} we saw that in a finite graph
conforming to $S$, each node has exactly one incoming and one outgoing
$s$-edge. We can use this observation to tighten the original schema $S$ to the
schema $S^*$ (Figure~\ref{fig:query-containment}). Alas, we still have
$P\not\subseteq_{S^*}^\infty Q$ because there is an infinite graph $G_\infty^*$
that satisfies $P$ but not $Q$. \qed
\end{example}

Instead, we first reduce containment modulo schema to finite satisfiability, fusing the schema $S$ and the query $Q$ into a single \HornALCIF TBox, and then pass from finite to unrestricted satisfiability by applying cycle reversing to the resulting TBox.
We follow closely the approach of \citeauthor{GarciaLS14}~\cite{GarciaLS14}, relying crucially on some of their results. 

Let $\T$ be a \HornALCIF TBox. A \emph{finmod cycle} is a sequence \[K_1, R_1, K_2, R_2, \dots, K_{n-1}, R_{n-1}, K_n\] where $R_1, \dots, R_{n-1} \in \Sigma^\pm$ and $K_1, \dots , K_n$ are conjunctions of concept names such that $K_n = K_1$ and
\[\T \models K_i \sqsubseteq \exists R_i.K_{i+1} \quad\text{and}\quad \T \models K_{i+1} \sqsubseteq \exists^{\leq 1} R_i^-.K_i\]
for $1 \leq i < n$.
By \emph{reversing} the finmod cycle  we mean extending $\T$ with concept inclusions
\[K_{i+1} \sqsubseteq \exists R^-_i.K_i \quad\text{and}\quad K_i \sqsubseteq \exists^{\leq 1}  R_i.K_{i+1}\] for $1 \leq i < n$.
The \emph{completion} $\T^*$ of a TBox $\T$ is obtained from $\T$ by exhaustively reversing finmod cycles. The following key result is stated in \cite{GarciaLS14} in terms of sets of ground facts (so-called ABoxes) rather than subgraphs, but our formulation is equivalent. 

\begin{theorem}[\citeauthor{GarciaLS14}, \citeyear{GarciaLS14}]
\label{thm:completion}
A \HornALCIF TBox $\T$ has a finite model containing a finite subgraph $H$  iff  its completion $\T^*$ has a possibly infinite model containing $H$.
\end{theorem}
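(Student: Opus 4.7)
I plan to prove the two directions of the biconditional separately. The forward direction, passing from a finite model of $\T$ to a possibly infinite model of $\T^*$, reduces to a clean cardinality argument; the backward direction requires the main construction.

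For the forward direction, let $\mathcal{M}$ be a finite model of $\T$ containing $H$. I claim that $\mathcal{M}$ itself is a model of $\T^*$, so it suffices to verify every inclusion added when reversing a finmod cycle $K_1, R_1, K_2, \ldots, R_{n-1}, K_n$ with $K_n = K_1$. For each $i$, consider the bipartite relation of $R_i$-pairs with left endpoints in $K_i^\mathcal{M}$ and right endpoints in $K_{i+1}^\mathcal{M}$. The axiom $\T \models K_i \sqsubseteq \exists R_i . K_{i+1}$ forces every left vertex to have degree at least one, so there are at least $|K_i^\mathcal{M}|$ such pairs; the axiom $\T \models K_{i+1} \sqsubseteq \exists^{\leq 1} R_i^-.K_i$ forces every right vertex to have degree at most one, so there are at most $|K_{i+1}^\mathcal{M}|$ pairs. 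Chaining around the cycle yields $|K_1^\mathcal{M}| \leq |K_2^\mathcal{M}| \leq \cdots \leq |K_n^\mathcal{M}| = |K_1^\mathcal{M}|$, and since $\mathcal{M}$ is finite these are all equalities, forcing each bipartite relation to be a perfect matching. The reversed inclusions $K_{i+1} \sqsubseteq \exists R_i^-.K_i$ and $K_i \sqsubseteq \exists^{\leq 1} R_i . K_{i+1}$ are then immediate, so $\mathcal{M}$ models $\T^*$.

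For the backward direction, suppose $\mathcal{N}$ is a possibly infinite model of $\T^*$ containing $H$, and I want to extract a finite model of $\T$ that still contains $H$. The plan is a bottom-up saturation starting from $H$ inside $\mathcal{N}$: repeatedly process each pending requirement $K \sqsubseteq \exists R . K'$ of $\T$ at an element $u$ by first looking for an already-present $R$-successor in the current fragment, and only then importing a fresh witness from $\mathcal{N}$. The reversed inclusions introduced when forming $\T^*$ ensure that along any finmod cycle, existentials come paired with functional reverse arrows, so witnesses can be reused along cycles without violating any $\exists^{\leq 1}$ axiom of $\T$.

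The main obstacle is proving termination and soundness of this saturation. One must order the choices of witnesses so that no two identified elements end up carrying conflicting functionality constraints, and one must argue that the set of realized types (conjunctions of concept names from $\T^*$) stabilizes after finitely many steps. The point of cycle reversal is precisely to make this possible: every chain of existentials that might otherwise escape to infinity is turned into a two-way chain whose far end can safely be glued to an earlier element. Rather than rebuild this argument in full, I would follow the type-based saturation construction of \citeauthor{GarciaLS14}, which is tailored to exactly this situation and delivers the desired equivalence.
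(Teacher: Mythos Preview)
The paper does not prove this theorem at all: it is imported directly from \cite{GarciaLS14} (the text says ``The following key result is stated in \cite{GarciaLS14} in terms of sets of ground facts \ldots''), so there is no in-paper proof to compare against. Your forward direction is the standard cardinality argument and is correct; the only point you leave implicit is that the completion $\T^*$ is built by \emph{exhaustively} reversing finmod cycles, so after one round of reversals new finmod cycles may appear in the extended TBox, and you need a one-line induction (``$\mathcal{M}$ models the current TBox, hence the cardinality argument applies to its finmod cycles, hence $\mathcal{M}$ models the next TBox'') to reach $\T^*$. For the backward direction you explicitly defer to the saturation construction of \citeauthor{GarciaLS14}, which is exactly what the present paper does by citing that result, so your proposal and the paper are aligned there.
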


\begin{example}
  Schema $S$ from Example~\ref{ex:no-finite-controllability} is equivalent to
  TBox $\T_S$ that consists of
  \[
    \top \sqsubseteq A\,,\quad
    A \sqsubseteq \exists s.A\,,\quad
    A \sqsubseteq \exists^{\leq 1} s^-.A\,.
  \]
  Non-satisfaction of $Q$ is captured by TBox $\T_{\lnot Q}$ that consists of 
  \[
    \top \sqsubseteq \forall r. B_r\,,\quad
    B_r \sqsubseteq \forall s. B_{r\cdot s^+}\,,\quad
    B_{r\cdot s^+} \sqsubseteq \forall s. B_{r\cdot s^+}\,,\quad
    \hfil B_{r\cdot s^+} \sqsubseteq \forall r. \bot\,.
  \]
  Let $\T=\T_S\cup\T_{\lnot Q}$ and observe that
  $A \sqcap B_{r\cdot s^+} \,,\, s \,,\, A \sqcap B_{r\cdot s^+}$ is a finmod
  cycle in $\T$. By reversing it, we obtain
  \[
    A \sqcap B_{r\cdot s^+} \sqsubseteq \exists s^-. A \sqcap B_{r\cdot s^+}
    \quad \text{and} \quad
    A \sqcap B_{r\cdot s^+} \sqsubseteq \exists^{\leq 1} s. A \sqcap B_{r\cdot s^+}\,.
  \]
  Now, suppose that there exists a (finite or infinite) model $G$ of $\T^*$ that
  satisfies $P$ (see Figure~\ref{fig:cycle-reversal}).
  \begin{figure}
  \begin{tikzpicture}[
    >=stealth',
    empty/.style={minimum size=0pt,inner sep=0pt, outer sep=0pt},
    punkt/.style={circle,minimum size=0.125cm,draw=solarized-red,fill=solarized-red,inner sep=0pt, outer sep=0.125cm},
    kwadrat/.style={rectangle,minimum size=0.125cm,draw=solarized-blue,fill=solarized-blue,inner sep=0pt, outer sep=0.125cm},
    punkcik/.style={circle,minimum size=0.1cm,draw=solarized-red,fill=solarized-red,inner sep=0pt, outer sep=0.075cm},
    kwadracik/.style={rectangle,minimum size=0.1cm,draw=solarized-blue,fill=solarized-blue,inner sep=0pt, outer sep=0.05cm},
    kropka/.style={rectangle,minimum size=0.0125cm,draw,fill,inner sep=0pt, outer sep=0.025cm},
    ]

    \path[use as bounding box] (-1.1,0.35) rectangle (6.5,-0.35);

  \begin{scope}
  \node[left] at (-0.85,0) {$S$:};
  \node[punkt] (A) at (0,0)   (A) {}; \node[above=1pt] {\sl A};
  \draw (A) edge[->,loop left] node[left] {\sl s} node[pos=0.2,below] {\MAYBE} node[pos=0.8,above] {\PLUS} (A);
  \draw (A) edge[->,loop right] node[right] {\sl r} node[pos=0.2,above] {\MANY} node[pos=0.8,below] {\MANY} (A);
  \end{scope}
  \begin{scope}[xshift=2.75cm]
  \node[left] at (-0.85,0) {$G$:};
  
  \node[punkcik] (x0) at (0,0) {};
  \node[punkcik] (x1) at (1.25,0) {};
  \node[punkcik] (x2) at (2.5,0) {};
  \node (x3) at (3.5,0) {\rlap{\ldots}};

  \node[above=-0.1cm of x0] {\small $u$};
  \node[above=-0.1cm of x1] {\small $u\mathrlap{'}$};
  \node[above=-0.1cm of x2] {\small $u\mathrlap{''}$};

  \node[below=-0.01cm of x0] {\small $A\sqcap B\mathrlap{{}_{r}}$};
  \node[below=-0.01cm of x1] {\small $A\sqcap B\mathrlap{{}_{r\cdot s^+}}$};
  \node[below=-0.01cm of x2] {\small $A\sqcap B\mathrlap{{}_{r\cdot s^+}}$};
  
  \draw (x0) edge[->] node[above] {\sl s} (x1);
  \draw (x1) edge[<-] node[above] {\sl s} (x2);
  \draw (x2) edge[<-] node[above] {\sl s} (x3);
  \draw (x0) edge[->,loop left] node {\sl r} (x0);

  \draw[-,double,very thin] (x0) .. controls +(45:0.85) and +(135:0.85) .. (x2);

  \end{scope}
\end{tikzpicture}

    \caption{\label{fig:cycle-reversal}Cycle reversal argument.}
  \end{figure}
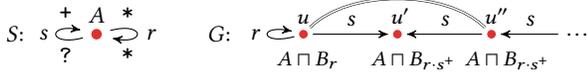
  $G$ must have a node $u$ with $(u,u)\in r^G$. It follows already from $\T$
  that $u\in (A\sqcap B_r)^G$ and that $u$ has an $s$-successor
  $u'\in (A\sqcap B_{r\cdot s^+})^G$. The statement 
  $A \sqcap B_{r\cdot s^+} \sqsubseteq \exists s^-. A \sqcap B_{r\cdot s^+}$ in
  $\T^*$ implies that $u'$ has an $s^-$-successor
  $u''\in (A\sqcap B_{r\cdot s^+})^G$. As each node has at most one incoming
  $s$-edge, $u = u''$ and $u\in (B_{r\cdot s^+})^G$. But $u$ has an outgoing
  $r$-edge, which contradicts the last concept inclusion in $\T_{\lnot Q}$.
  Thus, $P$ is not satisfied in $\T^*$. \qed
\end{example}

We are now ready to reduce containment modulo schema to unrestricted satisfiability modulo \HornALCIF TBox. Note that the guarantees on the resulting TBox in the statement below are sufficient to conclude Theorem~\ref{thm:containment} using Theorem~\ref{thm:satisfiability}.

\begin{theorem}
\label{thm:reduction}
Given a UC2RPQ $P$, an acyclic UC2RPQ $Q$, and a schema $S$, one can compute in \EXPTIME a UC2RPQ $\widehat P$ of polynomial size and a \HornALCIF TBox $\T$ using linearly many additional concept names  and polynomially many at-most constraints, 
such that $P\subseteq_S Q$ if and only if $\widehat P$ is (unrestrictedly) unsatisfiable modulo $\T$. 
\end{theorem}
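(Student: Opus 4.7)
The plan is to reduce $P \subseteq_S Q$ in two phases. First, bundle the schema $S$ and the non-satisfaction of $Q$ into a single \HornALCIF TBox $\T_0$ so that $P \subseteq_S Q$ iff $P$ is finitely unsatisfiable modulo $\T_0$. Second, apply the cycle-reversing completion of Theorem~\ref{thm:completion} to pass from finite to unrestricted models, producing $\widehat P$ and $\T$.

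For the first phase, translating $S$ into $\T_S$ is routine: each participation constraint becomes a concept inclusion of one of the allowed \HornALCIF shapes ($\ONE$ and $\PLUS$ yielding existentials, $\MAYBE$ and $\ONE$ yielding at-most constraints, $\NONE$ yielding $\nexists$-inclusions), label disjointness is captured by $A \sqcap B \sqsubseteq \bot$, and no new concept names are needed. The encoding of $\neg Q$ is the technical core of this phase, but it is manageable because of the acyclicity assumption: the multigraph of each disjunct $q \in Q$ is a forest, so one picks a root of each tree and recursively defines concept names $N_q$ meaning ``no match of $q$ rooted at this node,'' using \HornALCIF inclusions of the form $N_q \sqsubseteq \forall R. N_{q'}$. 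The regex atoms are handled by NFA-state-tracking concepts propagated contravariantly through $\forall$, so that universal propagation captures ``no accepting regex path to a node where the subtree match would succeed''; unary label atoms can be absorbed into the surrounding regex atoms. Adding $\top \sqsubseteq N_q$ for every disjunct forbids all matches of $Q$, and the construction introduces only linearly many fresh concept names, polynomially many concept inclusions, and---crucially---no new at-most constraints. Setting $\T_0 = \T_S \cup \T_{\neg Q}$, we obtain $P \subseteq_S Q$ iff $P$ has no finite model of $\T_0$.

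For the second phase I would view $P$ as the disjunction over its (infinitely many) finite canonical witness graphs $H$: $P$ is finitely satisfiable modulo $\T_0$ iff some canonical witness $H$ of some disjunct of $P$ embeds into a finite model of $\T_0$, which by Theorem~\ref{thm:completion} is iff some $H$ embeds into a possibly infinite model of $\T_0^*$, i.e., iff $P$ itself is satisfiable modulo $\T_0^*$ over possibly infinite graphs. Taking $\widehat P = P$ (up to routine syntactic normalization) and $\T = \T_0^*$ thus gives the required equivalence, with $\widehat P$ of polynomial size.

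The main obstacle, and the reason an \EXPTIME budget on computing $\T$ is needed, is bounding the at-most constraints and fresh concept names of $\T_0^*$, rather than its overall size. Cycle reversing could in principle add exponentially many at-most constraints, one per finmod cycle, and since each $K_i$ is a conjunction over a linear number of concept names there are exponentially many candidate cycles in play. I expect the decisive argument to be that finmod cycles in $\T_0$ are gated by the polynomially many explicit at-most constraints of $\T_S$: any derivable at-most statement $\T_0 \models K \sqsubseteq \exists^{\leq 1} R^-. K'$ is essentially a strengthening of one of the original at-most constraints, and after discarding redundant strengthenings only polynomially many essentially distinct at-most constraints need to be added by the reversing process. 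Simultaneously, only linearly many fresh concept names are required to witness the new $\exists R^-$-inclusions. Carrying out this structural analysis of the completion, and verifying that the equivalence through Theorem~\ref{thm:completion} is preserved under the normalization that makes $\widehat P$ polynomial, is where the bulk of the technical work lies.
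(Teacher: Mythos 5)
Your overall plan (schema TBox plus rolling-up of the acyclic $Q$, then cycle-reversing completion, then controlling the number of at-most constraints) is the same route the paper takes, but your phase~1 has a genuine gap: conformance to $S$ requires every node to carry \emph{exactly one} label from $\Gamma_S$, and the ``at least one label'' part ($\top \sqsubseteq \bigsqcup \Gamma_S$) is not Horn, so it is captured neither by your $\T_S$ nor by the disjointness axioms. With $\widehat P = P$, the claimed equivalence ``$P \subseteq_S Q$ iff $P$ has no finite model of $\T_0$'' fails, because a finite model of $\T_0$ may contain nodes with no $\Gamma_S$-label, on which all participation constraints are vacuous. Concretely, take $\Gamma_S=\{A\}$, $\Sigma_S=\{r\}$ with $\delta_S(A,r,A)=\delta_S(A,r^-,A)=\NONE$: no graph conforming to $S$ contains an $r$-edge, so $P=\exists x,y.\,r(x,y)$ is contained in every $Q$ modulo $S$, yet the two-node unlabeled graph with a single $r$-edge is a finite model of $\T_S\cup\T_{\lnot Q}$ satisfying $P$. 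This is exactly why the paper does \emph{not} take $\widehat P = P$: it inserts $(A_1+\dots+A_n)$ before and after every edge label in the atoms of $P$ and replaces non-schema labels by $\varnothing$, so that matches of $\widehat P$ are confined to labeled nodes, and then shows (Lemma~\ref{lem:labels}) that unlabeled nodes can be dropped from any countermodel without affecting $\widehat\T_S$, $\widehat P$, or the non-satisfaction of $Q$. The query modification is what makes the first equivalence true; it is not a routine normalization.

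Second, the crux of the \EXPTIME bound --- that the completion can be computed while keeping only polynomially many at-most constraints --- is precisely where you write ``I expect the decisive argument to be\dots''. The paper's argument is not that every derived at-most constraint is a redundant strengthening of an original one in some generic sense; it introduces the invariant of being \emph{$S$-driven} and proves (Lemma~\ref{lem:schema-driven}, via unravelling and subtree-surgery model constructions) that every \emph{satisfiable} finmod cycle over conjunctions $K_1,R_1,\dots,K_n$ projects onto a unique finmod cycle over single schema labels $A_i\in K_i\cap\Gamma_S$, which is reversed alongside; this is what allows the TBox to be re-simplified to at most $|\Sigma_S^\pm|\cdot|\Gamma_S|^2$ at-most constraints after each reversal, and it is the part of the proof that actually needs doing. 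A minor correction on the bookkeeping: reversing cycles introduces no new concept names at all; the ``linearly many additional concept names'' in the statement come from the rolling-up concepts $\Gamma_Q$, not from witnessing the reversed existentials.
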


Let us sketch the proof. Let $\T_S$ be the \HornALCIF TBox corresponding to $S$. Note that apart from the explicit restrictions captured in $\T_S$ the schema $S$ also ensures that only graphs with \emph{exactly} one label per node are considered. To ensure \emph{at most} one label from $\Gamma_S$ per node, we use the TBox $\widehat \T_S =\T_S \cup \{A\sqcap B \sqsubseteq \bot \mid A,B\in \Gamma_S, A\neq B\}$. The concept inclusion $\top \sqsubseteq \bigsqcup \Gamma_S$,  expressing that each node has \emph{at least} one label from $\Gamma_S$, is not Horn and cannot be used. Instead, we modify the query $P$. Assuming $\Gamma_S = \{A_1, A_2, \dots, A_n\}$, we include $(A_1+A_2+\dots+A_n)$ before and after each edge label used in an atom of $P$. Additionally, to ensure that $P$ uses only labels allowed by $S$, we substitute in $P$ each label not in $\Gamma_S \cup \Sigma_S^\pm$ by $\varnothing$. Letting $\widehat P$ be the resulting query, we have \[P \subseteq_S Q \quad \iff \quad  \widehat P \subseteq_{\widehat\T_S} Q\] \cameraready{}{ (see Lemma~\ref{lem:labels}).}
Because $Q$ is acyclic, by adapting the rolling-up technique~\cite{HorrocksT00} one can compute in \PTIME a \HornALCIF TBox $\T_{\lnot Q}$ over an extended set of concept names $\Gamma_S\cup\Gamma_{Q}$ such that 
\[\widehat P \subseteq_{\widehat \T_S} Q \;\; \iff \;\; \widehat P \text{ is finitely unsatisfiable modulo } \widehat\T_S \cup\T_{\lnot Q}\,. \]
\cameraready{}{ (see Lemma~\ref{lem:rollingup}).} Since $\widehat \T_S \cup \T_{\lnot Q}$ is a $\HornALCIF$ TBox, we can consider its completion  $\big(\widehat\T_S \cup \T_{\lnot Q}\big)^*$. As UC2RPQs are witnessed by finite subgraphs whenever they are satisfied, we can infer from Theorem~\ref{thm:completion} that  $\widehat P$ is finitely satisfiable modulo $\widehat \T_S \cup\T_{\lnot Q}$ iff $\widehat P$ is satisfiable modulo  $\big(\widehat\T_S\cup \T_{\lnot Q}\big)^*$\cameraready{.}{ (see Lemma~\ref{lem:finmod}).}

It remains to compute the completion. Reversing cycles does not introduce new concept names, but it may generate exponentially many concept inclusions. Identifying a finmod cycle involves deciding unrestricted entailment of \HornALCIF concept inclusions, which is decidable in \EXPTIME \cite{GiacomoL96}. However, since the input TBox might grow to an exponential size as more and more cycles are reversed, it is unlikely that the completion can be computed in \EXPTIME for every \HornALCIF TBox. Our key insight is that $\widehat\T_S\cup \T_{\lnot Q}$ enjoys a particular property, invariant under reversing cycles, that keeps the complexity under control. 

A concept inclusion (CI) of the form $K \sqsubseteq \exists R. K'$ or $K \sqsubseteq \exists^{\leq 1} R. K'$ is \emph{relevant} for a TBox $\T$ if the triple $(K, R, K')$ is satisfiable modulo $\T$; that is, some model $G$ of $\T$ contains nodes $u$ and $u'$ such that $u\in K^G$, $(u,u') \in R^G$, and $u'\in (K')^G$. 
We say that $\T$ is \emph{$S$-driven} if for each relevant CI in $\T$ of the form $K \sqsubseteq \exists R. K'$ (resp. $K \sqsubseteq \exists^{\leq 1} R. K'$),
$\T$ contains  $A\sqsubseteq \exists R. A'$ (resp. $A\sqsubseteq \exists^{\leq 1} R. A'$) for some $A,A' \in \Gamma_S$ such that $A \in K$, $A'\in K'$; here and later we blur the distinction between conjunctions of concept names and sets of labels.  Note that $\widehat\T_S\cup \T_{\lnot Q}$ is trivially $S$-driven, as all its  existential and at-most constraints are of the form $A\sqsubseteq \exists R. A'$ or $A\sqsubseteq \exists^{\leq 1} R. A'$.

\begin{lemma}
Every $S$-driven TBox $\T$ can be simplified in polynomial time so that it contains at most  $|\Sigma^\pm_S|\cdot|\Gamma_S|^2$ at-most constraints. 
\end{lemma}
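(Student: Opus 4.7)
The plan is to exploit $S$-drivenness to show that every at-most constraint in $\T$ is redundant except for those of the simple form $A \sqsubseteq \exists^{\leq 1} R.A'$ with $A, A' \in \Gamma_S$ and $R \in \Sigma^\pm_S$. Since the number of triples $(A, R, A')$ of this kind is bounded by $|\Sigma^\pm_S| \cdot |\Gamma_S|^2$, discarding the remaining at-most constraints immediately gives the desired bound.

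First I would record a monotonicity fact: if $A \in K$ and $A' \in K'$, then the CI $A \sqsubseteq \exists^{\leq 1} R.A'$ entails $K \sqsubseteq \exists^{\leq 1} R.K'$. The argument is a single line: any node $u \in K^G$ satisfies $A$, so it has at most one $R$-successor in $(A')^G$; since $(K')^G \subseteq (A')^G$, it has at most one $R$-successor in $(K')^G$ as well. Next I would invoke $S$-drivenness directly: for every relevant at-most CI $K \sqsubseteq \exists^{\leq 1} R.K'$ in $\T$, the TBox already contains a witness of the form $A \sqsubseteq \exists^{\leq 1} R.A'$ with $A \in K$, $A' \in K'$, and $A, A' \in \Gamma_S$, $R \in \Sigma^\pm_S$. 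By the monotonicity fact, this witness entails the original CI, which can therefore be removed while preserving the models of $\T$. Non-relevant at-most CIs are vacuous in every model of $\T$; their removal preserves the semantics because the non-relevance is certified by CIs of other kinds ($K \sqsubseteq \bot$ or $\nexists$-constraints), which are left untouched by the simplification. In the setting that matters here, namely $\T$ arising from $\widehat\T_S \cup \T_{\lnot Q}$ together with its cycle-reversing iterates, this corner case does not even arise, as all at-most CIs introduced by reversing finmod cycles are relevant by construction (the very definition of a finmod cycle requires $\T \models K_i \sqsubseteq \exists R_i.K_{i+1}$, and when $K_i$ is satisfiable modulo $\T$ this forces the triple $(K_i, R_i, K_{i+1})$ to be witnessed).

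The simplification is then a single linear pass over $\T$: keep an at-most CI if and only if it has the form $A \sqsubseteq \exists^{\leq 1} R.A'$ with $A, A' \in \Gamma_S$, $R \in \Sigma^\pm_S$, and discard all other at-most CIs; the remaining parts of $\T$ (existential, $\nexists$, $\sqsubseteq \bot$, and concept-name CIs) stay intact. This clearly runs in polynomial time, and the surviving at-most CIs are indexed by triples $(A, R, A')$, of which there are at most $|\Sigma^\pm_S| \cdot |\Gamma_S|^2$. The main obstacle in a fully rigorous write-up is the careful handling of non-relevant CIs to justify that their removal preserves equivalence; under $S$-drivenness and in the intended cycle-reversing context this turns out to be benign, but it is the only step where some delicacy is needed beyond the purely syntactic monotonicity argument that does all the real work.
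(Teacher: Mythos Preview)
Your handling of \emph{relevant} at-most constraints is correct and matches the paper exactly: the monotonicity observation $A \sqsubseteq \exists^{\leq 1} R.A' \models K \sqsubseteq \exists^{\leq 1} R.K'$ for $A\in K$, $A'\in K'$ lets you drop every relevant at-most CI that is not already in simple form.

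The gap is in your treatment of \emph{irrelevant} at-most constraints. You claim that their ``non-relevance is certified by CIs of other kinds,'' so plain removal is safe. This is false: an irrelevant at-most CI can be essential to its own irrelevance. Take $\Gamma_S=\{A\}$, an extra concept name $D$ outside $\Gamma_S$, and
\[
\T=\big\{\,A\sqcap D \sqsubseteq \exists r.(A\sqcap B_1),\ A\sqcap D \sqsubseteq \exists r.(A\sqcap B_2),\ B_1\sqcap B_2\sqsubseteq\bot,\ A\sqcap D\sqsubseteq\exists^{\leq 1} r.A\,\big\}.
\]
Modulo $\T$, the concept $A\sqcap D$ is unsatisfiable (the two required $r$-successors must coincide by the at-most CI, yet $B_1\sqcap B_2=\bot$), so \emph{every} existential and at-most CI in $\T$ is irrelevant and $\T$ is vacuously $S$-driven. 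Your algorithm deletes the at-most CI; in the resulting TBox, a node labelled $\{A,D\}$ with two distinct $r$-successors labelled $\{A,B_1\}$ and $\{A,B_2\}$ is a model, whereas it is not a model of $\T$. Equivalence is lost. Your hedge that in the intended cycle-reversing context all added at-most CIs are relevant is also not right: the paper explicitly reverses finmod cycles whose triples may all be unsatisfiable, in which case the freshly added at-most CIs are irrelevant by construction.

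The paper's fix is a one-line change to your procedure: when no witnessing simple-form at-most CI is present in $\T$, do not delete $K\sqsubseteq\exists^{\leq 1} R.K'$ but \emph{replace} it by $K\sqsubseteq\nexists R.K'$. This is sound because irrelevance means $\T\models K\sqsubseteq\nexists R.K'$, and the $\nexists$-CI in turn entails the at-most CI; thus equivalence is preserved while the at-most count drops to at most $|\Sigma_S^\pm|\cdot|\Gamma_S|^2$.
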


From our results in  Section~\ref{sec:satisfiability} it follows that unrestricted  entailment for a \HornALCIF TBox $\T$ with $k$ concept names and $\ell$ at-most constraints can be solved in time $O\big(\mathrm{poly}(|\T|) \cdot 2^{\mathrm{poly}(k,\ell)}\big)$ \cameraready{.}{ (Corollary~\ref{cor:entailment}).}
Hence, it would suffice to show that by reversing a finmod cycle in an $S$-driven TBox, we obtain another $S$-driven TBox.
In fact, we prove something weaker, but sufficient to compute the completion in \EXPTIME, and conclude that it is $S$-driven.  

Let $K_1, R_1,  \dots, K_{n-1}, R_{n-1}, K_n$ be a finmod cycle in an $S$-driven \HornALCIF TBox $\T$.
Reversing it will extend $\T$ with CIs \[K_{i+1} \sqsubseteq \exists R^-_i.K_i \quad\text{and}\quad K_i \sqsubseteq \exists^{\leq 1}  R_i.K_{i+1}\] for $1 \leq i < n$. 
If all triples $(K_i,R_i,K_{i+1})$ are unsatisfiable wrt $\T$, then all CIs to be added are irrelevant for $\T$ and we are done.
Suppose that some  $(K_i,R_i,K_{i+1})$ is satisfiable. Then, in the model for $(K_i,R_i,K_{i+1})$ we can trace the finmod cycle forward, witnessing each triple. Hence, the whole cycle is satisfiable (all its triples are). Then, we can show that there are unique $A_1, A_2, \dots, A_n \in \Gamma_S$ such that $A_i \in K_i$ for all $i\leq n$, and
$A_1, R_1, \dots, A_{n-1}, R_{n-1}, A_n$
is a finmod cycle in $\T$\cameraready{.}{ (Lemma~\ref{lem:schema-driven}).}
By reversing it, we can add to $\T$ CIs \[A_{i+1} \sqsubseteq \exists R^-_i.A_{i} \quad \text{and} \quad A_{i} \sqsubseteq \exists^{\leq 1}  R_i.A_{i+1}\] 
for $1 \leq i < n$, which makes the resulting extension $S$-driven.

Based on the obtained invariant we can compute the completion $\big(\widehat\T_S\cup \T_{\lnot Q}\big)^*$ in \EXPTIME\cameraready{.}{ (Lemma~\ref{lem:completion}).} By reducing  $\big(\widehat\T_S\cup \T_{\lnot Q}\big)^*$ as described above, we obtain the desired TBox $\T$, thus completing the proof of Theorem~\ref{thm:reduction}.

\section{Satisfiability modulo TBox}
\label{sec:satisfiability}

The last missing piece is to solve the  unrestricted satisfiability of C2RPQs modulo \HornALCIF. \citeauthor{CalvaneseOS11} show that the problem is in \EXPTIME not only for \HornALCIF, but even for \ALCIF extended with additional features \cite{CalvaneseOS11}. This result is not directly applicable, because our reduction produces a TBox of exponential size. The following theorem gives the more precise complexity bounds that we need.

\begin{theorem}
\label{thm:satisfiability}
Unrestricted satisfiability of a C2RPQ $p$ modulo an \ALCIF TBox $\T$ using $k$ concept names and $\ell$ at-most constraints can be decided in time $O\big(\mathrm{poly}(|\T|)\cdot 2^{\mathrm{poly}\left(|p|,k,\ell\right)}\big)$. \end{theorem}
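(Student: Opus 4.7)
The strategy is the automata-theoretic approach to C2RPQ satisfiability modulo description-logic TBoxes, with a refined complexity analysis designed to keep $|\T|$ outside the exponent and let only $k$, $\ell$, and $|p|$ enter it. This is precisely where the reformulation in terms of $(k,\ell)$-sparsity, announced in the introduction, will pay off.

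First, I would establish a quasi-forest model property for \ALCIF. If $p$ is satisfiable modulo $\T$, then, by the standard unraveling technique, there is a witness model whose underlying graph is $(k',\ell')$-sparse for parameters $k'$ and $\ell'$ polynomial in $k$ and $\ell$: the only edges that escape a forest structure are those forced by the $\ell$ at-most constraints. Such a quasi-forest can be encoded as a labeled infinite tree in which every node carries its \emph{type} (a subset of the $k$ concept names, yielding an alphabet of size $2^{O(k)}$) together with annotations recording the role used to reach it from its parent and a bounded number of partner markers used to identify nodes merged by at-most constraints.

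Next, I would build a two-way alternating parity tree automaton $\A_{\T,p}$ accepting exactly the tree encodings of quasi-forest models of $\T$ that satisfy $p$. Its state space has three components: a type/obligation component, which for each visited node records its type and the set of outstanding existential obligations generated by right-hand sides $\exists R.K'$ of CIs (size $2^{O(k)}$); an at-most component, which tracks, for each of the $\ell$ at-most constraints, whether the unique witness has been pinned down (contributing $2^{O(\ell)}$ options); and a query component obtained by translating $p$ into a two-way nondeterministic word automaton running alternatingly on the tree (polynomial in $|p|$). The TBox $\T$ is consulted only inside the transition function, where each local move is checked for compatibility with the CIs of $\T$ in $\mathrm{poly}(|\T|)$ time. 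Consequently, $\A_{\T,p}$ has $2^{\mathrm{poly}(|p|,k,\ell)}$ states while its transition function has size $\mathrm{poly}(|\T|)\cdot 2^{\mathrm{poly}(|p|,k,\ell)}$. Satisfiability then reduces to non-emptiness of $\A_{\T,p}$, which for two-way alternating parity tree automata is decidable in time exponential in the number of states and polynomial in the transition function, giving the desired bound.

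The hard part is the second step: keeping $|\T|$ out of the state space. A naive construction that introduces one state per CI, or indexes states by arbitrary subformulas of $\T$, would push $|\T|$ into the exponent and yield a bound of $2^{\mathrm{poly}(|\T|)}$, insufficient to combine with the exponential blow-up of cycle reversing in Section~\ref{sec:containment}. The crux is that only the semantic entailments between types over the $k$ concept names are relevant for the automaton's moves, and these can be precomputed in $\mathrm{poly}(|\T|)\cdot 2^{O(k)}$ time and then consulted without further dependence on $|\T|$; the $(k,\ell)$-sparsity analysis ensures that the additional bookkeeping for identifying merged nodes fits within $2^{O(\ell)}$ rather than something polynomial in $|\T|$.
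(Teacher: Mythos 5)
Your plan follows the same general spirit as the paper (a sparse/quasi-forest witness, then an automata-style emptiness test in which $\T$ is consulted only locally), but two of its key steps fail as stated. First, the model property is wrong: you claim a witness that is $(k',\ell')$-sparse with $k',\ell'$ polynomial in the TBox parameters $k,\ell$, attributing the non-forest edges to the at-most constraints. In \ALCIF the TBox alone never obstructs unravelling — at-most constraints are preserved when a model is unravelled into a forest; what cannot be unravelled is the match of the (possibly cyclic) C2RPQ $p$ together with its witnessing paths. Hence the excess of edges over nodes in the non-tree core is governed by $|p|$ (number of atoms minus number of variables), not by $k$ and $\ell$: a query whose multigraph has many more atoms than variables admits no witness of sparsity bounded independently of $|p|$, already with a single concept name and no functionality. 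The correct statement is the paper's Theorem~\ref{thm:sparsification} ($|p|$-sparsity), and the structure you would need in place of your ``partner markers for merged nodes'' is the $(4|p|,5|p|)$-skeleton with attached trees of Lemma~\ref{lem:skeleton}; since your final exponent does contain $|p|$, this is repairable, but the repair is precisely the substance of the paper's argument, not a routine unravelling.

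Second, the complexity accounting of the automaton step does not give the claimed bound. Your 2ATA has $2^{\mathrm{poly}(|p|,k,\ell)}$ states (the type/obligation and at-most components are sets), and you then invoke the standard result that two-way alternating tree automaton emptiness is exponential in the number of states; that yields $2^{2^{\mathrm{poly}(|p|,k,\ell)}}$, i.e.\ doubly exponential in $|p|,k,\ell$, not $O\big(\mathrm{poly}(|\T|)\cdot 2^{\mathrm{poly}(|p|,k,\ell)}\big)$. Precomputing the semantic entailments between types in $\mathrm{poly}(|\T|)\cdot 2^{O(k)}$ time does not address this: the blow-up sits in removing alternation and two-wayness, not in consulting $\T$. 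To reach the stated bound, the exponential-size object must be one on which emptiness is polynomial — an implicitly represented \emph{nondeterministic} one-way tree automaton, or equivalently a type-elimination fixpoint run once over the $2^{\mathrm{poly}(|p|,k,\ell)}$ types, with each iteration costing $\mathrm{poly}(|\T|)$ times a polynomial in the number of types (the paper's types record at most $\ell+1$ children, which is where $\ell$ enters the exponent). This is exactly why the paper explicitly side-steps the 2ATA route of Calvanese et al.\ and instead guesses an annotated $(4|p|,5|p|)$-skeleton, verifies it in \PTIME, and checks that it can be implemented by such a type-elimination procedure (Theorem~\ref{thm:sparse-witness}). As written, your proof establishes at best a doubly exponential bound and rests on a false sparsification lemma, so it does not yet prove the theorem.
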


\citeauthor{CalvaneseOS11} solve the problem by first showing a simple
model property and then providing an algorithm testing existence of simple models. We rely on the same simple model property, but design a new algorithm with the desired complexity bounds. Yet, rather than diving into the details of the algorithm, we devote most of this section to the simple model property. We do it to show a connection to an elegant graph-theoretical notion that helps to simplify the reasoning considerably, at least for \ALCIF.
We begin by illustrating how simple models are obtained for queries satisfiable modulo schemas (rather than arbitrary TBoxes).
\begin{example}
  Take the schema $S$ in Figure~\ref{fig:sparse-witness} (its two types are
  represented with a blue square and a red circle),
  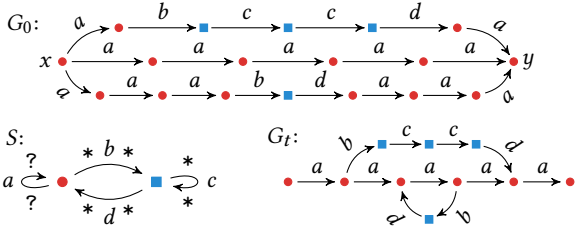
\begin{figure}
\begin{tikzpicture}[
  >=stealth',
  empty/.style={minimum size=0pt,inner sep=0pt, outer sep=0pt},
  punkt/.style={circle,minimum size=0.125cm,draw=solarized-red,fill=solarized-red,inner sep=0pt, outer sep=0.125cm},
  kwadrat/.style={rectangle,minimum size=0.125cm,draw=solarized-blue,fill=solarized-blue,inner sep=0pt, outer sep=0.125cm},
  punkcik/.style={circle,minimum size=0.1cm,draw=solarized-red,fill=solarized-red,inner sep=0pt, outer sep=0.075cm},
  kwadracik/.style={rectangle,minimum size=0.1cm,draw=solarized-blue,fill=solarized-blue,inner sep=0pt, outer sep=0.05cm},
  kropka/.style={rectangle,minimum size=0.0125cm,draw,fill,inner sep=0pt, outer sep=0.025cm},
  ]

  \path[use as bounding box] (-1,2.3) rectangle (7.25,-0.45);

  \begin{scope}
    \node[left] at (-0.4,0.6) {$S$:};
    \node[punkt] (A) at (0,0)   (A) {}; 
    \node[kwadrat] (B) at  (1.25,0) {}; 
    \draw (A) edge[->,loop left] node[left] {\sl a} node[pos=0.2,below] {\MAYBE} node[pos=0.8,above] {\MAYBE} (A);
    \draw (B) edge[->,loop right] node[right] {\sl c} node[pos=0.2,above] {\MANY} node[pos=0.8,below] {\MANY} (B);
    \draw (A) edge[->,bend left] node[above] {\sl b} node[pos=0.2,above] {\MANY} node[pos=0.8,above] {\MANY} (B);
    \draw (B) edge[->,bend left] node[below] {\sl d} node[pos=0.2,below] {\MANY} node[pos=0.8,below] {\MANY} (A);
\end{scope}

\begin{scope}[xshift=3cm]
  \node[left] at (0.25,0.6) {$G_t$:};
  \node[punkcik] (A0) at (0,0)    {} ;
  \node[punkcik] (A1) at (0.75,0) {} ;
  \node[punkcik] (A2) at (1.5,0)  {} ;
  \node[punkcik] (A3) at (2.25,0) {} ;
  \node[punkcik] (A4) at (3,0)    {} ;
  \node[punkcik] (A5) at (3.75,0) {} ;
  \node[kwadracik] (B2) at (1.25,0.5) {};
  \node[kwadracik] (B3) at (1.875,0.5) {};
  \node[kwadracik] (B4) at (2.5,0.5) {};

  \node[kwadracik] (B5) at (1.875,-0.5) {};
  
  \draw (A0) edge[->] node[above] {\sl a} (A1);
  \draw (A1) edge[->] node[above] {\sl a} (A2);
  \draw (A2) edge[->] node[above] {\sl a} (A3);
  \draw (A3) edge[->] node[above] {\sl a} (A4);
  \draw (A4) edge[->] node[above] {\sl a} (A5);

  \draw (A1) edge[->,bend left] node[above,sloped] {\sl b} (B2);
  \draw (B2) edge[->] node[above] {\sl c} (B3);
  \draw (B3) edge[->] node[above] {\sl c} (B4);
  \draw (B4) edge[->,bend left] node[above,sloped] {\sl d} (A4);

  \draw (A3) edge[->,bend left] node[below,sloped] {\sl b} (B5);
  \draw (B5) edge[->,bend left] node[below,sloped] {\sl d} (A2);
\end{scope}
\begin{scope}[yshift=1.6cm,xshift=0cm]
  \node[left] at (-0.2,0.5) {$G_0$:};
  
  \node[punkcik] (x) at (0,0) {};
  \node[punkcik,outer sep=1.5pt] (y) at (6,0) {};
  \node[left=-0.1cm of x] {$x$};
  \node[right=-0.1cm of y] {$y$};

  \node[punkcik] (a1) at (1.2,0) {};
  \node[punkcik] (a2) at (2.4,0) {};
  \node[punkcik] (a3) at (3.6,0) {};
  \node[punkcik] (a4) at (4.8,0) {};
  \draw (x) edge[->] node[above] {\sl a} (a1);
  \draw (a1) edge[->] node[above] {\sl a} (a2);
  \draw (a2) edge[->] node[above] {\sl a} (a3);
  \draw (a3) edge[->] node[above] {\sl a} (a4);
  \draw (a4) edge[->] node[above] {\sl a} (y);
  
  \node[punkcik]   (b1) at (0.75,0.45) {};
  \node[kwadracik] (b2) at (1.88,0.45) {};
  \node[kwadracik] (b3) at (3.00,0.45) {};
  \node[kwadracik] (b4) at (4.12,0.45) {};
  \node[punkcik]   (b5) at (5.25,0.45) {};

  \draw[bend angle=27] (x) edge[->,bend left] node[above,sloped] {\sl a} (b1);
  \draw (b1) edge[->] node[above] {\sl b} (b2);
  \draw (b2) edge[->] node[above] {\sl c} (b3);
  \draw (b3) edge[->] node[above] {\sl c} (b4);
  \draw (b4) edge[->] node[above] {\sl d} (b5);
  \draw[bend angle=27] (b5) edge[->, bend left] node[above,sloped] {\sl a} (y);

  \node[punkcik]   (c1) at (0.5,-0.45) {};
  \node[punkcik]   (c2) at (1.33,-0.45) {};
  \node[punkcik]   (c3) at (2.17,-0.45) {};
  \node[kwadracik] (c4) at (3,-0.45) {};
  \node[punkcik]   (c5) at (3.86,-0.45) {};
  \node[punkcik]   (c6) at (4.67,-0.45) {};
  \node[punkcik]   (c7) at (5.5,-0.45) {};
  
  \draw[bend angle=27] (x) edge[->,bend right] node[below,sloped,pos=0.4] {\sl a} (c1);
  \draw (c1) edge[->] node[above] {\sl a} (c2);
  \draw (c2) edge[->] node[above] {\sl a} (c3);
  \draw (c3) edge[->] node[above] {\sl b} (c4);
  \draw (c4) edge[->] node[above] {\sl d} (c5);
  \draw (c5) edge[->] node[above] {\sl a} (c6);
  \draw (c6) edge[->] node[above] {\sl a} (c7);
  \draw[bend angle=27] (c7) edge[->,bend right] node[below,sloped,pos=0.4] {\sl a} (y);
\end{scope}
\end{tikzpicture}
    \caption{\label{fig:sparse-witness} Simple witness for satisfiability. }
  \end{figure}
  and consider the following satisfiable (cyclic) query
  \[
    p(x,y) = 
    (a\cdot b\cdot c^+\cdot d\cdot a)(x,y) \land
    (a^*)(x,y) \land
    (a^* \cdot b\cdot d\cdot a^*)(x,y)\,.
  \]
  Since $p$ is satisfiable modulo $S$, we take any graph conforming to $S$ where
  $p$ is satisfied, and we choose any 3 paths witnessing each of the regular
  expressions of $p$. We construct the initial graph $G_0$ consisting of the 3
  paths joined at their ends: it might look like the one in Figure~\ref{fig:sparse-witness}. We observe that $S$ requires every red circle node
  to have at most one outgoing $a$-edge and at most one incoming $a$-edge (to
  and from a red circle node). The initial graph $G_0$ violates this requirement
  and to enforce it we exhaustively merge offending nodes. The final graph
  $G_t$ is a simple model of $p$ modulo $S$. \qed
\end{example}

We formalise simple models using a graph-theoretic notion of sparsity proposed by  \citeauthor{LeeS08} \cite{LeeS08}. We say that a connected graph $G$ with $n$ nodes and $m$ edges is \emph{$c$-sparse} if $m \leq n + c$. (In  \citeauthor{LeeS08}'s terminology this corresponds to $(1,-c)$-sparsity.)  Being $c$-sparse is preserved under adding and removing nodes of degree 1. By exhaustively removing nodes of degree 1 from a $c$-sparse graph $G$ we arrive at single node or a connected $c$-sparse graph $H$ in which all nodes have degree at least 2. Assuming $c\geq 1$, it is not hard to see that such a graph consists of at most $k=2c$ distinguished nodes connected by at most $l=3c$ simple paths disjoint modulo endpoints\cameraready{.}{ (see  Lemma~\ref{lem:sparse}).} We call such a graph a \emph{$(k,l)$-skeleton}, and we refer to the graph $H$ above as the \emph{skeleton of} $G$. 
Thus, a $c$-sparse graph consists of a $(2c,3c)$-skeleton and a number of attached trees; by attaching a tree to a graph we mean taking their disjoint union and adding a single edge between the root of the tree and some node of the graph. 

For the purpose of the simple model property we need to lift the notion of $c$-sparsity to infinite graphs. 
We call a (possibly infinite) graph \emph{$c$-sparse} if it consists of a finite connected $c$-sparse graph with finitely many finitely branching trees attached.

\begin{theorem}
\label{thm:sparsification}
A connected C2RPQ $p$ is satisfiable  in a possibly infinite model of an \ALCIF TBox $\T$ iff $p$ is satisfiable in a possibly infinite $|p|$-sparse model of $\T$.  
\end{theorem}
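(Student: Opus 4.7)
The \emph{if} direction is trivial, as a $|p|$-sparse model is a particular possibly infinite model. For the \emph{only if} direction, start with any possibly infinite model $G'$ of $\T$ that satisfies $p$, and fix a match of $p$ in $G'$: for each atom $\varphi_i(z_i,z_i')$, fix a shortest path in $G'$ witnessing $\varphi_i$ between the chosen images of $z_i$ and $z_i'$.

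The plan is to construct a $|p|$-sparse model of $\T$ satisfying $p$ by first assembling a finite core $W$ from the witnessing paths and then attaching finitely branching trees to realise the remaining existential commitments of $\T$, in the spirit of the forest-model construction for $\ALCI$ illustrated in Figure~\ref{fig:sparse-witness}. The core $W$ is obtained by taking a disjoint copy of each witnessing path and identifying endpoints corresponding to the same variable of $p$; each node of $W$ inherits the set of concept names it satisfies in $G'$. Abstractly, $W$ is the multigraph $M_p$ of $p$ with each edge subdivided into a path. Since $M_p$ is connected with at most $|p|$ vertices and $|p|$ edges (and hence trivially $|p|$-sparse), and since edge subdivision preserves the quantity $m-n$, the graph $W$ is $|p|$-sparse as well, and $p$ is already satisfied at its designated variable nodes.

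To turn $W$ into a model of $\T$, grow trees at its nodes. For each node $u$ of type $K$ and each concept inclusion $K \sqsubseteq \exists R.K'$ of $\T$ not yet satisfied at $u$, introduce a fresh $R$-successor $v$ whose type is chosen among those realised in $G'$ and compatible with $K'$, then recurse on $v$. With types drawn from a fixed finite pool, each node spawns only finitely many fresh children, so the result is a $|p|$-sparse graph in the sense of Theorem~\ref{thm:sparsification}'s definition. By construction it models the existential part of $\T$ while $W$ is preserved intact, so $p$ remains satisfied.

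The main obstacle is to maintain the $\exists^{\leq 1} R.K'$ constraints throughout: a naive successor introduction can collide with an existing successor of an incompatible type, and even inside $W$, variable-endpoint identifications can force functional-role violations. The construction handles both by combining \emph{merging} of forced-coincident nodes (which preserves sparsity: a collapsed parallel edge reduces both $m$ and $n$ by~$1$) with the type-elimination technique underlying the simple model property of \citeauthor{CalvaneseOS11}~\cite{CalvaneseOS11}: nodes are annotated with \emph{good} types, that is, sets of concept names realised in some model of $\T$ respecting all functional restrictions, and a functional-role successor of a given good type is introduced at most once per node, existing $W$-witnesses being reused whenever compatible. Verifying that good types exist and are preserved by the recursion transfers their simple model property to the present sparsity-based formulation and yields the desired $|p|$-sparse model.
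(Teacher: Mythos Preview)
Your overall architecture matches the paper's: build a finite sparse core from witnessing paths, repair at-most violations by merging, then grow trees to satisfy existential axioms. But the paper's proof hinges on one invariant that you never state and that your substitutes do not replace: throughout the construction it maintains a label-preserving homomorphism $h$ into the original model $G'$, and after the merging phase this homomorphism is \emph{injective on $R$-successors} of every node. Merging is not driven by ``forced coincidence'' in the abstract; one merges $u_1,u_2$ precisely when they are $R$-successors of the same node with $h(u_1)=h(u_2)$. This is what guarantees that the two nodes have identical label sets (so the merge is well-defined) and that the process terminates with all at-most constraints satisfied.

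The gap bites in your tree-growing step. You propose to give a fresh $R$-successor of $u$ a type ``chosen among those realised in $G'$ and compatible with $K'$''. An arbitrary such type can collide with an existing $R$-successor $w$ of $u$ that survives in $W$: if $\T$ contains $K\sqsubseteq\exists^{\leq 1}R.K''$ and both the chosen type and the label set of $w$ satisfy $K''$, you have just violated $\T$, and no amount of type elimination in the attached tree repairs a violation that lives at the interface with $W$. The paper avoids this by letting the new successor inherit the label set of an actual $R$-successor of $h(u)$ in $G'$; injectivity-on-successors of $h$ then transfers the at-most constraint from $G'$ to the constructed graph. Your appeal to ``good types'' from \cite{CalvaneseOS11} is not a replacement: that machinery certifies consistency of a type in isolation, not consistency with a prescribed finite neighbourhood already fixed by $W$. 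To fix your argument, carry the homomorphism $h$ explicitly through both phases and use it to choose successor label sets.
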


\begin{proof}
Let $c$ be the difference between the number of atoms and the number of variables of $p$. Because $p$ is connected, $c \geq -1$. By definition, $p$ understood as a graph with variables as nodes and atoms as edges is $c$-sparse.

We write $H \to H'$ to indicate that there is a \emph{homomorphism} from graph $H$ to graph $H'$; that is, a function $h$ mapping nodes of $H$ to nodes of $H'$ that preserves node labels and the existence of labelled edges between pairs of nodes. Let $G$ be a (possibly infinite) model of $p$ and $\T$. We construct a sequence of finite connected $c$-sparse graphs of strictly decreasing size
\[G_0 \to G_1 \to \dots \to G_t \to G\] such that $G_0 \models p$ and the homomorphism from $G_t$ to $G$ is injective over $R$-successors of every node, for each $R$.

To construct $G_0$ let us fix a match of $p$ in $G$ together with a (finite) witnessing path for each atom of $p$. We construct $G_0$ as follows. For each variable $x$ of $p$ we include a node $v_x$ whose set of labels is identical to that of the image of $x$ in $G$ under the fixed match. Next, for each atom of $p$ that connects variables $x$ and $y$ we add a simple path connecting $x$ and $y$ such that the sequence of edge labels and sets of node labels read off of this path is identical to that of the witnessing path of this atom in $G$. This graph can be seen as a specialization of $p$ where each regular expression is replaced by a single concrete word, except that we include full sets of labels of nodes, as they are encountered in the witnessing path in $G$. It follows immediately that $G_0 \models p$ and that $G_0 \to G$. To see that $G_0$ is $c$-sparse one can eliminate the internal nodes of the connecting paths one by one\cameraready{,}{, like in the proof of Lemma~\ref{lem:sparse},} until a graph isomorphic to $p$ remains. 

We define the remaining graphs $G_i$ inductively, maintaining an additional invariant $G_i \to G$. Suppose we already have $G_i$ together with a homomorphism $h_i \colon G_i \to G$ for some $i\geq 0$. If $h_i$ is injective over $R$-successors of each node of $G_i$, we are done. If not, there are two different $R$-successors $u_1$ and $u_2$ of a node $v$ in $G_i$ that are mapped to the same node $u'$ in $G$. It follows that $u_1$ and $u_2$ have the same sets of labels types. We let $G_{i+1}$ be the graph obtained from $G_i$ by merging $u_1$ and $u_2$ into a single node $u$. We include an $R'$-edge between $u$ and each $R'$-successor of $u_1$ or $u_2$. This decreases the number of nodes by one, and the number or edges by at least one. It follows that $G_{i+1}$ is $c$-sparse and $G_i \to G_{i+1} \to G$. 

Because the sizes of graphs $G_i$ are strictly decreasing, at some point we will arrive at a graph $G_t$ such that the homomorphism from $G_t$ to $G$ is injective over $R$-successors. 

The graph $G_t$ clearly satisfies $p$. It also satisfies all concept inclusions in $\T$ of the forms 
$K \sqsubseteq A_1 \sqcup A_2 \sqcup \dots \sqcup A_n$, $K \sqsubseteq \bot$, $K \sqsubseteq \forall R. K'$, $K \sqsubseteq \nexists R. K'$, and $K \sqsubseteq \exists^{\leq 1} R. K'$, because $h_i$ is injective over $R$-successors and $G\models \T$. On the other hand, $G_t$ is not guaranteed to satisfy concept inclusions of the form $K \sqsubseteq \exists R. K'$ in $\T$. In order to fix it, we exhaustively (ad infinitum) perform the following: whenever a node $v$ in $G_t$ is missing an $R$-successor with some set of labels, we add it and map it to some such $R$-successor $u'$ of the image of $v$ in $G$  ($u'$ exists because $G \models \T$). As $c \leq |p|$, the resulting (typically infinite) graph $\widehat G$ is $|p|$-sparse, and it satisfies $p$ and $\T$.
\end{proof}

The connectedness assumption in Theorem~\ref{thm:sparsification} is not restrictive, because a witnessing graph for $p$ can be obtained by taking the disjoint union of witnesses for its connected components. Hence, it remains to decide for a given connected $p$ if there exists a $|p|$-sparse graph $G$ that satisfies $p$ and $\T$. 
To get a finer control of the effect different parameters of the input have on the complexity, we side-step two-way alternating tree automata (2ATA) applied by \citeauthor{CalvaneseOS11} and develop a more direct algorithm.

Observe that if $p$ is satisfied in a $|p|$-sparse graph $G$, then $G$ contains a $(4|p|, 5|p|)$-skeleton $H'$, extending the skeleton of $G$, such that all variables of $p$ are mapped to distinguished nodes of $H'$. Indeed, $H'$ can be obtained by iteratively extending the skeleton of $G$. Suppose that some variable is mapped to a node $v$ that is not yet a distinguished node of $H'$. If $v$ already belongs to $H'$, then it is an internal node in a path between two distinguished nodes; we then split the path in two, turning $v$ into a distinguished node. If $v$ does not belong to $H'$, then it belongs to a tree attached to $H'$ at a node $u$. If $u$ is not a distinguished node of $H'$, we turn it into one, as above. Then, we add $v$ to $H'$ as a distinguished node, including the path between $u$ and $v$ into $H'$ as well. As we start from a $(2|p|,3|p|)$-skeleton and add at most two distinguished nodes and two paths for each variable of $p$, we end up with a $(4|p|,5|p|)$-skeleton.

Thus, the algorithm can guess a $(4|p|, 5|p|)$-skeleton $H'$ with each path represented by a single symbolic edge and check that it can be completed to a suitable graph $G$ by materializing symbolic edges into paths and attaching finitely many finitely branching trees
in such a way that $G$ is a model of $\T$ and there is a match of $p$ in $G$ that maps variables of $p$ to distinguished nodes of $H'$. 
This can be done within the required time bounds by means of a procedure that can be seen as a variant of type elimination or an emptiness test for an implicitly represented nondeterministic tree automaton\cameraready{.}{ (see Theorem~\ref{thm:sparse-witness}).}

\section{Discussion}
\label{sec:conclusions}

\paragraph{Summary} In this paper we have studied several static analysis problems for graph
transformations defined with Datalog-like rules that use acyclic C2RPQs. When
the source schema is given, we studied the \emph{equivalence} problem of two
given transformations, and the problem of \emph{target schema elicitation} for a
given transformation. If the output schema is also given, we have studied the
problem of \emph{type checking}. We have shown that the above problems can be
reduced to containment of C2RPQs in acyclic UC2RPQs modulo schema, a problem
that we have reduced to the unrestricted (finite or infinite) satisfiabilty of a
C2RPQ modulo \HornALCIF TBox using cycle reversing. 
For the latter problem we have presented an algorithm with sufficiently good complexity to accommodate the exponential blow-up introduced by cycle reversing,  thus allowing to solve in \EXPTIME all problems of interest.
We have also shown matching lower bounds by reducing query containment modulo schema to each of the static
analysis problems.

\paragraph{Finite containment modulo \HornALCIF TBox} In the course of the proof of the upper bound for containment modulo schema, we essentially solved (finite) containment modulo \HornALCIF TBox. Indeed, while the \EXPTIME upper bound relies on the special shape of the TBox expressing the schema, the method can be applied directly to any \HornALCIF TBox, at the cost of an exponential increase in complexity. Thus, we immediately get that finite containment of UC2RPQs in acyclic UC2RPQs modulo \HornALCIF TBoxes can be solved in \TWOEXPTIME. 
To the best of our knowledge this is the first result on finite containment of C2RPQs in the context of description logics. 
A related problem of finite entailment has been studied for various logics \cite{GogaczIM18,GogaczGIJM19,GogaczGGIM20,GutierrezGIM2022}, but
while for conjunctive queries the solutions carry over to finite containment, for C(2)RPQs these logics are too weak to allow this. Unrestricted containment of C2RPQs modulo \ALCIF TBoxes is known to be in \TWOEXPTIME~\cite{CalvaneseOS11}, but passing from unrestricted to finite structures is typically challenging for such problems. For example, finite entailment of CRPQs for a fundamental description logic \ALC{} has been solved only recently~\cite{GutierrezGIM2022}, 15 years after the unrestricted version~\cite{CalvaneseEO07}.

\paragraph{Extending queries} It is straightforward to extend our methods to two-way \emph{nested regular expressions} (NREs)~\cite{PeArGu10}. We also intend to investigate introducing \emph{negation} in filter expressions of NREs.
Eliminating the acyclicity assumption, on the other hand, is problematic. 
Containment of arbitrary C2RPQs is \EXPSPACE-complete~\cite{CGLV00}, and we have shown that it reduces to our problems of interest for transformation rules with cyclic queries. Hence, extending our \EXPTIME upper bounds to transformations allowing cyclic C2RPQs is highly unlikely. In fact, even establishing decidability would be hard. For acyclic queries we could use the rolling-up technique to reduce containment to satisfiability, which allowed us to apply the cycle reversing technique and pass from finite to unrestricted models. When cyclic queries are allowed, the rolling-up technique is inapplicable and we are left with containment of C2RPQs modulo constraints, which is a major open problem, not only for constraints expressed in description logics.  
The only positive results we are aware of do not go significantly beyond CQs extended with a binary reachability relation \cite{DeuTan2002}. 

\paragraph{Extending schemas} Extending the schema  formalism with disjunction is also challenging: the corresponding description logic would not be Horn any more and the transition to unrestricted models via cycle reversing would not be possible. 
Supporting multiple labels on nodes would not be a trivial extension either: we rely on the single label per node assumption in the reduction of the problems of interest to containment of UC2RPQs in \emph{acyclic} UC2RPQs, and in the \EXPTIME upper bound.
Supporting more general cardinality constraints, on the other hand, should be possible, but it might affect the complexity upper bounds. 

\paragraph{Extending the data model}
It is straightforward
to encode data values in our graph model, for instance, by using dedicated node labels to designate \emph{literal} nodes whose identifiers are their data
values. 
Then, one can apply methods similar to type checking to verify that transformations are well-behaved, and in particular, do not attempt to construct literal nodes from non-literal ones. 
However, the full consequences of allowing literal values in definitions of transformation rules need to be thoroughly investigated.

\medskip

\noindent Finally, we have considered equivalence of transformations based on equality of results but one could also consider a variant based on
isomorphism of results. This would be an entirely different problem, probably much harder.

\begin{acks}
This work was supported by Poland's National Science Centre grant 2018/30/E/ST6/00042. 
We would like to thank Sebastian Maneth, Mikaël Monet, Bruno Guillon, and Yazmin Ibáñez-Garc\'ia for their comments and discussions.
For the purposes of open access, the authors have applied a CC BY public copyright licence to any Author Accepted Manuscript version arising from this submission.
\end{acks}

\bibliographystyle{ACM-Reference-Format}
\bibliography{references}

\balance

\appendix

\clearpage
\section{Details on Queries}
\label{sec:app-preliminaries}
A \emph{two-way regular expression} is an expression defined with the following
grammar.
\[
  \varphi \coloncolonequals \varnothing \mid \epsilon \mid A \mid R \mid
  \varphi\cdot\varphi \mid \varphi + \varphi \mid \varphi^*,
\]
where $A\in\Gamma$ and $R\in\Sigma^\pm$. We define the semantics with the notion
of witnessing paths that we formalize next. Given a graph $G$, a \emph{path}
from $u_0$ to $u_k$ in $G$ is a sequence
$\pi=u_0 \cdot \ell_1 \cdot u_1 \cdot \ldots \cdot u_{k-1} \cdot \ell_{k} \cdot
u_k$ such that $u_0,\ldots,u_k$ are nodes of $G$,
$\ell_1,\ldots,\ell_k\in\Gamma\cup\Sigma^\pm$, and for every
$i\in\{1,\ldots,k\}$ the following conditions are satisfied:
\begin{enumerate}
\itemsep0pt
\item if $\ell_i\in\Gamma$, then $u_{i-1}=u_i$ and $u_i\in \ell_i^G$,
\item if $\ell_i\in\Sigma^\pm$, then $(u_{i-1},u_i)\in \ell_i^G$.
\end{enumerate}
The labeling of $\pi$ is $\ell_1\cdot\ldots\cdot \ell_n$. Given a two-way
regular expression $\varphi$ we define the corresponding binary relation on
nodes of the graph: $(u,v)\in[\varphi]_G$ iff there is a path from node $u$ to
node $v$ in $G$ whose labeling is recognized by $\varphi$.

Now, a \emph{conjunctive two-way regular path query} (C2RPQ) is a formula of the
form
\[
  q(\bar{x}) = \exists \bar{y}.
  \varphi_1 (z_1,z_1')\land\ldots\land
  \varphi_k (z_k,z_k'),
\]
where for every $i\in\{1,\ldots,k\}$ the formula $\varphi_i$ is a two-way
regular expression and $\bar{x}=\{z_1,z_1',\ldots,z_k,z_k'\}\minus\bar{y}$. A
C2RPQ is \emph{Boolean} if all of its variables are existentially quantified.

Evaluating a C2RPQ $q(\bar{x})$ over a graph $G$ yields a set $[q(\bar{x})]^G$
of tuples over $\bar{x}$ i.e., functions that assign nodes of $G$ to elements of
$\bar{x}$. Formally, $t\in [q(\bar{x})]^G$ iff there is a tuple $t'$ over
$\bar{y}$ such that the two tuples combined $t''=t\cup t'$ satisfy all atoms
i.e., $(t''(z_i),t''(z_i'))\in [\varphi_i]_G$ for every
$i\in\{1,\ldots,k\}$. When the query is Boolean, then it may have only a single
answer, the \emph{empty tuple} $()$ i.e., the unique function with the empty
domain. If indeed $()\in[q]^G$ we say that $q$ is satisfied in $G$ and denote it
by $G\models q$; otherwise, when $[q]^G=\emptyset$, we say that $q$ is not
satisfied in $G$ and we write $G\not\models q$.

For defining transformations we employ the subclass of acyclic C2RPQs. Formally,
for a query $q$ we construct its \emph{query multigraph} whose nodes are
variables and for every atom $\varphi(x,y)$ we add an edge $(x,y)$ unless the
atom is of the form $A(x,x)$, $\epsilon(x,x)$, or $\varnothing(x,x)$. $q$ is
\emph{acyclic} if its query multigraph is acyclic.

Finally, the semantics of  \emph{unions of conjunctive two-way regular path queries} (UC2RPQs),
represented  as sets of C2RPQs, is defined simply as:
\[[\{Q_1(\bar{x}),\ldots,Q_k(\bar{x})\}]^G= [Q_1(\bar{x})]^G \cup \ldots\cup
[Q_k(\bar{x})]^G\,.\] A UC2RPQ is \emph{acyclic} if all of its components are
acyclic. A \emph{Boolean} UC2RPQ consists of Boolean C2RPQs.

\section{Proofs for Transformations}
\label{sec:app-transformations}
We begin by introducing elements of useful terminology. Given any finite subsets
$\Gamma_0\subseteq\Gamma$ and $\Sigma_0\subseteq\Sigma$, we say that a schema
$S$ is over $\Gamma_0$ and $\Sigma_0$ if $\Gamma_S=\Gamma_0$ and
$\Sigma_S=\Sigma_0$. Analogously, we say that a \ALCIF TBox $\T$ is over
$\Gamma_0$ and $\Sigma_0$ if all base concept names and base rule names used in
$\T$ are from $\Gamma_0$ and $\Sigma_0$ respectively. Also, we say that a graph
$G$ is over $\Gamma_0$ and $\Sigma_0$ if $G$ does not use any node or edge label
outside of $\Gamma_0$ and $\Sigma_0$, and we extend this notion to families of
graphs in the canonical fashion: $\G$ is a family of graphs over $\Gamma_0$ and
$\Sigma_0$ if every graph in $G$ is over $\Gamma_0$ and $\Sigma_0$.  Finally, a
transformation $T$ is over $\Gamma_0$ and $\Sigma_0$ if all rules in $T$ use in
their heads node and edge labels in $\Gamma_0$ and $\Sigma_0$ respectively.

However, for a transformation we shall need to identify tighter sets of node
and edge labels when the input schema is known. As such, a transformation rule
$\rho\leftarrow q(\bar{x})$ is \emph{productive} modulo a schema $S$ if
$q(\bar{x})\nsubseteq_S \varnothing$. A transformation $T$ is \emph{trimmed}
modulo $S$ if 1) every rule in $T$ is productive modulo $S$, 2) for every
$A\in\Gamma_T$ there is an $A$-node rule in $T$, and 3) for every $r\in\Sigma_T$
there is a $r$-edge rule in $T$. Naturally, checking that a transformation is
trimmed can be Turing-reduced in polynomial time to testing query containment
modulo schema. Moreover, for a given schema $S$ we can trim a given
transformation $T$ by removing all unproductive rules and removing from
$\Gamma_T$ and $\Sigma_T$ any symbols that are not present in the head of any of
the remaining rules.

Next, an $\L_0$ TBox over $\Gamma_0$ and $\Sigma_0$ is a set of statements of
the forms
\begin{align*}
  &A \sqsubseteq \exists R. B, &
  &A \sqsubseteq \nexists R. B, &
  &A \sqsubseteq \exists^{\leq 1} R. B,
\end{align*}
where $A,B \in \Gamma_0$ and $R\in\Sigma_0^\pm$. $\T$ is \emph{coherent} iff 1)
$\T$ does not contains two contradictory rules $A\sqsubseteq \exists R. B$ and
$A\sqsubseteq \nexists R.B$ for any $A,B\in\Gamma$ and $R\in\Sigma^\pm$, and 2)
$\T$ contains $A\sqsubseteq \exists^{\leq 1} R.B$ whenever it contains
$A\sqsubseteq \nexists R. B$. 
Now, for a given schema $S$ the corresponding
$\L_0$ TBox $\T_S$ (over $\Gamma_S$ and $\Sigma_S$) is defined as follows.
\begin{align*}
  \T_S = {}
  & \{ A \sqsubseteq \exists R.B \mid A,B\in\Gamma_S,\ R\in\Sigma_S^\pm,\
    \delta_S(A,R,B) \in\{\ONE,\PLUS\} \}\\[-2pt]
  \cup\ & \{ A \sqsubseteq \exists^{\leq 1}
          R.B \mid A,B\in\Gamma_S,\ R\in\Sigma_S^\pm,\
          \delta_S(A,R,B) \in \{\ONE,\MAYBE,\NONE\} \}\\[-2pt]
  \cup\ & \{ A \sqsubseteq \nexists R.B \mid A,B\in\Gamma_S,\
          R\in\Sigma_S^\pm,\ \delta_S(A,R,B) = \NONE \}.
\end{align*}
It is easy to see that there is one-to-one correspondence between schemas and
coherent TBoxes. More precisely, given $\Gamma_0\subseteq\Gamma$ and
$\Sigma_0\subseteq\Sigma$, for any schema $S$ over $\Gamma_0$ and $\Sigma_0$,
$\T_S$ is a coherent TBox over $\Gamma_S$ and $\Sigma_S$, and for any coherent
TBox $\T$ over $\Gamma_0$ and $\Sigma_0$ there is a unique schema $S$ over
$\Gamma_0$ and $\Sigma_0$ such that $\T_S=\T$. Naturally, $\T_S$ also captures
the semantics of the cardinality constraints of $S$.
\begin{proposition}
  \label{prop:schema-as-tbox}
  For any schema $S$ and for any graph $G$, $G$ conforms to $S$ if and
  only if $G \models \T_S$, $G\models \top\sqsubseteq\bigsqcup \Gamma_S$, and
  $G\models A\sqcap B \sqsubseteq \bot$ for any $A,B\in\Gamma_S$.
\end{proposition}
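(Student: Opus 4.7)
The plan is to prove both directions by directly unfolding the definition of conformance and comparing it clause-by-clause with the semantics of the three families of concept inclusions on the right-hand side. No sophisticated machinery is needed; the proposition essentially says that the chosen TBox encoding $\T_S$ is faithful to the schema semantics, and the proof is a careful bookkeeping exercise.

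First I would split the definition of conformance into three independent requirements: (i) every node has a single label drawn from $\Gamma_S$ (and uses only edge labels from $\Sigma_S$), (ii) for every $A,B\in\Gamma_S$ and $R\in\Sigma_S^\pm$ the number of $R$-labelled $B$-successors of an $A$-node is as prescribed by $\delta_S(A,R,B)$. Requirement (i) splits naturally into "at least one label from $\Gamma_S$" and "at most one label from $\Gamma_S$", which match exactly the two side conditions $\top\sqsubseteq\bigsqcup\Gamma_S$ and $A\sqcap B\sqsubseteq\bot$ (applied to distinct $A,B\in\Gamma_S$). So the first part of the proof is just observing this correspondence. (The edge-label side of (i) is silent in the proposition, but is automatically enforced by the participation constraints via the $\nexists$ clauses once one reads the statement under the standard implicit assumption that $G$ uses only labels from $\Gamma_S\cup\Sigma_S$; this is the only delicate point to comment on explicitly.)

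The heart of the argument is matching requirement (ii) with $G\models\T_S$. Here I would do a small case analysis on the value $c=\delta_S(A,R,B)\in\{\NONE,\MAYBE,\ONE,\PLUS,\MANY\}$, showing that $G$ satisfies the cardinality constraint for $(A,R,B)$ iff $G$ satisfies the statements that $\T_S$ inserts for that triple. Concretely:
\begin{align*}
c=\ONE  &\;\iff\; G\models A\sqsubseteq \exists R.B \text{ and } A\sqsubseteq \exists^{\leq 1} R.B,\\
c=\PLUS &\;\iff\; G\models A\sqsubseteq \exists R.B,\\
c=\MAYBE &\;\iff\; G\models A\sqsubseteq \exists^{\leq 1} R.B,\\
c=\NONE  &\;\iff\; G\models A\sqsubseteq \nexists R.B \text{ and } A\sqsubseteq \exists^{\leq 1} R.B,\\
c=\MANY  &\;\iff\; \text{no constraint.}
\end{align*}
Each equivalence follows immediately from the semantics of $\exists$, $\exists^{\leq 1}$ and $\nexists$, once one notes that $\nexists R.B$ is logically stronger than $\exists^{\leq 1}R.B$ (hence the redundant inclusion inserted in the $\NONE$ case is harmless). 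Reading off the definition of $\T_S$ one verifies that exactly the inclusions in the right-hand columns above are added by the definition for each value of $c$, so the matching is tight.

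Finally I would put the two halves together: conformance holds iff (i) and (ii) hold iff the two side conditions and every statement of $\T_S$ are satisfied, which is exactly the right-hand side of the proposition. The only step that deserves care is the implicit handling of labels outside $\Gamma_S\cup\Sigma_S$; I do not expect any technical obstacle beyond clarifying that convention once in the preamble of the proof.
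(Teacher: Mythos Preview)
Your proposal is correct and follows the same approach as the paper, which dismisses the proposition in a single sentence (``Straightforward since the \ALCIF formulas are translations of the conditions of conformance of a graph to a schema''). You simply spell out what that one-liner means, and your case analysis on $\delta_S(A,R,B)$ matches the definition of $\T_S$ exactly; you also correctly flag the one genuinely delicate point (labels outside $\Gamma_S\cup\Sigma_S$), which the paper leaves implicit.
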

\begin{proof}
  Straightforward since the \ALCIF formulas are translations of the conditions
  of conformance of a graph to a schema.
\end{proof}

We use the above result to reduce type checking to testing entailment of simple
\ALCIF statements. Recall that for a schema $S$ and a transformation $T$ we
define the entailment relation $(T,S)\models K \sqsubseteq K'$ as
$T(G)\models K \sqsubseteq K'$ for every $G\in L(S)$.
\begin{lemma}
  \label{lemma:transformations-type-checking}
  Given two schemas $S$ and $S'$ and a transformation $T$,
  $\{T(G)\mid G\in L(S)\}\subseteq L(S')$ if and only if
  $(T,S)\models \top\sqsubseteq \bigsqcup \Gamma_T$ and $(T,S)\models\T_{S'}$.
\end{lemma}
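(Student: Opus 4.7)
The plan is to unfold conformance via Proposition~\ref{prop:schema-as-tbox}, which states that a graph $H$ belongs to $L(S')$ iff $H \models \T_{S'}$, $H \models \top \sqsubseteq \bigsqcup \Gamma_{S'}$, and $H \models A \sqcap B \sqsubseteq \bot$ for all distinct $A,B \in \Gamma_{S'}$. I will implicitly assume the syntactic checks $\Gamma_T \subseteq \Gamma_{S'}$ and $\Sigma_T \subseteq \Sigma_{S'}$ mentioned in the surrounding text (without them the right-hand side cannot imply the left, since labels outside $\Gamma_{S'}$ or $\Sigma_{S'}$ would immediately preclude conformance). The key structural fact I will exploit throughout is that, by the definition of $T(G)$, the interpretations $A^{T(G)}$ and $r^{T(G)}$ are empty whenever $A \notin \Gamma_T$ or $r \notin \Sigma_T$, so every label appearing in $T(G)$ belongs to $\Gamma_T \cup \Sigma_T$.

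For the forward direction, I fix $G \in L(S)$ and apply Proposition~\ref{prop:schema-as-tbox} pointwise to $T(G)$. This directly yields $T(G) \models \T_{S'}$, hence $(T,S) \models \T_{S'}$. For the other entailment, combining $T(G) \models \top \sqsubseteq \bigsqcup \Gamma_{S'}$ with the observation that labels of $T(G)$ come from $\Gamma_T$ shows that each node of $T(G)$ actually carries a label in $\Gamma_T$, yielding $(T,S) \models \top \sqsubseteq \bigsqcup \Gamma_T$.

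For the backward direction, I fix $G \in L(S)$ and check the three conditions of Proposition~\ref{prop:schema-as-tbox} for $T(G)$. The participation constraints hold by $(T,S) \models \T_{S'}$. Since $(T,S) \models \top \sqsubseteq \bigsqcup \Gamma_T$ and $\Gamma_T \subseteq \Gamma_{S'}$, every node of $T(G)$ has some label in $\Gamma_{S'}$, and every edge label lies in $\Sigma_T \subseteq \Sigma_{S'}$. The disjointness clauses $A \sqcap B \sqsubseteq \bot$ for distinct $A,B \in \Gamma_{S'}$ are satisfied for free: by the blanket assumption that constructors have pairwise disjoint ranges and a single constructor $f_A$ is used per label $A$, each node of $T(G)$ lies in the range of at most one $f_A$ and therefore receives at most one label from $\Gamma_T$, a fortiori at most one from $\Gamma_{S'}$.

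The main point, more than a real obstacle, is recognizing that the disjointness half of the conformance criterion is discharged not by a semantic entailment but by the global injectivity/disjointness hypothesis on node constructors. This is what lets the biconditional mention only $\top \sqsubseteq \bigsqcup \Gamma_T$ and $\T_{S'}$, rather than the complete list of conditions from Proposition~\ref{prop:schema-as-tbox}. A minor care point is to keep track of the implicit syntactic inclusions $\Gamma_T \subseteq \Gamma_{S'}$ and $\Sigma_T \subseteq \Sigma_{S'}$, which are trivially decidable and should be stated as part of the preamble to the lemma.
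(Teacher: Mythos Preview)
Your proof is correct and follows the same approach as the paper's: both rely on Proposition~\ref{prop:schema-as-tbox} to unfold conformance, and both discharge the disjointness clauses $A\sqcap B\sqsubseteq\bot$ via the global assumption that node constructors have pairwise disjoint ranges with one constructor per label. Your write-up is simply more explicit---in particular, you rightly flag the implicit hypotheses $\Gamma_T\subseteq\Gamma_{S'}$ and $\Sigma_T\subseteq\Sigma_{S'}$, which the paper treats as part of the surrounding type-checking procedure rather than of the lemma itself.
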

\begin{proof}
  Immediate consequence of Proposition~\ref{prop:schema-as-tbox} and the fact
  that transformations must use a single dedicated node constructor for each
  node label. This ensures that $(T,S)\models A\sqcap B \sqsubseteq \bot$ holds
  for any $A,B\in\Gamma_{S'}$.
\end{proof}
\noindent
Later we prove how to reduce entailment of statements to query
containment. Before, we address the problem of schema elicitation by observing
that the correspondence between schemas and their $\L_0$ TBoxes is tighter. We
first need to establish two auxiliary results. The first one characterizes the
containment of schemas, which is expressed as an extension of a syntactic
containment relation $\preccurlyeq$ on the symbols used to specify participation
constraints. More precisely, we define $\preccurlyeq$ as the transitive and
reflexive closure of the following assertions: $\NONE\preccurlyeq\MAYBE$,
$\ONE\preccurlyeq\MAYBE$, $\MAYBE\preccurlyeq\PLUS$, and
$\PLUS\preccurlyeq\MANY$.
\begin{proposition}
  \label{prop:schema-containment}
  Take finite $\Gamma_0\subseteq\Gamma$ and $\Sigma_0\subseteq\Sigma$.  Given
  two schemas $S_1$ and $S_2$ over $\Gamma_0$ and $\Sigma_0$,
  $L(S_1) \subseteq L(S_2)$ if and only if
  \[\delta_{S_1}(A,R,B)\preccurlyeq\delta_{S_2}(A,R,B)\] for every $A,B\in\Gamma_0$
  and $R\in\Sigma_0^\pm$.
\end{proposition}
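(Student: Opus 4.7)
The plan is to interpret $\preccurlyeq$ as set inclusion by assigning each cardinality symbol its set of allowed counts in $\mathbb{N}$---$\kappa(\NONE)=\{0\}$, $\kappa(\ONE)=\{1\}$, $\kappa(\MAYBE)=\{0,1\}$, $\kappa(\PLUS)=\{1,2,\dots\}$, $\kappa(\MANY)=\mathbb{N}$---and checking on the generators of $\preccurlyeq$ that $c_1 \preccurlyeq c_2$ iff $\kappa(c_1) \subseteq \kappa(c_2)$. Conformance of $G$ to $S$ then unfolds into a label-signature clause (only $\Gamma_S,\Sigma_S$-labels, one per node) plus a local count clause requiring, for every $A$-node $v$ and every $(R,B)$, that the count of $R$-successors of $v$ with label $B$ lies in $\kappa(\delta_S(A,R,B))$. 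The $(\Leftarrow)$ direction follows at once: $\Gamma_0$ and $\Sigma_0$ are shared by $S_1$ and $S_2$, and pointwise $\preccurlyeq$ promotes every per-node count statement from $S_1$ to $S_2$.

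For $(\Rightarrow)$ I argue by contraposition. Fix a triple $(A_0,R_0,B_0)$ with $\delta_{S_1}(A_0,R_0,B_0)\not\preccurlyeq\delta_{S_2}(A_0,R_0,B_0)$ and a count $n \in \kappa(\delta_{S_1}(A_0,R_0,B_0)) \setminus \kappa(\delta_{S_2}(A_0,R_0,B_0))$. The task is to exhibit a finite $G \in L(S_1)$ in which some $A_0$-node has exactly $n$ $R_0$-successors of label $B_0$; such a $G$ automatically fails the count clause of $S_2$ at that node. The natural construction is a \emph{biregular pool model}: for every $C \in \Gamma_0$ take a pool $V_C$ of fresh $C$-labelled nodes of size $N_C$, and for every ordered pair $(C,D)$ and every $r \in \Sigma_0$ install on $V_C \times V_D$ a biregular bipartite $r$-edge set with $C$-side degree $m_{C,r,D}\in\kappa(\delta_{S_1}(C,r,D))$ and $D$-side degree $m_{D,r^-,C}\in\kappa(\delta_{S_1}(D,r^-,C))$ subject to the balance equation $N_C \cdot m_{C,r,D} = N_D \cdot m_{D,r^-,C}$, and tuned so that $m_{A_0,R_0,B_0}=n$. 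Every node of $V_{A_0}$ then serves as the required witness.

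The main obstacle is solving the balance equations jointly: for every contributing pair $(C,D,r)$ the two sets $\kappa(\delta_{S_1}(C,r,D))$ and $\kappa(\delta_{S_1}(D,r^-,C))$ must admit choices compatible with some pool-size ratio, and at the distinguished pair the chosen value must equal $n$. I plan to secure this by first restricting the construction to the \emph{inhabited} sub-signature of $\Gamma_0$---the largest $\Gamma' \subseteq \Gamma_0$ whose induced sub-schema admits a non-empty finite model, characterised as a greatest fixpoint that removes $C$ whenever some mandatory $\delta_{S_1}(C,r,D)\in\{\ONE,\PLUS\}$ cannot be discharged by any $D \in \Gamma'$ with $\delta_{S_1}(D,r^-,C)\neq\NONE$. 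On that sub-signature a short double-counting of $r$-edges in any non-empty $S_1$-model shows each required pair of $\kappa$-sets overlaps, so the balance equations can be jointly solved by choosing all $N_C$ as a sufficiently large common multiple; the resulting slack also lets one impose $m_{A_0,R_0,B_0}=n$ without breaking other constraints. The residual case $A_0\notin\Gamma'$ means $L(S_1)$ contains no $A_0$-node at all, and I would absorb it via a preliminary normalisation lemma: on dead labels $\delta_{S_1}$ may be rewritten to any value $\preccurlyeq \delta_{S_2}$ without altering $L(S_1)$, so the presumed violation at $A_0$ cannot survive normalisation.
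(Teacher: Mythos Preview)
Your $(\Leftarrow)$ argument via the map $\kappa$ is exactly the paper's approach made explicit: the paper simply asserts that $\delta_{S_1}(A,R,B)\preccurlyeq\delta_{S_2}(A,R,B)$ propagates satisfaction of the participation constraint from $S_1$ to $S_2$, without spelling out $\kappa$. One caveat: your claimed equivalence ``$c_1\preccurlyeq c_2$ iff $\kappa(c_1)\subseteq\kappa(c_2)$'' does \emph{not} match the paper's listed generators, which include $\MAYBE\preccurlyeq\PLUS$ although $\{0,1\}\not\subseteq\{1,2,\dots\}$. That generator appears to be a typo in the paper; your reading is the one under which $(\Leftarrow)$ actually holds.

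For $(\Rightarrow)$ the paper gives no argument at all, so you are supplying the missing half. But your construction has a genuine gap, and in fact the direction fails as stated. Take $\Gamma_0=\{A,B\}$, $\Sigma_0=\{r\}$; let $S_1$ have $\delta_{S_1}(A,r,B)=\MANY$, $\delta_{S_1}(B,r^-,A)=\NONE$, and $\NONE$ everywhere else, and let $S_2$ agree with $S_1$ except that $\delta_{S_2}(A,r,B)=\MAYBE$. Both labels are inhabited (an isolated $A$-node and an isolated $B$-node each conform to $S_1$), yet in every $S_1$-model every $A$-node has zero $r$-successors of label $B$, because the $B$-side constraint forbids the incoming edge. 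Hence $L(S_1)=L(S_2)$ while $\MANY\not\preccurlyeq\MAYBE$. In your pool model this surfaces as the balance equation $N_A\cdot m_{A,r,B}=N_B\cdot 0$, which forces $m_{A,r,B}=0$ and blocks you from setting $m_{A_0,R_0,B_0}=n$ for any $n\geq 2$. Your ``inhabited sub-signature'' step and the final ``dead-label normalisation'' only treat dead \emph{labels}; they miss dead \emph{edges}, where both endpoint labels are alive but the two opposite participation constraints jointly forbid the edge. To make $(\Rightarrow)$ go through one needs an extra hypothesis---for instance that $\delta_S(A,R,B)=\NONE$ iff $\delta_S(B,R^-,A)=\NONE$, or more generally that each schema has already been normalised to its effective constraints---and your normalisation idea would have to be promoted from labels to triples before the biregular construction can be carried out.
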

\begin{proof}
  For the \emph{if} part, we take any $G$ that conforms to $S_1$ and we note
  first that every node of $G$ has exactly one label in $\Gamma_0$. Also, for
  any $A,B,\in\Gamma_0$ and any $R\in\Sigma_0^\pm$ we observe that
  \[\delta_{S_1}(A,R,B)\preccurlyeq\delta_{S_2}(A,R,B)\] implies that any $A$-node in
  $G$ whose number of $R$-successors with label $B$ satisfies the participation
  constraint $\delta_{S_1}(A,R,B)$ will also satisfy $\delta_{S_2}(A,R,B)$.
\end{proof}
\noindent
Next, we establish correspondence between $\L_0$ theories of sets of graphs and
their containment-minimal schemas.
\begin{proposition}
  \label{prop:minimal-schema}
  Take finite $\Gamma_0\subseteq\Gamma$ and $\Sigma_0\subseteq\Sigma$ and take
  any nonempty family $\G$ of graphs over $\Gamma_0$ and $\Sigma_0$ such that
  $\G\models \top\sqsubseteq \bigsqcup\Gamma_0$ and
  $\G\models A\sqcap B\sqsubseteq \bot$ for all $A,B\in\Gamma_0$. Let $\T$ be
  the set of all $\L_0$ statements over $\Gamma_0$ and $\Sigma_0$ that hold in
  every graph in $\G$. Then, $\T$ corresponds to the containment minimal schema
  $S$ over $\Gamma_0$ and $\Sigma_0$ such that $\G\subseteq L(S)$.
\end{proposition}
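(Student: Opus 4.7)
The plan is to construct $S$ explicitly from $\T$, verify $\G \subseteq L(S)$ via Proposition~\ref{prop:schema-as-tbox}, and reduce the minimality check to pointwise comparisons of participation constraints via Proposition~\ref{prop:schema-containment}.

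For the construction, for each triple $(A,R,B)$ I would set $\delta_S(A,R,B)$ by inspecting which $\L_0$ statements on $(A,R,B)$ lie in $\T$: take $\NONE$ if $A \sqsubseteq \nexists R.B \in \T$; otherwise take $\ONE$, $\PLUS$, $\MAYBE$, or $\MANY$ according to whether both, only $A \sqsubseteq \exists R.B$, only $A \sqsubseteq \exists^{\leq 1} R.B$, or neither of these remaining statements lie in $\T$. A direct case analysis (using that $\nexists R.B$ semantically implies $\exists^{\leq 1} R.B$, so the latter is automatically in $\T$ whenever the former is) shows that the resulting $\T_S$ is contained in $\T$. Every graph in $\G$ therefore satisfies $\T_S$, and combined with the two additional assumptions on $\G$ listed in the proposition, Proposition~\ref{prop:schema-as-tbox} yields $\G \subseteq L(S)$.

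For containment-minimality, let $S'$ be any schema over $\Gamma_0,\Sigma_0$ such that $\G \subseteq L(S')$ and $L(S') \subseteq L(S)$. The inclusion $\G \subseteq L(S')$ together with Proposition~\ref{prop:schema-as-tbox} forces $\T_{S'} \subseteq \T$; the inclusion $L(S') \subseteq L(S)$ and Proposition~\ref{prop:schema-containment} yield $\delta_{S'}(A,R,B) \preccurlyeq \delta_S(A,R,B)$ pointwise. It then suffices to rule out a strict decrease: were $\delta_{S'}(A,R,B) \prec \delta_S(A,R,B)$ to hold at some triple, the corresponding tightening would demand an $\L_0$ statement in $\T_{S'}$ that is, by construction of $\delta_S$, absent from $\T$ (e.g.\ refining $\MAYBE$ to $\NONE$ requires $\nexists R.B \in \T_{S'} \subseteq \T$, but $\delta_S = \MAYBE$ was chosen precisely because $\nexists R.B \notin \T$; analogous contradictions arise in the other cases). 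Hence $L(S') = L(S)$, so $S$ is containment-minimal.

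The subtle point I anticipate is the vacuous case where some $A \in \Gamma_0$ labels no node in any $G \in \G$: then every statement $A \sqsubseteq \varphi$ is vacuously in $\T$, including mutually inconsistent pairs such as $\exists R.B$ and $\nexists R.B$. The construction resolves this by committing to $\NONE$ via its first clause, which remains containment-minimal because $\NONE$ is $\preccurlyeq$-minimal and admits no strict refinement. Other schemas (such as those taking $\ONE$ at vacuous triples) may also be containment-minimal, but they are $\preccurlyeq$-incomparable with $S$ rather than strictly below it, so the argument above still applies and gives the minimal schema naturally attached to $\T$ by the construction.
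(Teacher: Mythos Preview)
Your construction of $S$ and the check that $\G \subseteq L(S)$ are correct. The gap is in the minimality step: by restricting attention to competitors $S'$ with $L(S') \subseteq L(S)$, you only show that $S$ is \emph{minimal} (no strictly smaller schema exists), not that it is the \emph{minimum} ($L(S) \subseteq L(S')$ for every competitor). The proposition asserts the latter---``\emph{the} containment-minimal schema''---and schema elicitation requires a unique answer. Your own final paragraph, where you allow other incomparable minimal schemas in the vacuous case, confirms that you are establishing the weaker notion.

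The paper proves the minimum directly: for any $S'$ with $\G \subseteq L(S')$ it shows $L(S) \subseteq L(S')$, arguing by contraposition that if $\delta_S(A,R,B) \not\preccurlyeq \delta_{S'}(A,R,B)$ at some triple then $\T_{S'}$ contains an $(A,R,B)$-constraint absent from $\T$, so some graph in $\G$ violates it and $\G \not\subseteq L(S')$. Your argument is one step away from this: you already derive $\T_{S'} \subseteq \T$, and (outside the vacuous case) $\T = \T_S$, so $\T_{S'} \subseteq \T_S$; a direct case check then gives $\delta_S \preccurlyeq \delta_{S'}$ pointwise without the extra hypothesis $L(S') \subseteq L(S)$. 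As for the vacuous-label issue: the proposition literally says $\T$ \emph{corresponds to} a schema, which under the stated bijection with coherent TBoxes forces $\T$ to be coherent. The paper argues coherence from nonemptiness of $\G$ (tacitly relying on each label being realised, as guaranteed by trimming in the intended application), whereas your $\NONE$-default construction gives $\T_S \subsetneq \T$ in that situation and hence does not produce the schema the proposition speaks of.
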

\begin{proof}
  We first argue that $\T$ is coherent. Indeed, should $\T$ contain two
  contradictory statements $A\sqsubseteq \exists R. B$ and
  $A\sqsubseteq \nexists R.B$, then no graph in $\G$ could satisfy $\T$ and we
  know that $\G$ is nonempty. Consequently, $\T$ corresponds to a schema that we
  denote $S^\circ=(\Gamma_0,\Sigma_0,\delta_{S^\circ})$. Naturally,
  $\G\subseteq L(S^\circ)$ because $\G\models \top\sqsubseteq \bigsqcup\Gamma_0$
  and $\G\models A\sqcap B\sqsubseteq \bot$.

  Now, take any schema $S$ over $\Gamma_0$ and $\Sigma_0$ such that
  $\G\subseteq L(S)$. We show that $L(S^\circ)\subseteq L(S)$ with a proof by
  contradiction. Suppose $L(S^\circ)\nsubseteq L(S)$. By
  Proposition~\ref{prop:schema-containment}, there are $A,B\in\Gamma_0$ and
  $R\in\Sigma_0^\pm$ such that
  $\delta_{S^\circ}(A,R,B)\not\preccurlyeq\delta_S(A,R,B)$. This means that
  $\T_S$ contains an $(A,R,B)$-constraint $\phi$ that $\T_{S^\circ}$ does not
  (by $(A,R,B)$-constraints we mean $A\sqsubseteq\exists R.B$,
  $A\sqsubseteq\exists^{\leq 1} R.B$, and $A\sqsubseteq\nexists R.B$). Since
  $\phi\not\in\T_{S^\circ}$ there is a graph $G\in\G$ such that
  $H\not\models\phi$, and consequently, $G$ does not conform to $S$. Thus
  $\G\nsubseteq L(S)$, a contradiction.
\end{proof}

\noindent
We obtain the following result allowing to solve the problem of schema
elicitation problem.
\begin{lemma}
  \label{lemma:transformations-schema-elicitation}
  Take a schema $S$ and a transformation $T$ that is trimmed modulo $S$ and such
  that $(T,S)\models \top\sqsubseteq \bigsqcup \Gamma_T$. Let $\T$ be the set of
  all $\L_0$ statements over $\Gamma_T$ and $\Sigma_T$ that are satisfied by
  every graph in the family $\{T(G) \mid G\in L(S)\}$. Then, $\T$ corresponds to
  the containment minimal schema over $\Gamma_T$ and $\Sigma_S$ that contains
  $\{T(G) \mid G\in L(S)\}$.
\end{lemma}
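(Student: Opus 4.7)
The plan is to derive the lemma as a direct instance of Proposition~\ref{prop:minimal-schema}, applied to the family $\G = \{T(G) \mid G \in L(S)\}$ with $\Gamma_0 = \Gamma_T$ and $\Sigma_0 = \Sigma_T$. Once the hypotheses of the proposition are discharged, its conclusion states literally that $\T$, the set of $\L_0$ statements over $\Gamma_T$ and $\Sigma_T$ true in every graph of $\G$, corresponds to the containment-minimal schema over $\Gamma_T,\Sigma_T$ that contains $\G$, which is exactly what we want.

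The verification of the four hypotheses of Proposition~\ref{prop:minimal-schema} is mostly routine. First, every $T(G)\in \G$ uses only node labels in $\Gamma_T$ and edge labels in $\Sigma_T$: this is immediate from the definition of $T(G)$, since those are precisely the labels appearing in the heads of the rules of $T$. Second, $\G \models \top \sqsubseteq \bigsqcup \Gamma_T$ is given as a hypothesis of the lemma. Third, $\G \models A \sqcap B \sqsubseteq \bot$ for all distinct $A,B\in\Gamma_T$: this follows from the global assumption that each node label has a single dedicated node constructor and that node constructors have pairwise disjoint ranges. Indeed, any node $u$ occurring in $T(G)$ lies in the range of a unique constructor $f_A$, and only a node rule with head of the form $A(f_A(\bar{x}))$ can assign it a label, so the only label $u$ can receive in $T(G)$ is $A$.

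The only point requiring slightly more care is nonemptiness of $\G$. This is where the trimming assumption enters: if $T$ contains at least one rule, then its body $q$ satisfies $q\nsubseteq_S\varnothing$, which implies that $L(S)\neq\emptyset$, and hence $\G$ contains at least one graph; if $T$ has no rules, then $\Gamma_T = \Sigma_T = \emptyset$ and the claim is trivial (the only candidate schema is the one with no labels, whose unique conforming graph is the empty graph). With all hypotheses in place, Proposition~\ref{prop:minimal-schema} yields the lemma. I do not foresee any genuine obstacle here, since the argument is essentially a bookkeeping check that the codomain of a transformation sits in the framework of Proposition~\ref{prop:minimal-schema}; the substantive content has already been established in that proposition.
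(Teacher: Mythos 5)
Your proposal is correct and matches the paper's own proof: both derive the lemma by instantiating Proposition~\ref{prop:minimal-schema} with $\G=\{T(G)\mid G\in L(S)\}$, $\Gamma_0=\Gamma_T$, $\Sigma_0=\Sigma_T$, and both dispose of the degenerate case of an empty transformation (empty $\Gamma_T$, $\Sigma_T$) by the same trivial observation. Your explicit verification of the hypotheses (labels in outputs, the disjoint-range constructor argument for $A\sqcap B\sqsubseteq\bot$, nonemptiness via trimming) is just a more detailed write-up of what the paper states as "immediate."
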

\begin{proof}
  The proof follows immediately from Proposition~\ref{prop:minimal-schema}
  except for the case when $T$ is empty. Then, however, $\Gamma_T$ and
  $\Sigma_T$ are empty too and so is $\T$. However, the schema that corresponds
  to $\T$ is also empty and it recognizes only empty graphs.  As such it is the
  containment minimal schema over $\Gamma_T$ and $\Sigma_T$ that contains
  $\{T(G) \mid G\in L(S)\}\subseteq\{\emptyset\}$.
\end{proof}
\noindent
To move to reducing entailment of statements to query containment we repeat the
definitions of the relevant queries but in this version we clearly indicate the
transformation in question. More precisely, For a transformation $T$,
$A,B\in\Gamma_T$, and $r\in\smash{\Sigma_T}$ we define:
\begin{align*}
  & Q_{A}^T(\bar{x})=\big\{q(\bar{x}) \bigm{|}
    A\big(f_A(\bar{x})\big)\leftarrow q(\bar{x})\in T\big\}, \\
  & Q_{A,r,B}^T(\bar{x},\bar{y})=\big\{q(\bar{x},\bar{y}) \bigm{|}
    r\big(f_A(\bar{x}),f_B(\bar{y})\big)\leftarrow q(\bar{x},\bar{y})\in T \big\},\\
  & Q_{A,r^-,B}^T(\bar{x},\bar{y})=\big\{q(\bar{y},\bar{x}) \bigm{|}
    r\big(f_B(\bar{y}),f_A(\bar{x})\big)\leftarrow q(\bar{y},\bar{x})\in T \big\}.
\end{align*}
Now, we prove that the entailment of $\top\sqsubseteq \bigsqcup \Gamma_T$ is
reduced to query containment.
\begin{lemma}
  \label{lemma:transformations-consistency}
  Given a schema $S$ and a transformation $T$,
  $(T,S)\models \top\sqsubseteq \bigsqcup \Gamma_T$ if and only if
  $\exists\bar{y}.Q_{A,R,B}^T(\bar{x}, \bar{y}) \subseteq_S Q_A^T(\bar{x})$ for
  every $A,B\in\Gamma_T$ and $R\in\Sigma_T^\pm$.
\end{lemma}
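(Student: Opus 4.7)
The plan is to establish both directions of the equivalence by analyzing how the domain of $T(G)$ is populated and by exploiting the assumption that node constructors are injective with pairwise disjoint ranges, with a dedicated $f_A$ for each $A\in\Gamma$. First I would make the key observation: every node $u\in\dom(T(G))$ lies in the range of a unique constructor $f_A$ for some $A\in\Gamma_T$, so $u=f_A(t)$ for a unique $A$ and a unique tuple $t$, and $u\in B^{T(G)}$ is possible only for $B=A$. Consequently, $u$ has at least one label in $\Gamma_T$ if and only if $u\in A^{T(G)}$, which by definition is equivalent to $t\in [Q_A^T(\bar{x})]^G$.

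Next I would describe the two ways $u=f_A(t)$ can enter $\dom(T(G))$: either it is produced by a node rule (so $t\in [Q_A^T(\bar{x})]^G$ directly), or it is produced as an endpoint of an edge by some edge rule. Combining with the observation above, the conditions for incoming and outgoing edges yield exactly that $t\in [\exists\bar{y}.Q_{A,R,B}^T(\bar{x},\bar{y})]^G$ for some $B\in\Gamma_T$ and $R\in\Sigma_T^\pm$. Thus
\[
\dom(T(G))=\bigcup_{A\in\Gamma_T}\Big\{f_A(t)\,\Big|\, t\in\big[Q_A^T(\bar{x})\cup\!\!\bigcup_{B,R}\exists\bar{y}.Q_{A,R,B}^T(\bar{x},\bar{y})\big]^G\Big\}\,.
\]

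For the forward direction, I will fix $A,B\in\Gamma_T$, $R\in\Sigma_T^\pm$, a graph $G\in L(S)$, and $t\in[\exists\bar{y}.Q_{A,R,B}^T(\bar{x},\bar{y})]^G$. Then $f_A(t)\in\dom(T(G))$, and by assumption $f_A(t)$ has some label in $\Gamma_T$. The disjoint-ranges property forces this label to be $A$, i.e.\ $f_A(t)\in A^{T(G)}$, which is equivalent to $t\in[Q_A^T(\bar{x})]^G$. Hence the containment $\exists\bar{y}.Q_{A,R,B}^T(\bar{x},\bar{y})\subseteq_S Q_A^T(\bar{x})$ holds.

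For the backward direction, I will take an arbitrary $G\in L(S)$ and $u\in\dom(T(G))$ and use the decomposition above: either $u=f_A(t)$ with $t\in[Q_A^T(\bar{x})]^G$, in which case $u\in A^{T(G)}$ immediately, or $u=f_A(t)$ with $t\in[\exists\bar{y}.Q_{A,R,B}^T(\bar{x},\bar{y})]^G$ for some $B,R$, in which case the assumed containment gives $t\in[Q_A^T(\bar{x})]^G$ and again $u\in A^{T(G)}$. In either case $u$ carries the label $A\in\Gamma_T$, so $T(G)\models\top\sqsubseteq\bigsqcup\Gamma_T$. The only subtle point, which I would emphasize, is that the pairwise disjointness of the ranges of the constructors is what prevents any ambiguity about which label a constructor-generated node could possibly receive; without this assumption the reduction would need additional queries to account for nodes potentially shared between several constructors.
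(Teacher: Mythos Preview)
Your proof is correct and follows essentially the same approach as the paper's proof: both directions hinge on the observation that any $u\in\dom(T(G))$ is of the form $f_A(t)$ for a unique $A\in\Gamma_T$, and that $u$ can carry only the label $A$ due to the injectivity and disjoint-range assumptions on constructors. Your explicit description of $\dom(T(G))$ as a union indexed by $A$ of node-rule and projected edge-rule answers is a slightly more structured presentation of the same case analysis the paper performs (node rule vs.\ edge rule), but the logical content is identical.
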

\begin{proof}
  For the \emph{if} direction, we take any graph $G\in L(S)$ and any element in
  $u\in\dom(T(G))$. This element has been introduced by node rule or by an
  edge rule, but only the latter is of concern. Thus, assume that $u=f(t)$ has
  been generated by the rule
  $R(f_A(\bar{x}),f_B(\bar{y}))\leftarrow Q(\bar{x},\bar{y})$ with the valuation
  $\bar{x}=t$ and $\bar{y}=t'$ . Since $(t,t')\in Q_{A,R,B}^T(G)$ and
  $\exists \bar{y}. Q_{A,R,B}^T(\bar{x},\bar{y}) \subseteq_S Q_A^T(\bar{x})$,
  $t\in Q_A^T(G)$, and therefore, there is a node rule
  $A(f_A(\bar{x}))\leftarrow Q'(\bar{x})$ such that $t\in
  Q'(G)$. Consequently, $u\in A^{T(G)}$.

  For the \emph{only if} direction, we take any $G\in L(S)$ and any answer
  $(t,t')\in Q_{A,R,B}^T(\bar{x},\bar{y})$ which implies that
  $(t,t')\in q(\bar{x},\bar{y})$ for some rule
  $R(f_A(\bar{x}),f_B(\bar{y}))\leftarrow q(\bar{x},\bar{y})$. Consequently,
  $T(G)$ contains the fact $R(f_A(t),f_B(t'))$. Since $T(G)$ satisfies the
  statement $\top\sqsubseteq\bigsqcup\Gamma_T$ and nodes constructed with $f_A$
  can only be part of node label assertions with $A$, $T(G)\models
  f_A(t)$. Therefore, there must be a rule
  $A(f_A(\bar{x})\leftarrow q'(\bar{x})$ that generated the fact $f_A(t)$ with
  the valuation $\bar{x}=t$. Consequently, $t\in[Q_A]^{T(G)}$.
\end{proof}

\begin{lemma}
  \label{lemma:transformations-entailment}
  Take a schema $S$ and a transformation $T$, such that
  $\Gamma_T\subseteq\Gamma_S$, $\Sigma_S\subseteq\Sigma_S$, and
  $(T,S)\models\top\sqsubseteq \bigsqcup \Gamma_T$. For any $A,B\in\Gamma_T$ and
  any $R\in\Sigma_T^\pm$ we have that
  \begin{align*}
    & (T,S)\models A \sqsubseteq \exists R.B\iff 
      Q_A(\bar{x}) \subseteq_S Q_{A,R,B}^T(\bar{x}),\\[-2pt]
    &(T,S)\models A \sqsubseteq \nexists R.B \iff 
      \exists \bar{y}.Q_A(\bar{x}) \!\land\! Q_{A,R,B}^T(\bar{x},\bar{y})
      \subseteq_S \varnothing,\\[-2pt]
    &(T,S)\models A \sqsubseteq \exists^{\leq 1} R.B\iff\\[-2pt]
    & \hspace{7.75ex}\exists \bar{x}. Q_A^T(\bar{x}) \!\land\!
      Q_{A,R,B}^T(\bar{x},\bar{y}) \!\land\!
      Q_{A,R,B}^T(\bar{x},\bar{z}) \subseteq_S
      \textstyle\bigwedge_i [\epsilon(y_i,z_i).
  \end{align*}
\end{lemma}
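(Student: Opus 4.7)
The plan is to use the assumption $(T,S)\models \top\sqsubseteq\bigsqcup\Gamma_T$ together with the structural assumptions on node constructors (each label has a dedicated constructor, constructors are injective, and their ranges are pairwise disjoint) to establish a tight dictionary between the fragments of $T(G)$ labeled by $A$ and by $R$-edges, on the one hand, and the answer sets $[Q_A^T]^G$ and $[Q_{A,R,B}^T]^G$, on the other. Concretely, the following two facts, direct unfoldings of the definition of $T(G)$ and of the structural assumptions on constructors, are the workhorse:
\begin{itemize}
\item $u\in A^{T(G)}$ iff $u=f_A(t)$ for a (necessarily unique) tuple $t$ with $t\in[Q_A^T(\bar x)]^G$;
\item for $u\in A^{T(G)}$ and $v\in B^{T(G)}$, $(u,v)\in R^{T(G)}$ iff $u=f_A(t)$, $v=f_B(t')$, and $(t,t')\in [Q_{A,R,B}^T(\bar x,\bar y)]^G$.
\end{itemize}
The non-trivial ingredient is the ``only if'' direction of the first bullet: a priori a node of the form $f_A(t)$ may have entered $A^{T(G)}$ only via an edge rule; but because $(T,S)\models\top\sqsubseteq\bigsqcup\Gamma_T$, every node of $T(G)$ gets at least one label, and since $f_A$ has range disjoint from the other constructors and each label is assigned by exactly one constructor, this label must be $A$ and must be assigned by some node rule with head $A(f_A(\bar x))$, so $t\in[Q_A^T]^G$ indeed holds.

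With the dictionary in hand each of the three equivalences becomes a literal rewriting.
For (1), $(T,S)\models A\sqsubseteq\exists R.B$ unfolds to: for every $G\in L(S)$ and every $f_A(t)\in A^{T(G)}$ there is $f_B(t')\in B^{T(G)}$ with $(f_A(t),f_B(t'))\in R^{T(G)}$. By the two bullets, this is exactly $Q_A^T(\bar x)\subseteq_S \exists\bar y.\,Q_{A,R,B}^T(\bar x,\bar y)$.
For (2), $(T,S)\models A\sqsubseteq\nexists R.B$ unfolds to: no $G\in L(S)$ yields $f_A(t)\in A^{T(G)}$ together with an $R$-successor in $B^{T(G)}$, which translates to unsatisfiability of $Q_A^T(\bar x)\wedge Q_{A,R,B}^T(\bar x,\bar y)$ modulo $S$, i.e., to the stated containment with $\varnothing$ on the right.
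For (3), the hypothesis forbids two distinct $R$-successors in $B$ of any node in $A^{T(G)}$; via the dictionary these correspond to tuples $t'$ and $t''$ with $(t,t'),(t,t'')\in[Q_{A,R,B}^T]^G$, and equality of the corresponding nodes $f_B(t')=f_B(t'')$ reduces, by injectivity of $f_B$, to componentwise equality $t'=t''$, expressed by the conjunction $\bigwedge_i \epsilon(y_i,z_i)$.

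The only real subtlety, and where I expect to spend the most care when writing this out, is keeping the ``label-uniqueness'' argument straight across the three clauses: without $(T,S)\models\top\sqsubseteq\bigsqcup\Gamma_T$ a node $f_A(t)$ generated solely by an edge rule would have no label and would silently fail to appear on the left-hand side of the entailment, thus producing spurious counterexamples to the containment; dually, without the disjoint-range assumption a single node could belong to several $A^{T(G)}$ and force us to quantify over multiple preimages. The assumptions on $T$ and $S$ eliminate both concerns and make the three equivalences tautological consequences of the dictionary.
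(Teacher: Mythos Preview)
Your proposal is correct and takes essentially the same approach as the paper, which also unfolds the definitions of $T(G)$, $Q_A^T$, and $Q_{A,R,B}^T$ directly and invokes $(T,S)\models\top\sqsubseteq\bigsqcup\Gamma_T$ at the same point (to ensure that an endpoint $f_B(t')$ of a freshly produced $R$-edge actually carries label $B$). One minor wording issue: your first bullet as literally stated is immediate from the definition of $A^{T(G)}$ and needs no hypothesis; what your explanatory paragraph actually establishes---and what you need---is the stronger fact that any node of the form $f_A(t)$ occurring in $\dom(T(G))$, even if introduced only by an edge rule, already lies in $A^{T(G)}$.
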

\begin{proof} We prove each of the 3 claims separately. 
  \begin{enumerate}
  \item For the \emph{if} part, we fix a graph $G\in L(S)$ and take any node
    $u=f_A(t)$ with label $A$ in $T(G)$. Thus, there is a node rule
    $A(f_A(\bar{x}))\leftarrow q(\bar{x})$ such that $t\in [q(\bar{x})]^G$ and
    consequently, $t\in [Q_A^T(\bar{x})]^G$. Since
    $Q_A^T(\bar{x}) \subseteq_S Q_{A,R,B}^T(\bar{x})$,
    $t\in [ Q_{A,R,B}^T(\bar{x})]^G$ and there exists rule
    $R(f_A(\bar{x}),f_B(\bar{y}))\leftarrow q'(\bar{x},\bar{y})$ such that
    $(t,t')\in[q'(\bar{x},\bar{y})]^G$. Consequently, $T(G)$ contains the edge
    $R(f_A(t),f_B(t'))$. Because $T(G)$ satisfies
    $\top\sqsubseteq\bigsqcup\Gamma_T$, there is also a rule
    $B(f_B(\bar{y}))\leftarrow q''(\bar{y})$ such that $t'\in[q''(\bar{y})]^G$,
    and hence the node $f_B(t')$ has label $B$ in $G$.

    For the \emph{only if} part, we fix a graph $G\in L(S)$ and take any
    $t\in[Q_A^T]^G$, which means that there is a node rule
    $A(f_A(\bar{x}))\leftarrow q(\bar{x})$ with
    $t\in[q(\bar{x})]^G$. Consequently, $A(f_A(t))$ belongs to $T(G)$. Since
    $\M_0(G)\models A \sqsubseteq \exists R.B$, $G$ has an edge $R(f_A(t),v)$
    and the node $v$ has label $B$. This edge must be generated by an edge rule
    $R(f_A(\bar{x}),f_B(\bar{y}))\leftarrow q'(\bar{x},\bar{y})$. Consequently,
    $t$ belongs to the answers to $\exists \bar{y}. q'(\bar{x},\bar{y})]$ which
    is contained in $Q_{A,R,B}(\bar{x})$ modulo $S$.
    
  \item The proof of this statement is by contradiction and it uses arguments
    that are analogous to those used in the proof of the above claim and we only
    outline it. We take a graph $G\in L(S)$ such that in $T(G)$ there is a node
    $f_A(t)$ with label $A$ and an $R$-edge to a node with with label $B$. This
    happens if and only if the intersection of $Q_A(\bar{x})$ and
    $\exists \bar{y}. Q_{A,R,B}(\bar{x},\bar{y})$ is non-empty.
    
  \item Similarly, the proof is by contradiction but uses argument analogous to
    those in the proof of the first claim and we only outline it. We take a
    graph $G\in L(S)$ such that $T(G)$ has an $A$-node $f_A(t)$ which has
    $R$-edges to two different $B$-nodes $f_B(t'_1)$ and $f_B(t'_2)$. This is
    possible if and only if the query
    $\exists \bar{x}. Q_A^T(\bar{x}) \land Q_{A,R,B}^T(\bar{x},\bar{y})$
    returns both $t'_1$ and $t'_2$, and consequently,
    $\exists \bar{x}. Q_A^T(\bar{x}) \land Q_{A,R,B}^T(\bar{x},\bar{y}) \land
    Q_{A,R,B}^T(\bar{x},\bar{z}) \subseteq_S \bigwedge_i \epsilon(y_i,z_i)$
    returns $(t'_1,t'_2)$.  Because node constructors are invective,
    $t'_1\neq t'_2$, and therefore, $(t'_1,t'_2)$ cannot be answer to
    $\bigwedge_i \epsilon(y_i,z_i)$. \qedhere
  \end{enumerate}
\end{proof}

For testing equivalence of two transformations we observe that since a
transformation is equivalent to its trimmed version, two transformations $T_1$
and $T_2$ are equivalent modulo $S$ if and only if they trimmed versions
$\trim_S(T_1)$ and $\trim_S(T_2)$ are equivalent modulo $S$. In the following
lemma, $Q_1\equiv_S Q_2$ is short for $Q_1\subseteq_S Q_2$ and
$Q_2\subseteq_S Q_1$.
\begin{lemma}
  \label{lemma:transforamtions-equivalence}
  Take a schema $S$ and two transformations $T_1$ and $T_2$ that are both trimmed
  modulo $S$. We have that $T_1\equiv_S T_2$ if and only if the following
  conditions are satisfied:
  \begin{enumerate}
  \item $\Gamma_{T_1}=\Gamma_{T_2}$ and $\Sigma_{T_1}=\Sigma_{T_2}$,
  \item $Q_A^{T_1}(\bar{x}) \equiv_S Q_A^{T_2}(\bar{x})$ for every
    $A\in \Gamma_{T_1}$,
  \item
    $Q_{A,R,B}^{T_1}(\bar{x},\bar{y}) \equiv_S Q_{A,R,B}^{T_2}(\bar{x},\bar{y})$
    for every $A,B\in \Gamma_{T_1}$, $R\in\Sigma_{T_1}$.
  \end{enumerate}
\end{lemma}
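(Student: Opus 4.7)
The plan is to prove both directions by exploiting the three key properties of node constructors: injectivity, pairwise disjoint ranges, and one dedicated constructor per node label.

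For the \emph{if} direction, assume the three conditions hold and fix any $G\in L(S)$. We show $T_1(G)=T_2(G)$ by comparing the interpretations label by label. By definition, $A^{T_i(G)} = \{f_A(t) \mid t\in [Q_A^{T_i}(\bar{x})]^G\}$, and crucially the constructor $f_A$ is the same function in both transformations (by the one-constructor-per-label assumption). Since $Q_A^{T_1}(\bar{x}) \equiv_S Q_A^{T_2}(\bar{x})$, we get $A^{T_1(G)}=A^{T_2(G)}$. An analogous argument handles each edge label using $Q_{A,r,B}^{T_1} \equiv_S Q_{A,r,B}^{T_2}$. Labels outside $\Gamma_{T_1}=\Gamma_{T_2}$ (resp.\ $\Sigma_{T_1}=\Sigma_{T_2}$) are interpreted as empty in both graphs, so equality follows.

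For the \emph{only if} direction, assume $T_1 \equiv_S T_2$. First I would establish $\Gamma_{T_1}=\Gamma_{T_2}$. Fix $A \in \Gamma_{T_1}$; since $T_1$ is trimmed modulo $S$, there is some productive node rule $A(f_A(\bar{x})) \leftarrow q(\bar{x}) \in T_1$, and thus a graph $G\in L(S)$ and a tuple $t$ with $t\in [q]^G$. Then $f_A(t) \in A^{T_1(G)}=A^{T_2(G)}$. Any rule of $T_2$ generating a node in $A^{T_2(G)}$ must use a constructor whose range contains $f_A(t)$; by disjointness of constructor ranges this forces the constructor to be $f_A$, and by the one-constructor-per-label assumption the rule must be an $A$-node rule, so $A\in\Gamma_{T_2}$. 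The reverse inclusion is symmetric, and $\Sigma_{T_1}=\Sigma_{T_2}$ follows by an analogous argument using trimmed edge rules.

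Next I would derive the query equivalences. For any $G\in L(S)$ and $t\in [Q_A^{T_1}]^G$, we have $f_A(t) \in A^{T_1(G)} = A^{T_2(G)}$, so some node rule of $T_2$ generated $f_A(t)$ from a tuple $t'$ with $t' \in [Q_A^{T_2}]^G$ and $f_A(t)=f_A(t')$; by injectivity of $f_A$, $t=t'$, so $t\in [Q_A^{T_2}]^G$. The reverse inclusion is symmetric, giving $Q_A^{T_1}\equiv_S Q_A^{T_2}$. For $Q_{A,r,B}^{T_1}\equiv_S Q_{A,r,B}^{T_2}$ the argument is analogous: a pair $(t,t') \in [Q_{A,r,B}^{T_1}]^G$ yields the edge $(f_A(t),f_B(t'))\in r^{T_1(G)}=r^{T_2(G)}$, and disjointness of constructor ranges forces the generating rule in $T_2$ to use precisely $f_A$ and $f_B$, while injectivity of $f_A$ and $f_B$ forces the witnessing tuple to equal $(t,t')$. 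The case $R=r^-$ reduces to the case $R=r$ by swapping tuple coordinates, in line with the definition of $Q_{A,r^-,B}^{T_i}$.

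The main subtlety, and the place where the assumptions on constructors do essential work, is inferring the shape of the generating rule in $T_2$ from the identity of the constructed node alone: disjoint ranges pin down which constructor was used, the one-per-label assumption pins down which label's rule fired, and injectivity recovers the exact argument tuple. Without any one of these three properties the argument breaks, and the direction from $T_1\equiv_S T_2$ to the query equivalences fails.
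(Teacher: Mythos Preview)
Your proof is correct and follows essentially the same approach as the paper. The only stylistic difference is that the paper proves the \emph{only if} direction by contraposition (if a condition fails, exhibit a graph on which the transformations disagree), whereas you prove it directly; in both cases the work is done by the trimmed assumption together with injectivity and disjoint ranges of the node constructors, and your write-up makes these dependencies more explicit than the paper's terse version.
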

\begin{proof}
  The \emph{if} part is trivial. We prove the \emph{only if} part by proving the
  contraposition: we show that if one of the conditions (1), (2), and (3) is not
  satisfied, then $T_1\nequiv_S T_2$.

  If (1) is not satisfied, then one of the transformations has at least one rule
  $\rho$ that generate a node or an edge with a label that is not employed by
  the other transformations. Since both transformations are trimmed, there exists
  an input graph $G$ such that the rule $\rho$ produces objects on the
  output. But then $T_1(G)\neq T_2(G)$.

  If (2) is not satisfied, then there is an input graph $G$ such that one of the
  transformations generates a node that the other does not. Hence,
  $T_1(G)\neq T_2(G)$.

  If (3) is not satisfied, then analogously, there is an input graph $G$ such
  that one of the transformations generates an edge that the other does
  not. Hence, $T_1(G)\neq T_2(G)$.
\end{proof}

\section{Rolling up queries}

We next show how to reduce the non-satisfaction of an acyclic UC2RPQ $Q$ to the
satisfaction of a \HornALCIF TBox $\T_{\lnot Q}$. The TBox $Q$ is basically a
recursive program that defines a collection of sets (monadic relations) of
nodes. We illustrate this construction with the following example.
\begin{example}
  \label{ex:rollingup}
  We take the following Boolean query. 
  \[
    Q_0=\exists x_0,x_1,x_2,x_3.\
    (a\cdot b^*\cdot c)(x_2,x_1)\ \land \ (A)(x_3,x_1)\ \land\ (a^-)(x_1,x_0). 
  \]
  We construct a TBox that essentially simulates automata for the regular
  expressions, which are presented in Figure~\ref{fig:tree-like-query-NDAs}.
  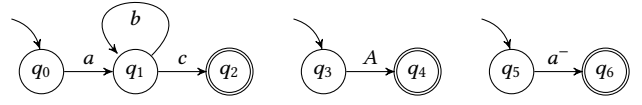
\begin{figure}[htb]
  \centering
  \begin{tikzpicture}[>=stealth']
    \path[use as bounding box] (-0,-0.2) rectangle (7.5,.75);
    \small 

    \begin{scope}[xshift=0cm]
    \node[circle,draw] (q5) at (0,0) {$q_0$};
    \node[circle,draw] (q6) at (1.25,0) {$q_1$} edge[<-] node[above] {$a$}(q5);
    \node[circle,draw,double] (q7) at (2.5,0) {$q_2$} edge[<-] node[above] {$c$}(q6);
    \draw (q6) edge[->,loop] node[below] {$b$} (q6);
    \node at (-.5,0.75) {} edge[bend left,->] (q5);
    \end{scope}

    \begin{scope}[xshift=3.75cm]
    \node[circle,draw] (q2) at (0,0) {$q_3$};
    \node[circle,draw,double] (q3) at (1.25,0) {$q_4$} edge[<-] node[above] {$A$}(q2);
    \node at (-.5,0.75) {} edge[bend left,->] (q2);
    \end{scope}

    \begin{scope}[xshift=6.25cm]
    \node[circle,draw] (q0) at (0,0) {$q_5$};
    \node[circle,draw,double] (q1) at (1.25,0) {$q_6$} edge[<-] node[above] {$a^-$}(q0);
    \node at (-.5,0.75) {} edge[bend left,->] (q0);
    \end{scope}
    
  \end{tikzpicture}
  \caption{\label{fig:tree-like-query-NDAs} Automata for regular expressions of $Q$.}
  \end{figure}

  \noindent
  The TBox $\T_{\lnot Q_0}$ consists of the following constraints.
    \begin{align*}
      & \top \sqsubseteq q_0\,,&
      & q_0 \sqsubseteq \forall a.q_1\,,&
      & q_1 \sqsubseteq \forall b.q_1\,,&
      & q_1 \sqsubseteq \forall c.q_2\,,&&\\
      & \top \sqsubseteq q_3\,,&
      & q_3 \sqcap A \sqsubseteq q_4\,,&
      & q_2 \sqcap q_4 \sqsubseteq q_5\,,&
      & q_5 \sqsubseteq \forall a^-.q_6\,,&
      & q_6 \sqsubseteq \bot\,.\tag*{\qed}
  \end{align*}
\end{example}
\noindent
$\T_{\lnot Q}$ introduces a set fresh node labels $\Gamma_Q$ and the
satisfaction $\T_{\lnot Q}$ is defined in terms of the existence of valuations
of symbols in $\Gamma_Q$. More precisely, given a graph $G$ over $\Gamma_0$ and
$\Sigma_0$ and a TBox $\T$ over $\Gamma_0\cup\Gamma_1$ and $\Sigma_0$, we say
that $G$ \emph{satisfies} $\T$ if and only if there is an interpretation
$\cdot^U:\Gamma_1\rightarrow \P(\dom(G))$ of symbols in $\Gamma_1$ such that
$G\cup U\models \T$.
\begin{lemma}
\label{lem:rollingup}
Given a Boolean acyclic UC2RPQs $Q$, one can compute in polynomial time a
\HornALCIF TBox $\T_{\lnot Q}$ and a reserved set of concept names $\Gamma_Q$
such that for every $G$ that does not use labels in $\Gamma_Q$, $G\not\models Q$
if and only if $G$ satisfies $\T_{\lnot Q}$.
\end{lemma}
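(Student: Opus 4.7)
The plan is to adapt the \emph{rolling-up} technique from description logic~\cite{HorrocksT00}. For each regular-expression atom of $Q$ I build an NFA, encode its states as fresh concept names, and propagate them through the input graph by universal role quantification and node-label intersection, exactly as illustrated in Example~\ref{ex:rollingup}; the acyclicity of the multigraph of each disjunct lets us glue the simulations for the individual atoms together in a tree-like fashion.

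For the union, write $Q = q_1 \lor \cdots \lor q_k$ and build $\T_{\lnot q_i}$ for each disjunct over a pairwise disjoint set of fresh concept names $\Gamma_{q_i}$; then take $\T_{\lnot Q} \colonequals \bigcup_i \T_{\lnot q_i}$ and $\Gamma_Q \colonequals \bigcup_i \Gamma_{q_i}$. Because the $\Gamma_{q_i}$ are disjoint, any interpretation witnessing satisfaction of $\T_{\lnot Q}$ decomposes into independent interpretations of the $\T_{\lnot q_i}$, so $G$ satisfies $\T_{\lnot Q}$ iff it satisfies every $\T_{\lnot q_i}$ iff $G \not\models q_i$ for all $i$ iff $G \not\models Q$. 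This reduces the problem to a single Boolean acyclic C2RPQ $q = \exists \bar x.\ \bigwedge_j \varphi_j(z_j, z_j')$, whose multigraph is a forest. I pick a sink variable $s$ per tree component and orient all atoms towards $s$, replacing $\varphi_j$ by its reverse regex (each letter $R \in \Sigma^\pm$ replaced by $R^-$ and concatenation reversed) whenever an atom needs to be flipped. For each variable $v$ I introduce a fresh candidate concept $C_v$ with $\top \sqsubseteq C_v$ for leaves (sources). For each oriented atom $\varphi_j(w, v)$ from a child $w$ to its parent $v$, I build an NFA $A_j$ with fresh state concepts $p_q^j$, and for each edge-labeled transition $q \xrightarrow{R} q'$, node-label transition $q \xrightarrow{A} q'$, and accept state $q_f$ I add
\begin{align*}
  C_w &\sqsubseteq p_{q_0^j}^j, & p_q^j &\sqsubseteq \forall R.\, p_{q'}^j,\\
  p_q^j \sqcap A &\sqsubseteq p_{q'}^j, & p_{q_f}^j &\sqsubseteq D_{v,j}.
\end{align*}
At each $v$ I then collect the incoming constraints via $\bigsqcap_{j \in J_v} D_{v,j} \sqsubseteq C_v$, where $J_v$ indexes the atoms incoming to $v$, and finally add $C_s \sqsubseteq \bot$ at each sink.

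Correctness follows by induction on the depth of the oriented tree: in the minimal Herbrand-style model, $u \in C_v$ iff the subtree rooted at $v$ admits a match sending $v$ to $u$. Hence $C_s$ is nonempty iff $q$ has a match, so $\T_{\lnot q}$ is satisfiable exactly when $G \not\models q$; the whole construction is plainly polynomial in $|q|$. The most delicate step is the multi-component case: naively placing $C_s \sqsubseteq \bot$ at each component's sink would encode ``every component is unsatisfied'' rather than the desired ``some component is unsatisfied'', and since \HornALCIF satisfaction is closed under disjoint union this mismatch is a genuine obstruction. I would address it by ensuring via the Boolean reduction of Lemma~\ref{lem:boolean-containment} that each disjunct of $Q$ is connected, so that the sink-based $\bot$ constraint captures the intended semantics directly.
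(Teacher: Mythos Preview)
Your construction is essentially the paper's: treat each connected acyclic Boolean C2RPQ as a rooted tree of 2RPQ atoms, simulate each regex by an NFA whose states become fresh concept names, propagate states by $\forall R$ and $\sqcap A$ rules, conjoin the ``reached final state'' concepts of all children at each internal variable, and place $\sqsubseteq\bot$ at the root; correctness follows from the least-model argument you sketch. The paper organises the same idea through the grammar $Q \coloncolonequals \varphi(Q,\ldots,Q)$ and writes the child-aggregation rule as $\bigsqcap\{q'\mid q'\in F_{\varphi_i}\}\sqsubseteq q$, but the content is identical.

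You are also right that the disconnected-disjunct case is a genuine obstruction, and your closure-under-disjoint-union observation is exactly the reason no single \HornALCIF TBox can capture $\lnot\big(q^{(1)}\wedge q^{(2)}\big)$ when the two components are independent. However, your proposed fix does not work: the reduction in Lemma~\ref{lem:boolean-containment} attaches a \emph{separate} pendant $\exists y.\,(X_i\cdot r_i)(y,x_i)$ to each free variable $x_i$, so components of a disjunct that were disjoint before remain disjoint afterwards, and a component containing no free variable is not touched at all. (Using a single shared hub $y$ would connect the free variables but may create a cycle through any existing atom $\varphi(x_i,x_j)$, destroying acyclicity.) The paper's own proof glosses over the same point, asserting that one ``takes the union of the desired TBoxes of all connected components of the union'' without explaining how a conjunction of components is handled. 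The honest resolution is to state and prove the lemma only for UC2RPQs whose disjuncts are connected; this is precisely what both your argument and the paper's actually establish, and it is what the downstream uses need once one checks that the right-hand sides arising in Section~\ref{sec:transformations} can be brought into that form.
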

\begin{proof}
We prove the lemma for queries that are Boolean C2RPQs that are acyclic and
connected. The claim extends to unions of Boolean acyclic C2RPQs in a
straightforward fashion: it suffices to take the union of the desired TBoxes of
all connected components of the union. Consequently, the query can be seen as a tree and
we assume that it is defined with the following grammar:
\[
  Q \coloncolonequals \varphi(Q,\ldots,Q),
\]
where $\varphi$ is a two-way regular expression over $\Sigma$ and $\Gamma$. For
instance, the query from Example~\ref{ex:rollingup} is represented as
$Q_0 = a^-(A,a\cdot b^*\cdot c)$. We express the semantics of such defined
queries as the set of all nodes that satisfy it.
\[
  [\varphi(Q_1,\ldots,Q_k)]^G = \{
  u \in \dom(G) \mid
  \exists v.\
  (v,u)\in[\varphi]_G,\
  v \in\textstyle\bigcap_i [Q_i]^G
  \}.
\]
Naturally, a graph $G$ satisfies $Q$ iff $[Q]^G\neq\emptyset$.

Now, fix an acyclic Boolean C2RPQ $Q$ and let $\Phi$ be the set of all two-way
regular expressions used in $Q$. For any $\varphi\in\Phi$ by
$N_\varphi=(K_\varphi,I_\varphi,\delta_\varphi,F_\varphi)$ we denote an
$\epsilon$-free NDA over the alphabet $\Sigma\cup\Gamma$ that recognizes
$\varphi$, where $K_\varphi$ is a finite set of states,
$I_\varphi\subseteq K_\varphi$ is the set of initial states,
$F_\varphi\subseteq K_\varphi$ is the set of final states, and
$\delta_\varphi\subseteq K_\varphi\times(\Sigma\cup\Gamma)\times K_\varphi$ is
the transition table. We assume that the size of $N_\varphi$ is polynomial in
the size of the expression $\varphi$ (such automaton can be obtained for
instance with the standard Glushkov technique). We also assume that the sets of
states are pair-wise disjoint.

The set of additional node labels consists of the states of automata:
$\Gamma_Q=\bigcup_\varphi K_\varphi$. The constructed TBox consists of two
subsets of rules: $\T_{\lnot Q}=\T_1\cup\T_0$. The set $\T_1$ encodes
transitions of the automata that simulate their execution.
\begin{enumerate}
\item[(1)] For every $\varphi$ and every $(q,R,q')\in\delta_\varphi$ such that
  $R\in\Gamma^\pm$, $\T_1$ contains
  $q \sqsubseteq \forall R . q'$;
\item[(2)] For every $\varphi$ and every $(q,A,q')\in\delta_\varphi$ such that
  $A\in\Sigma$, $\T_1$ contains $q \sqcap A \sqsubseteq q'$;
\item[(3)] For every node $\varphi$ of $Q$ with children
  $\varphi_1,\ldots,\varphi_k$, every $q\in I_\varphi$, $\T_1$ contains
  $\textstyle\bigsqcap\{q'\mid q' \in F_{\varphi_i},\ 1\leq i \leq k\}
  \sqsubseteq q$.  Note that when $\varphi$ is a leaf of $Q$, then $\T_1$
  contains $\top \sqsubseteq q$ for every $q\in I_\varphi$.
\end{enumerate}
The set $\T_0$ contains denial rules that ensure lack of valid run.
\begin{enumerate}
\item[(4)] For every $q\in F_\varphi$ of the root $\varphi$ of $Q$, 
  $\T_0$ contains $q \sqsubseteq \bot$;
\end{enumerate}
\noindent
Now, we fix a graph $G$ whose node labels do not use any symbol in $\Gamma_Q$. We
first argue that there is a unique minimal interpretation
$U_0:\Gamma_Q\rightarrow\P(\dom(G))$ such that $G\cup U_0\models\T_1$. Indeed, since
the rules are Horn-like, an intersection of two models of $\T_1$ is also a model
of $\T_1$.

Next, we prove the main claim with an inductive argument which requires defining
subqueries of $Q$. For $\varphi\in\Phi$ and $q\in K_\varphi$ by $Q_q$ we denote
the query $\psi(Q_1,\ldots,Q_k)$, where $Q_1,\ldots,Q_k$ are children of
$\varphi$ in $Q$ and $\psi$ is the two-way regular expression corresponding to
the automaton $M_{\varphi,q}=(K_\varphi,I_q,\delta_\varphi,\{q\})$ (essentially,
we make $q$ the only final state).  We claim that for any $\varphi\in\Phi$, any
$q\in K_\varphi$, and any $u\in N_G$ we have
\[
  u \in [Q_q]^G \iff u\in q^{U_0}.
\]
In essence, the unary predicate $q$ identifies all nodes at which the subquery
$Q_q$ is satisfied. We prove the above claim with double induction: firstly over
the height of the subquery $Q_q=\psi(Q_1,\ldots,Q_k)$, and secondly, over the
length of the witnessing path for $(v,u)\in[\psi]^G$ such that
$v\in\bigcap_i [Q_i]^G$.

If we let $I_\varphi=\{q_1,\ldots,q_k\}$, then $Q$ is equivalent to the union of
$Q_{q_1}\cup \ldots \cup Q_{q_k}$. Consequently, $Q$ is satisfied at a node
$u\in N_G$ iff $u\in {q_i}^{U_0}$ for some $i\in\{1,\ldots,k\}$. As such, $Q$ is
not satisfied at any node of $G$ if and only if
$U_0\models q_i \sqsubseteq \bot$ for every $i\in\{1,\ldots,k\}$ i.e.,
$U_0\models\T_0$. We finish the proof by observing that if the minimal model
$U_0$ does not satisfy $\T_0$, then none of supersets of $U_0$ does.
\end{proof}

\section{Proofs for Containment}
\label{sec:app-containment}

\begin{lemma}
\label{lem:boolean-containment}
Given a schema $S$, a UC2RPQ $P(\bar{x})$, and an acyclic UC2RPQ $Q(\bar{x})$, one can compute in polynomial time a schema $S^\circ$, a Boolean UC2RPQ $P^\circ$, and a Boolean acyclic UC2RPQ $Q^\circ$ such that $P(\bar{x})\subseteq_S Q(\bar{x})$ iff $P^\circ \subseteq_{S^\circ} Q^\circ$.
\end{lemma}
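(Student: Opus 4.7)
The plan is to turn free variables into existentials by anchoring each $x_i$ to a freshly introduced distinguished node reachable by a freshly introduced edge label. Concretely, for $\bar{x} = (x_1, \ldots, x_n)$ I would introduce fresh pairwise distinct node labels $D_1, \ldots, D_n$ and fresh edge labels $r_1, \ldots, r_n$, and define
\begin{align*}
P^\circ & = \exists \bar{x}, \bar{y}.\, P(\bar{x}) \land \textstyle\bigwedge_i D_i(y_i) \land r_i(y_i, x_i)\,,\\
Q^\circ & = \exists \bar{x}, \bar{y}.\, Q(\bar{x}) \land \textstyle\bigwedge_i D_i(y_i) \land r_i(y_i, x_i)\,.
\end{align*}
The schema $S^\circ$ would extend $S$ by these labels, inheriting all of $S$'s participation constraints, setting $\delta_{S^\circ}(D_i, r_i, A) = \MANY$ and $\delta_{S^\circ}(A, r_i^-, D_i) = \MANY$ for each $A \in \Gamma_S$, and forcing all remaining new constraints to $\NONE$ (so that $D_i$-nodes carry no $\Sigma_S^\pm$-edges and no $r_j$-edges for $j\neq i$, $\Gamma_S$-nodes carry no $r_i$-edges amongst themselves, and anchors do not connect to each other). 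The construction is polynomial. Acyclicity of $Q^\circ$ is immediate: each new variable $y_i$ is a fresh leaf of the multigraph since it appears in a single non-trivial atom $r_i(y_i, x_i)$, with $D_i(y_i)$ being trivial.

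For the direction $(\Rightarrow)$ of the equivalence I would take $G' \in L(S^\circ)$ with $G' \models P^\circ$, restrict it to labels of $S$ to obtain a graph $G$, and argue $G \in L(S)$ from the fact that $S^\circ$ agrees with $S$ on $\Gamma_S$-typed nodes and $\Sigma_S^\pm$-edges and that removing $D_i$-nodes does not affect $\Sigma_S^\pm$-cardinalities (since those nodes carry no such edges). Because $P$ uses only labels from $\Gamma_S \cup \Sigma_S^\pm$, witnessing paths for its atoms cannot traverse $D_i$-nodes and so localise in $G$, yielding a tuple $t \in [P]^G$; the atoms $r_i(y_i,x_i)$, combined with the $\NONE$-constraints, force each $t_i$ to be a $\Gamma_S$-node. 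By hypothesis $t \in [Q]^G$, and the corresponding match lifts back to $G'$ while reusing the original $y_i$ values, giving $G' \models Q^\circ$. For the converse direction, given $G \in L(S)$ and $t \in [P]^G$ I would attach $n$ fresh anchors $v_1, \ldots, v_n$ labeled $D_1, \ldots, D_n$, each with a single $r_i$-edge from $v_i$ to $t_i$, to obtain $G' \in L(S^\circ)$ satisfying $P^\circ$. The hypothesis $G' \models Q^\circ$ then forces any witnessing match to send $y_i$ to the unique $D_i$-node $v_i$ and $x_i$ to the unique $r_i$-successor $t_i$; since $Q$ uses only labels from $\Gamma_S \cup \Sigma_S^\pm$, its witnessing paths stay in $G$, yielding $t \in [Q]^G$.

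The main obstacle is calibrating the participation constraints of $S^\circ$ precisely enough so that (a) every conforming $G'$ decomposes cleanly as a schema-conforming subgraph over $(\Gamma_S, \Sigma_S)$ plus an anchoring structure that cannot interfere with matches of $P$ or $Q$, and (b) the anchoring is flexible enough to represent any candidate tuple. Setting most new constraints to $\NONE$ addresses (a), while keeping $\delta_{S^\circ}(D_i, r_i, A)=\MANY$ for every $A \in \Gamma_S$ handles (b). Once the schema is arranged this way, the single-label-per-node discipline of the schema formalism keeps the anchoring and original structures separated, and the equivalence of containments reduces to the two routine transfers sketched above.
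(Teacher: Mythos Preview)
Your proposal is correct and follows essentially the same approach as the paper: introduce one fresh node label and one fresh edge label per free variable, anchor each $x_i$ via an atom $(D_i\cdot r_i)(y_i,x_i)$, and extend the schema so that anchor nodes carry only their dedicated $r_i$-edge into $\Gamma_S$-nodes. The only cosmetic difference is that the paper uses $\MAYBE$ where you use $\MANY$ on the anchor edges, which is immaterial for the argument; your handling of the acyclicity of $Q^\circ$ and the two transfer directions is sound.
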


\begin{proof}
  Let $\bar x = (x_1, x_2, \dots, x_n)$ and let
  $\Gamma_S=\{A_1,\ldots,A_k\}$. We take a fresh node labels
  $X_1,\ldots,X_n\not\in\Sigma_S$ and fresh edge labels
  $r_1, r_2, \dots, r_n\not\in\Sigma_S$. The schema $S^\circ$ is obtained from
  $S$ as follows:
  \begin{align*}
    & \Gamma_{S^\circ} = \Gamma_S \cup \{A_0\},\\
    & \Sigma_{S^\circ} = \Sigma_S \cup \{r_1,\ldots,r_n\},\\
    & \delta_{S^\circ}(A,R,B) =
      \begin{cases}
        \delta_S(A,R,B) &\text{if $A,B\in\Gamma_S$ and $R\in\Sigma_S^\pm$},\\
        \MAYBE & \text{if $A=X_i$, $R\in\{r_i,r_i^-\}$, and $B\in\Gamma_S$},\\
        \NONE &\text{otherwise.}
      \end{cases}
  \end{align*} 
  Now, the queries $P^\circ$ and $Q^\circ$ are obtained from $P(\bar{x})$ and
  $Q(\bar{x})$ by quantifying existentially $x_1, x_2, \dots, x_n$ and also
  adding atoms $\exists y. (X_i\cdot r_i)(y,x_i)$ for every
  $i\in\{1,\ldots,n\}$. It is routine to check that
  $P(\bar{x})\subseteq_S Q(\bar{x})$ if and only if
  $P^\circ \subseteq_{S^\circ} Q^\circ$. There are two key facts. Firstly, 2RPQs
  in $P$ and $Q$ do not use labels $r_1, r_2, \dots, r_n$ (nor wildcards) and
  consequently cannot traverse edges with such labels. Secondly, the schema
  $S^\circ$ ensures that the original regular expression can be witnessed only
  by paths that begin and end in nodes with labels in $\Sigma_S$ only.
\end{proof}

\begin{corollary}
  \label{cor:boolean-rpq-containement}
  Given a schema $S$, two unary acyclic 2RPQs $p(x)$ and $q(x)$, one can compute
  in polynomial time a schema $S^\circ$ and Boolean 2RPQs $p^\circ$ and
  $q^\circ$ such that $p(x)\subseteq_S q(x)$ iff
  $p^\circ \subseteq_{S^\circ} p^\circ$.
\end{corollary}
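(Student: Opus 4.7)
The plan is to invoke Lemma~\ref{lem:boolean-containment} applied to $P(x) = \{p(x)\}$ and $Q(x) = \{q(x)\}$, viewed as (acyclic) UC2RPQs with a single free variable, and then to specialise the output to pass from a pair of Boolean UC2RPQs to a pair of Boolean 2RPQs (single atoms).

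First, I would instantiate the construction in the proof of Lemma~\ref{lem:boolean-containment} with $n=1$. This produces, in polynomial time, a schema $S^\circ$ obtained from $S$ by adding one fresh node label $X$ and one fresh edge label $r$ (with the participation constraints prescribed in that proof), together with Boolean UC2RPQs
\[
P^\circ \;=\; \exists x, y.\ p(x) \wedge (X \cdot r)(y,x)
\quad\text{and}\quad
Q^\circ \;=\; \exists x, y.\ q(x) \wedge (X \cdot r)(y,x).
\]
By Lemma~\ref{lem:boolean-containment}, $p(x) \subseteq_S q(x) \iff P^\circ \subseteq_{S^\circ} Q^\circ$.

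The remaining step is to observe that, because $p$ is a \emph{single} 2RPQ atom, the two-atom conjunction defining $P^\circ$ collapses into a single atom. If $p(x) = \varphi(x,z)$ with $z$ existentially quantified (or identified with $x$), then $P^\circ$ is equivalent to the Boolean 2RPQ
\[
p^\circ \;=\; \exists y, z.\ (X \cdot r \cdot \varphi)(y,z),
\]
obtained by concatenating the two regular expressions at the shared variable $x$. The other possible shapes of a unary 2RPQ are handled symmetrically: if $p(x) = \varphi(z,x)$ one uses $X \cdot r \cdot \varphi^-$; if $p(x)$ is a trivial unary atom of the form $A(x)$ or $\epsilon(x)$ one uses $X \cdot r \cdot A$ or $X \cdot r$, respectively. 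The same transformation produces $q^\circ$ from $Q^\circ$. Acyclicity is preserved trivially since each resulting query has a single atom.

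Correctness then follows directly from Lemma~\ref{lem:boolean-containment} together with the semantic equivalences $P^\circ \equiv p^\circ$ and $Q^\circ \equiv q^\circ$. The whole construction is polynomial in $|S| + |p| + |q|$, since it only augments $S$ by two labels with a constant number of participation constraints and prepends a prefix of length two to each regular expression. The only mild subtlety is the case analysis on the shape of a unary 2RPQ, which is routine; no significant obstacle arises.
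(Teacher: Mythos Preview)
Your proposal is correct and follows essentially the same approach as the paper: instantiate the construction of Lemma~\ref{lem:boolean-containment} with $n=1$, then collapse the resulting two-atom conjunction into a single 2RPQ atom by concatenating at the shared variable, with a case split on the orientation of the unary 2RPQ. The paper's proof is terser (it writes $p^\circ = r_1\cdot\varphi$ or $p^\circ = \varphi\cdot r_1^-$, omitting the explicit $X$-label test since the schema already confines $r_1$-edges to originate at $X_1$-nodes), but the idea is identical.
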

\begin{proof}
  The construction of $S^\circ$ is as in Lemma~\ref{lem:boolean-containment} and
  the construction of Boolean RPQs depends on the form of the unary RPQ: 1) if
  $p(x_1)=\exists x_2. \varphi(x_1,x_2)$, then $p^\circ = r_1\cdot \varphi$ and
  2) if $p(x_1)=\exists x_2. \varphi(y,x)$, then $p^\circ = \varphi\cdot r_1^-$;
  $q^\circ$ is constructed in the same way.
\end{proof}

\begin{lemma}\label{lem:labels}
$P \subseteq_S Q$ iff $\widehat P \subseteq_{\widehat\T_S} Q$.

\end{lemma}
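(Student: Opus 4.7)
The TBox $\widehat \T_S$ captures every consequence of $S$-conformance except the non-Horn axiom \emph{``every node carries at least one label from $\Gamma_S$''}: the disjointness clauses $A \sqcap B \sqsubseteq \bot$ enforce \emph{at most} one $\Gamma_S$-label per node, while $\T_S$ enforces the participation constraints. The rewriting from $P$ to $\widehat P$ compensates by pushing the missing content into the query: the insertions $(A_1+\dots+A_n)$ surrounding each edge label force every witnessing path to visit only nodes carrying a $\Gamma_S$-label, and substituting $\varnothing$ for labels outside $\Gamma_S \cup \Sigma_S^\pm$ bars witnesses from using labels foreign to $S$.

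For $(\Leftarrow)$, I would pick $G \in L(S)$ with $G \models P$. Since $G$ conforms to $S$, it satisfies $\T_S$ and each node carries exactly one $\Gamma_S$-label, so the disjointness axioms hold and $G \models \widehat \T_S$. Moreover, any match of $P$ in $G$ is automatically a match of $\widehat P$: the wrappings $(A_1+\dots+A_n)$ are satisfied at every node of $G$, and the substitutions $L \mapsto \varnothing$ only concern labels that do not occur in $G$ at all, so they cannot kill any existing witness. The hypothesis $\widehat P \subseteq_{\widehat \T_S} Q$ then yields $G \models Q$.

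For $(\Rightarrow)$, I would assume $P \subseteq_S Q$ and pick a finite $G \models \widehat \T_S$ with $G \models \widehat P$, then define $G'$ by restricting $G$ to the nodes that carry some label from $\Gamma_S$, keeping only the labels in $\Gamma_S \cup \Sigma_S$. The central step is verifying $G' \in L(S)$: every node of $G'$ carries exactly one $\Gamma_S$-label thanks to the disjointness axioms of $\widehat \T_S$; the ``at most'' and ``none'' participation constraints are preserved because $G'$ is a subgraph of $G$; and each existential axiom $A \sqsubseteq \exists R. B$ of $\T_S$ is preserved because the $B$-labeled successor it mandates in $G$ is automatically retained by $G'$, as $B \in \Gamma_S$. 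Any match of $\widehat P$ in $G$ actually lives inside $G'$: the wrappings force every traversed node to have a $\Gamma_S$-label, and the $\varnothing$ substitutions forbid using edge labels outside $\Sigma_S^\pm$; within $G'$, matching $\widehat P$ and matching $P$ coincide, so $G' \models P$. The hypothesis gives $G' \models Q$, and monotonicity of C2RPQs under subgraph inclusion lifts this to $G \models Q$. The main obstacle is precisely this verification that $G'$ continues to satisfy $\T_S$; it rests on the crucial syntactic fact that every axiom of $\T_S$ mentions only labels in $\Gamma_S \cup \Sigma_S^\pm$, so any successor whose existence the axioms mandate is guaranteed to survive the restriction.
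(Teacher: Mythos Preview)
Your proof is correct and follows essentially the same approach as the paper's: the paper argues by contrapositive (a counterexample to one side yields a counterexample to the other) while you argue directly, but the underlying construction---restricting to nodes carrying a $\Gamma_S$-label and verifying that the existential constraints of $\T_S$ survive because their witnesses are themselves $\Gamma_S$-labeled---is the same. Your explicit identification of this last point as the ``main obstacle'' matches the paper's remark that concept inclusions requiring a witnessing neighbour ``specify the label of this neighbour'' and are therefore unaffected.
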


\begin{proof}
Each finite graph falsifying the left-hand side condition falsifies the right-hand side condition as well. For the converse, let $G$ be a finite graph falsifying the right-hand side condition. Without loss of generality we can assume that only labels from $\Gamma_S\cup\Sigma_S$ are used  in $G$. Let $G'$ be obtained by dropping all nodes without a label, as well as edges incident with these nodes. Because all concept inclusions in $\widehat\T_S$ that require a witnessing neighbour specify the label of this neighbour, they are not affected by this modification. Other concept inclusions are always preserved when passing to a subgraph. It follows that $G'$ conforms to $S$. The RPQs in $\widehat P$ can only traverse nodes with a label from $\Gamma_S$, so $\widehat P$ is still satisfied in $G'$. Then, $P$ is satisfied as well. $Q$ is not satisfied in $G'$, because $G'$ is a subgraph of $G$.  
\end{proof}

\begin{lemma}
\label{lem:finmod}
$\widehat P$ is finitely satisfiable modulo $\widehat \T_S \cup\T_{\lnot\Q}$ 
iff $\widehat P$ is satisfiable modulo $\big(\widehat\T_S\cup\T_{\lnot\Q}\big)^*$. 
\end{lemma}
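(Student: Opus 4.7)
The plan is to apply Theorem~\ref{thm:completion} directly, bridging finite satisfaction of $\widehat \T_S \cup \T_{\lnot Q}$ on the one hand and (possibly infinite) satisfaction of its completion on the other. Theorem~\ref{thm:completion} is stated exactly in the currency I need: finite models vs.\ possibly infinite models, with both containing a prescribed finite subgraph $H$. What remains is to repackage ``satisfies $\widehat P$'' as ``contains some finite subgraph $H$ of a fixed shape''.

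To that end, I would first spell out the finite-witness property of UC2RPQs: satisfaction of $\widehat P$ is monotone under supergraph inclusion and witnessed on a finite subgraph. Concretely, if a graph $G$ satisfies $\widehat P$, one may pick a disjunct $p$ of $\widehat P$ satisfied in $G$, a match of $p$ in $G$, and, for every atom of $p$, a finite path in $G$ witnessing the regular expression. The union $H$ of these paths together with the image of the match is a finite subgraph of $G$, and any graph containing $H$ as a subgraph already satisfies $p$, hence $\widehat P$.

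With this observation in hand, the lemma reduces to two symmetric invocations of Theorem~\ref{thm:completion}. For the forward direction, a finite model $G$ of $\widehat \T_S \cup \T_{\lnot Q}$ with $G\models \widehat P$ yields a finite witness $H\subseteq G$; Theorem~\ref{thm:completion} provides a possibly infinite model $G'$ of $\bigl(\widehat\T_S\cup \T_{\lnot Q}\bigr)^*$ containing $H$, so $G'\models \widehat P$ by monotonicity. For the converse, a model $G'$ of $\bigl(\widehat\T_S\cup \T_{\lnot Q}\bigr)^*$ satisfying $\widehat P$ similarly yields a finite $H\subseteq G'$; Theorem~\ref{thm:completion} provides a finite model $G$ of $\widehat \T_S \cup \T_{\lnot Q}$ containing $H$, and again $G\models \widehat P$. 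I do not anticipate a genuine obstacle here: everything hinges on the finite-witness property, which is standard for UC2RPQs, and on Theorem~\ref{thm:completion} being phrased in terms of actual subgraph containment (rather than, say, a homomorphic image), so that the witness $H$ travels through both directions intact.
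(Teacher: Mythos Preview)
Your proposal is correct and matches the paper's proof almost exactly: both directions are obtained by extracting a finite witness for $\widehat P$ and invoking Theorem~\ref{thm:completion}. The only difference is cosmetic: in the forward direction the paper skips the extraction step and applies Theorem~\ref{thm:completion} directly with $H=G$, since $G$ is already finite.
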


\begin{proof}
Suppose that $\widehat P$ is satisfied in a finite model $G$ of $\widehat \T_S \cup\T_{\lnot Q}$. By Theorem~\ref{thm:completion}, there is a (possibly infinite) model of $\big(\widehat \T_S \cup\T_{\lnot Q}\big)^*$ containing $G$ as a subgraph. This model obviously satisfies $\widehat P$.

Conversely, suppose that there is a possibly infinite graph $G$ satisfying $\widehat P$ and $\left(\widehat \T_S \cup\T_{\lnot Q}\right)^*$. Let $p$ be the disjunct of $\widehat P$ that is satisfied in $G$. Let $H$ be the image of $p$ in $G$, including a finite witnessing path for each RPQ.  Note that $H$ is finite. By Theorem~\ref{thm:completion}, there is a finite model of $\widehat \T_S \cup\T_{\lnot\Q}$ containing $H$ as a substructure. This models satisfies $\widehat P$ as well. 
\end{proof}

\begin{lemma}
Every $S$-driven TBox $\T$ can be simplified in polynomial time so that it contains at most  $|\Sigma^\pm_S|\cdot|\Gamma_S|^2$ 
at-most constraints. 
\end{lemma}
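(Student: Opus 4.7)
My plan is to carry out the simplification by simply removing every at-most CI in $\T$ that is not already of the canonical form $A \sqsubseteq \exists^{\leq 1} R.A'$ with $A,A' \in \Gamma_S$ and $R \in \Sigma_S^\pm$. A single linear scan of $\T$ suffices, so the procedure runs in polynomial time. The surviving at-most CIs are indexed by triples $(A,R,A') \in \Gamma_S \times \Sigma_S^\pm \times \Gamma_S$, so their number is at most $|\Sigma_S^\pm| \cdot |\Gamma_S|^2$, matching the claimed bound.

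The substantive part is to verify that removing each complex at-most CI $\phi = K \sqsubseteq \exists^{\leq 1} R.K'$ preserves the set of models of $\T$. I would split into the two cases suggested by the $S$-drivenness assumption. If $\phi$ is relevant for $\T$, then by $S$-drivenness there exist $A \in K \cap \Gamma_S$ and $A' \in K' \cap \Gamma_S$ such that the simple CI $\psi = A \sqsubseteq \exists^{\leq 1} R.A'$ already lies in $\T$. Since $K$ and $K'$ are conjunctions containing $A$ and $A'$ respectively, $K^G \subseteq A^G$ and $(K')^G \subseteq (A')^G$ hold in every graph $G$. Thus for every $u \in K^G \subseteq A^G$, $\psi$ bounds the number of $R$-successors of $u$ in $(A')^G$ by one, hence also the number in $(K')^G \subseteq (A')^G$. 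So $\psi \models \phi$, and removing $\phi$ does not affect the models of $\T$.

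The main obstacle is the irrelevant case, in which no model of $\T$ contains a node in $K$ with any $R$-successor in $K'$, so $\phi$ holds vacuously in every model of $\T$; one must verify that this vacuity persists after removing $\phi$. The key observation is that an at-most CI cannot render a triple unsatisfiable on its own, since it only bounds the multiplicity of witnesses without forbidding them. Hence any model of $\T \setminus \{\phi\}$ exhibiting $u \in K$ with exactly one $R$-successor in $K'$ would already satisfy $\phi$ and thus be a model of $\T$, contradicting irrelevance; this immediately rules out witnesses with a single successor. Witnesses with two or more successors are the trickier subcase, which I would handle via the universal-model property of the Horn fragment \HornALCIF: any such witness traces back, through the canonical chase equipped with the identifications forced by the existing $\exists^{\leq 1}$ constraints, to a witness in a model of $\T$, again contradicting irrelevance. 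This closes the case analysis and completes the proof; the relevant case is routine, while the irrelevant case is where the subtle model-theoretic work concentrates.
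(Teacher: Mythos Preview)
Your approach in the relevant case is correct and matches the paper. The gap is in the irrelevant case: simply deleting the at-most CI $\phi = K \sqsubseteq \exists^{\leq 1} R.K'$ need not preserve the models of $\T$. Your ``key observation'' that an at-most CI cannot render a triple unsatisfiable on its own is false, because $\phi$ can force identifications that, combined with other constraints, yield a contradiction. Concretely, take $\Gamma_S=\{A\}$, $\Sigma_S=\{r\}$, auxiliary concept names $B,C,D\notin\Gamma_S$, and
\[
\T=\bigl\{\,B\sqsubseteq\exists r.\,B\sqcap C,\ B\sqsubseteq\exists r.\,B\sqcap D,\ C\sqcap D\sqsubseteq\bot,\ \underbrace{B\sqsubseteq\exists^{\leq 1} r.\,B}_{\phi}\,\bigr\}.
\]
In any model of $\T$, a $B$-node would need an $r$-successor in $B\sqcap C$ and one in $B\sqcap D$; $\phi$ identifies them, contradicting $C\sqcap D\sqsubseteq\bot$. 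So no model of $\T$ has a $B$-node, all existential and at-most CIs in $\T$ are irrelevant, and $\T$ is vacuously $S$-driven. Yet $\T\setminus\{\phi\}$ does have models with $B$-nodes (an infinite tree where each $B$-node has two distinct $r$-successors labelled $B\sqcap C$ and $B\sqcap D$). Thus $\T$ and $\T\setminus\{\phi\}$ are not equivalent. Your universal-model argument cannot rescue this: the chase of $\T\setminus\{\phi\}$ lacks precisely the identification that $\phi$ would enforce, so the witness it produces does not transfer to any model of $\T$.

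The paper handles the irrelevant case differently: it \emph{replaces} $\phi$ by the stronger $K\sqsubseteq\nexists R.K'$, which is entailed by $\T$ (irrelevance says exactly that no model of $\T$ realizes the triple) and in turn entails $\phi$. This yields an equivalent TBox with the at-most CI eliminated.
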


\begin{proof}
To achieve this, for each such CI  of the form $K \sqsubseteq \exists^{\leq 1} R. K'$ in $\T$ we do one of the following.
\begin{itemize}
    \item If $\T$ contains $A \sqsubseteq \exists^{\leq 1} R. A'$ for some $A,A'\in \Gamma_S$ such that $A\in K$ and $A'
    \in K'$, then simply remove $K \sqsubseteq \exists^{\leq 1} R. K'$ from $\T$. This is correct because $A \sqsubseteq \exists^{\leq 1} R. A' \models  K \sqsubseteq \exists^{\leq 1} R. K'$.
    \item Otherwise, because $\T$ is $S$-driven, it follows that the triple $(K, R, K')$ is not satisfiable modulo $\T$. That is, $\T \models K \sqsubseteq \not\exists R.K'$. Since $K \sqsubseteq \not\exists R.K' \models K \sqsubseteq \exists^{\leq 1} R. K'$, we can safely replace $K \sqsubseteq \exists^{\leq 1} R. K'$ with $K \sqsubseteq \not\exists R.K'$ in $\T$.
\end{itemize}
The resulting TBox $\T'$ is equivalent to $\T$ and it only contains at-most constraints involving single concept names from $\Gamma_S$. The number of those is clearly bounded by $|\Sigma^\pm_S|\cdot|\Gamma_S|^2$. 
\end{proof}

\begin{lemma}
\label{lem:schema-driven}
Let $\T$ be an $S$-driven \HornALCIF TBox that was obtained from $\widehat \T_S \cup \T_{\lnot Q}$ by reversing some finmod cycles. 
For every satisfiable finmod cycle \[K_1, R_1, \dots,  K_{n-1}, R_{n-1}, K_n\] in $\T$ there exist unique $A_1, A_2, \dots, A_n \in \Gamma_S$ such that $A_i\in K_i$ for all $i\leq n$, and \[A_1, R_1, \dots,  A_{n-1}, R_{n-1}, A_n\] is a finmod cycle in $\T$ 
\end{lemma}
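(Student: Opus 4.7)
The plan is to extract, for each index $i$, a schema label $A_i \in \Gamma_S$ that belongs to $K_i$ by tracing how the existential edge of the cycle is produced in a witnessing model, and then to invoke $S$-drivenness at each step to build the schema-level cycle.

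I would start by fixing a model $G$ of $\T$ that realizes the cycle, with nodes $v_1, \dots, v_n$ such that $v_i \in K_i^G$ and $(v_i, v_{i+1}) \in R_i^G$ for $1 \leq i < n$. For each $i$, the entailment $\T \models K_i \sqsubseteq \exists R_i. K_{i+1}$ must be witnessed in the canonical Horn model of $K_i$ modulo $\T$ by some CI $L \sqsubseteq \exists R_i. L'$ already present in $\T$ with $\T \models K_i \sqsubseteq L$ and $\T \models L' \sqsubseteq K_{i+1}$. This CI is relevant for $\T$ since the cycle is satisfiable, so by $S$-drivenness $\T$ must also contain a schema-level CI $A_i \sqsubseteq \exists R_i. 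A'_{i+1}$ with $A_i \in L \cap \Gamma_S$ and $A'_{i+1} \in L' \cap \Gamma_S$; in particular $\T \models K_i \sqsubseteq A_i$. I would then apply a symmetric analysis to the at-most entailment $\T \models K_{i+1} \sqsubseteq \exists^{\leq 1} R_i^-. K_i$, producing a matching schema-level at-most CI $A'_{i+1} \sqsubseteq \exists^{\leq 1} R_i^-. A_i$ in $\T$.

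For uniqueness, I would use the disjointness axioms $A \sqcap B \sqsubseteq \bot$ (for distinct $A, B \in \Gamma_S$) contained in $\widehat\T_S \subseteq \T$: a satisfiable $K_i$ admits at most one $\Gamma_S$-conjunct. Combined with the existence argument above, this pins down a unique $A_i$. To assemble the schema labels into a finmod cycle $A_1, R_1, \dots, A_{n-1}, R_{n-1}, A_n$, one has to identify $A'_{i+1}$ obtained at step $i$ with the label $A_{i+1}$ obtained at step $i+1$; this follows because the node $v_{i+1}$ in $G$ carries both $A'_{i+1}$ and $A_{i+1}$, and the disjointness axioms force equality. The condition $A_n = A_1$ then follows from $K_n = K_1$, and the required entailments $\T \models A_i \sqsubseteq \exists R_i. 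A_{i+1}$ and $\T \models A_{i+1} \sqsubseteq \exists^{\leq 1} R_i^-. A_i$ are direct from the schema-level CIs just produced.

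The delicate point I expect to wrestle with is reconciling the semantic conclusion $\T \models K_i \sqsubseteq A_i$ with the syntactic requirement $A_i \in K_i$ appearing in the lemma. I would handle this by tacitly closing each $K_i$ under its $\Gamma_S$-consequences modulo $\T$: replacing $K_i$ by the equivalent conjunction $K_i \sqcap A_i$ preserves all the entailments used in the finmod cycle definition (on both sides), so under the standard identification of conjunctions with sets of concept names up to $\T$-equivalence the statement $A_i \in K_i$ holds syntactically. All the remaining arguments only use $K_i$'s up to such equivalence.
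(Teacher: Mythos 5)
There is a genuine gap, and it sits exactly at the two places you flag as delicate. Your pivotal claim that the entailment $\T \models K_i \sqsubseteq \exists R_i.K_{i+1}$ ``must be witnessed in the canonical Horn model by some CI $L \sqsubseteq \exists R_i.L'$ already present in $\T$ with $\T\models K_i\sqsubseteq L$ and $\T\models L'\sqsubseteq K_{i+1}$'' is asserted rather than proved, and it is not innocent for \HornALCIF: because of inverse roles, universal constraints and at-most constraints, successors in a canonical model get merged and receive propagated concept names, so the $R_i$-successor satisfying $K_{i+1}$ need not be attributable to a single existential CI with $\T\models L'\sqsubseteq K_{i+1}$, and the very existence and shape of canonical models would itself need an argument. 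Worse, the ``symmetric analysis'' for $\T \models K_{i+1} \sqsubseteq \exists^{\leq 1} R_i^-.K_i$ cannot be symmetric: at-most facts are not generated by anything in a canonical model, so there is nothing that ``witnesses'' them. The paper's proof avoids canonical models entirely: it unravels a model of the satisfiable triple into a tree $T_i$ and builds explicit counter-models --- a single node with all $\Gamma_S$-labels dropped (to show $K_i$ contains a $\Gamma_S$-label \emph{syntactically}), $T_i$ with the offending subtrees removed (for the existential CI), and $T_i$ with a duplicated subtree (for the at-most CI) --- then argues that some CI of $\T$ of the right shape must be violated, that this CI is relevant, applies $S$-drivenness, and only then matches $A=A_i$, $R=R_i$, $A'=A_{i+1}$. (Your matching of $A'_{i+1}$ with $A_{i+1}$ via the node $v_{i+1}$ and the disjointness axioms is fine once one is at that point, as is uniqueness.)

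The second gap is that your argument delivers only the semantic fact $\T \models K_i \sqsubseteq A_i$, whereas the lemma asserts the syntactic membership $A_i \in K_i$, and your repair --- tacitly replacing $K_i$ by $K_i \sqcap A_i$ --- proves a different statement. The syntactic form is what is actually used downstream: the completion algorithm of Lemma~\ref{lem:completion} locates $A_i$ by inspecting $K_i$, and $S$-drivenness of the extended TBox is defined via membership $A\in K$, $A'\in K'$ for the newly added CIs $K_{i+1}\sqsubseteq \exists R_i^-.K_i$ and $K_i \sqsubseteq \exists^{\leq 1} R_i.K_{i+1}$; changing the $K_i$'s changes which CIs cycle reversal adds, hence the completion itself, so the appeal to Theorem~\ref{thm:completion} and the rest of the pipeline would have to be re-verified for your modified notion. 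To close this you need the paper's kind of argument: if $K_i$ contained no label of $\Gamma_S$, the single-node graph obtained by dropping the $\Gamma_S$-labels of the unravelling's root still satisfies $K_i$ and all of $\T$ --- because CIs of the form $K\sqsubseteq A$ in $\widehat\T_S\cup\T_{\lnot Q}$ only conclude names outside $\Gamma_S$, and every relevant existential CI has a $\Gamma_S$-name on its left by $S$-drivenness --- contradicting $\T\models K_i\sqsubseteq\exists R_i.K_{i+1}$.
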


\begin{proof}
Since all triples in $K_1, R_1, \dots,  K_{n-1}, R_{n-1}, K_n$ are  satisfiable, all CIs
$K_i \sqsubseteq \exists R_i.K_{i+1}$ and $K_{i+1} \sqsubseteq \exists^{\leq 1} R_i^-.K_i$ are relevant for $\T$.
We cannot simply apply the fact that $\T$ is $S$-driven, because these CIs need not belong to $\T$: they are only entailed by $\T$. The proof will proceed in several steps.  

The first step is to see that each $K_i$ contains a label from $\Gamma_S$. Towards contradiction, suppose it does not. We construct a graph witnessing that $\T$ does not entail $K_i \sqsubseteq \exists R_i.K_{i+1}$, which is a contradiction.
Let $T_i$ be the tree-shaped graph obtained by unravelling some model of $\T$ witnessing that $(K_i,R_i,K_{i+1})$ is satisfiable, from a node $u$ satisfying  $K_i$. Clearly, $T_i$ is also a model of $\T$, its root $u$ satisfies $K_i$ and has an $R_i$-successor $u'$ satisfying $K_{i+1}$.
We construct $G$ as the graph with a single node $u_0$ whose labels are copied from the root $u$ of $T_i$ but with any letter from $\Gamma_S$ dropped. To see that $G \not\models K_i \sqsubseteq \exists R_i.K_{i+1}$, note that as $u\in (K_i)^{T_i}$ and $K_i$ contains no labels from $\Gamma_S$, also $u_0\in (K_i)^G$; but clearly $u_0$ has no $R_{i}$-successors at all.
Let us check that $G\models\T$. 
\begin{itemize}
\item New CIs of the form $K \sqsubseteq A$ are not introduced by reversing cycles, so it suffices to look at ones from $\widehat \T_S \cup \T_{\lnot Q}$. There, such CIs are only present in $\T_{\lnot Q}$ and always satisfy $A\notin\Gamma_S$ (see the proof of Lemma~\ref{lem:rollingup}). Hence, as they were satisfied in $T_i$ and $G$ was obtained by dropping labels from $\Gamma_S$, they still hold in $G$.
\item CIs of the form $K \sqsubseteq \bot$ in $\T$ were satisfied in $T_i$ and they cannot be violated by dropping labels (recall that $K$ does not use negation). 
\item All CIs of the forms $K \sqsubseteq \forall R.K'$, $K \sqsubseteq \not\exists R.K'$, and $K \sqsubseteq \exists^{\leq 1}R.K$ are trivially satisfied in $G$. 
\item Consider a CI of the form $K \sqsubseteq \exists R.K'$ from $\T$. Suppose that $u_0 \in K^G$. Then also $u\in K^{T_i}$. This means that the CI was ``fired'' in $T_i$, which implies that $(K, R, K')$ is satisfiable modulo $\T$ and $K \sqsubseteq \exists R.K'$ is relevant for $\T$. As $\T$ is $S$-driven, it follows in particular that $K$ contains a label from $\Gamma_S$. But this contradicts the fact that $u_0 \in K^G$. Hence, $K \sqsubseteq \exists R.K'$ is trivially satisfied in $G$. 
\end{itemize}
Thus we have shown that $G\models\T$. This concludes the first step. 

Now, as all $K_i$ contain a label from $\Gamma_S$ and all triples $(K_i, R_i, K_{i+1})$ are satisfiable modulo $\T$, it follows that for each $i$ there exists exactly one label $A_i \in \Gamma_S$ such that $A_i \in K_i$. It remains to show that $A_{i} \sqsubseteq \exists R_i.A_{i+1}$ and $A_{i+1} \sqsubseteq \exists^{\leq 1} R^-_i.A_{i}$.

Let us begin with $A_{i} \sqsubseteq \exists R_i.A_{i+1}$. Consider graph $G$ obtained from $T_i$ (same as above) by removing all subtrees rooted at $R_i$-successors of the root that satisfy $K_{i+1}$. Clearly, $G \not\models K_{i} \sqsubseteq \exists R_i.K_{i+1}$. As $\T \models K_{i} \sqsubseteq \exists R_i.K_{i+1}$, it follows that $G \not\models \T$. Then, some CI of the form $K \sqsubseteq \exists R. K'$ from $\T$ is violated in $G$, because CIs of other forms are preserved when passing to a subgraph. In particular, it must be the case that the root of $G$ satisfies $K$. But then also the root of $T_i$ satisfies $K$ and since $T_i \models \T$, the root $u$ of $T_i$ has an $R$-successor $u'$ that satisfies $K'$. This means that $K \sqsubseteq \exists R. K'$ is relevant for $\T$. Because $\T$ is $S$-driven, it must contain $A \sqsubseteq \exists R. A'$ for some $A, A' \in \Gamma_S$ such that $A\in K$, $A'\in K'$. As the root of $G$ satisfies both $K$ and $K_i$, and we know that $A\in K$ and $A_i\in K_i$ and that labels from $\Gamma_S$ are exclusive, it follows that $A=A_i$. We claim that also $R=R_i$ and $A'=A_{i+1}$. If $R \neq R_i$, then $u'$ is not an $R_i$-successor of the root in $T_i$, and it has not been removed in $G$. That would imply that $G$ actually does satisfy $K \sqsubseteq \exists R. K'$.
Since we know this is not the case, we conclude that $R = R_i$. Similarly, suppose that $A'\neq A_{i+1}$. Because $u'$ satisfies $K'$ and $A' \in K'$, it must have label $A'$. But then $u'$ cannot have label $A_{i+1}$, which means it cannot satisfy $K_{i+1}$, and has not been removed in $G$. This yields a contradiction just like before and we can conclude that $A'= A_{i+1}$. Wrapping up, we have seen that $A \sqsubseteq \exists R. A'$ belongs to $\T$ and that  $A = A_i$, $R=R_i$, and $A'=A_{i+1}$. This means that $A_i \sqsubseteq \exists R_i. A_{i+1}$ belongs to $\T$.

Finally, let us see that $A_{i+1} \sqsubseteq \exists^{\leq 1} R^-_i.A_{i}$ belongs to $\T$. Consider the model $T_i$ but reorganize it so that the root $u$ satisfies $K_{i+1}$ and has an $R^-_{i}$-successor $u'$ satisfying $K_i$. Let $G$ be the graph obtained from $T_i$ by duplicating the whole subtree rooted at $u'$, and adding an $R^-_{i}$-edge from $u$ to the root $u''$ of the copy. Clearly $G \not\models K_{i+1} \sqsubseteq \exists^{\leq 1} R^-_i.K_{i}$ and since $\T \models K_{i+1} \sqsubseteq \exists^{\leq 1} R^-_i.K_{i}$, we conclude that  $G \not\models \T$. It follows immediately that $G$ violates some CI of the form $K \sqsubseteq \exists^{\leq 1} R. K'$ from $\T$, as CIs of other forms are not affected by the modification turning $T_i$ to $G$. Similarly, it must hold that $R = R^-_{i}$, and that $u$ satisfies $K$ and $u'$ and $u''$ satisfy $K'$. It follows that $K \sqsubseteq \exists^{\leq 1} R. K'$ is relevant, $A_{i+1} \in K$, $A_i \in K'$, and $ A_{i+1} \sqsubseteq \exists^{\leq 1} R^-_{i}.A_i$ belongs to $\T$.
\end{proof}

\begin{lemma}
\label{lem:completion}
For $\T = \widehat \T_S\cup\T_{\lnot Q}$, the completion $\T^*$ can be computed in \EXPTIME.
\end{lemma}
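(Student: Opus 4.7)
The plan is to compute $\T^*$ by iteratively reversing finmod cycles, relying on the $S$-drivenness invariant from Lemma~\ref{lem:schema-driven} to keep the per-iteration cost manageable. The starting observation is that $\T^*$ can contain at most $2^{O(k)}$ distinct CIs, where $k$ is the total number of concept names (linear in $|\T|$): every CI introduced by a reversal has the form $K \sqsubseteq \exists R.K'$ or $K \sqsubseteq \exists^{\leq 1} R.K'$, where $K,K'$ are conjunctions of concept names and $R \in \Sigma^\pm_S$, so the number of syntactically possible new CIs is bounded by $|\Sigma^\pm_S|\cdot 2^{2k}$.

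Concretely, I would maintain a growing sequence $\T_0 \subseteq \T_1 \subseteq \dots$ starting with $\T_0 = \T$. At each step $t$, I would (i) identify the finmod triples $(K, R, K')$ in $\T_t$, i.e., those for which $\T_t \models K \sqsubseteq \exists R.K'$ and $\T_t \models K' \sqsubseteq \exists^{\leq 1} R^-.K$, via entailment queries; (ii) via reachability in the finmod graph on conjunctions, select those triples that belong to some finmod cycle; (iii) extend $\T_t$ with the corresponding reversal CIs to form $\T_{t+1}$; and (iv) apply the $S$-driven simplification from the preceding lemma to keep the number of at-most constraints bounded by $|\Sigma^\pm_S|\cdot |\Gamma_S|^2$. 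The process terminates at a fixed point, yielding $\T^*$. Since at most $2^{O(k)}$ distinct CIs can ever appear, we stabilize within $2^{O(k)}$ rounds. Each round performs at most $2^{O(k)}$ entailment queries on a TBox of size $|\T_t| \leq 2^{O(k)}$; by the complexity bound from Section~\ref{sec:satisfiability}, each query runs in time $\mathrm{poly}(|\T_t|) \cdot 2^{\mathrm{poly}(k, \ell)} = 2^{\mathrm{poly}(k)}$, with $\ell$ kept polynomial by the simplification step. The total running time is therefore $2^{\mathrm{poly}(k)}$, i.e., \EXPTIME.

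The main technical obstacle is preserving $S$-drivenness throughout the iteration: were it lost, the count $\ell$ of at-most constraints could grow exponentially, pushing each entailment query—and hence the entire algorithm—to doubly exponential time. Preservation is guaranteed precisely by Lemma~\ref{lem:schema-driven}: whenever a satisfiable finmod cycle $K_1, R_1, \ldots, K_n$ is reversed in $\T_t$, its schema-labeled shadow $A_1, R_1, \ldots, A_n$ is itself a finmod cycle in $\T_t$ and therefore reversed in the same round, yielding the polynomially-many CIs $A_i \sqsubseteq \exists^{\leq 1} R_i.A_{i+1}$ and $A_{i+1} \sqsubseteq \exists R_i^-.A_i$ that serve as $S$-drivenness witnesses for the new CIs involving the $K_i$'s (since $A_i \in K_i$ and $A_{i+1} \in K_{i+1}$). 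This closes the induction, validates the simplification in step (iv), and keeps the overall computation within \EXPTIME.
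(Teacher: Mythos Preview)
Your proposal is correct and follows essentially the same approach as the paper: build the finmod graph over conjunctions of concept names, detect cycles by reachability, reverse them while simultaneously reversing the schema-labeled shadow cycle guaranteed by Lemma~\ref{lem:schema-driven} to preserve $S$-drivenness, then simplify to keep the number of at-most constraints polynomial before recomputing entailments via Corollary~\ref{cor:entailment}. The paper presents this more concretely by naming the finmod graph $G_\T$ and explicitly recomputing it after each reversal, but the algorithm and the complexity accounting are the same as yours.
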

\begin{proof}
Construct a graph $G_\T$ over all possible intersections $K$ of concept names used in $\T$, including an edge with label $R\in \Sigma^\pm$ from $K$ to $K'$ iff 
\[\T \models K \sqsubseteq \exists R.K' \quad\text{and}\quad \T \models K' \sqsubseteq \exists^{\leq 1} R^-.K\,.\]
$G_\T$ has exponential size and can be constructed in \EXPTIME, because CI entailment by \HornALCIF TBoxes can be tested in exponential time \cite{GiacomoL96}. 
Repeat the following until the graph stops changing. Pick an $R$-edge from $K$ to $K'$ such that there is no $R^-$-edge from $K'$ to $K$. Check if there exists a path from $K'$ to $K$ in $G_\T$. If so, the identified path combined with the $R$-edge from $K$ to $K'$ constitutes a finmod cycle 
\[K_1, R_1, \dots, K_{n-1}, R_{n-1}, K_n\] 
in $\T$.
Add to $G_\T$ an $R_i^-$-edge from $K_{i+1}$ to $K_i$ for all $i<n$ and extend $\T$ with the corresponding concept inclusions. Note that this includes an $R^-$-edge from $K'$ to $K$ and concept inclusions 
\[K' \sqsubseteq \exists R^-.K \quad\text{and}\quad \T \models K \sqsubseteq \exists^{\leq 1} R.K'\,.\] 
Moreover, if there are unique $A_1,A_2, \dots, A_n \in \Gamma_S$ such that $A_i \in K_i$ for $i\leq n$, check if 
\[A_1, R_1, \dots, A_{n-1}, R_{n-1}, A_n\] is a cycle in $G_\T$. If so, add to $G$ an $R_i^-$ edge from $A_{i+1}$ to $A_{i}$, and the corresponding CIs to $\T$. By Lemma~\ref{lem:schema-driven}, this ensures that the extended $\T$ is $S$-driven. We can now reduce it and recompute $G_\T$ based on the updated $\T$. Using the complexity bounds for CI entailment given in Corollary~\ref{cor:entailment}, we conclude that this can be done in \EXPTIME. Note that we are indeed relying on the more precise complexity bounds here, because at later iterations of the cycle reversing procedure the TBox might well contain exponentially many concept inclusions. However, it has still only the original concept names and, after reducing, only a polynomial number of at-most restrictions. 
\end{proof}

\section{Proofs for Satisfiability}
\label{sec:app-satisfiability}

\subsection{Introductory lemmas}

We begin by showing the two lemmas mentioned in the body of the paper. 

\begin{lemma} \label{lem:sparse}
For $c\geq 1$, if a finite connected $c$-sparse graph has only nodes of degree at least 2, then it is $(2c,3c)$-skeleton.
\end{lemma}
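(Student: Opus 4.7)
The plan is to take the nodes of degree at least $3$ as the distinguished ones and the maximal chains of degree-$2$ nodes between them as the paths, and then to derive both bounds from a handshake-style degree count combined with the sparsity inequality $m\leq n+c$.

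First I would dispose of the degenerate case in which every node has degree exactly $2$. Since $G$ is finite and connected, it is then a single cycle, and picking any single vertex as distinguished exhibits $G$ as a $(1,1)$-skeleton; as $c\geq 1$, this fits inside the $(2c,3c)$ budget.

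Otherwise, let $V_{\geq 3}$ be the set of nodes of degree at least $3$ and $V_2$ the remaining nodes, which have degree exactly $2$. The handshake lemma gives $2m=\sum_v\deg(v)\geq 2|V_2|+3|V_{\geq 3}|=2n+|V_{\geq 3}|$, while $c$-sparsity gives $2m\leq 2n+2c$. Subtracting yields $|V_{\geq 3}|\leq 2c$, so the distinguished-node budget $k=2c$ is respected. For the paths, I would contract each maximal chain of degree-$2$ nodes into a single edge. The result is a multigraph $H$ on $V_{\geq 3}$ in which the degree of every vertex $v$ agrees with $\deg_G(v)$, so the number of edges of $H$ (equivalently, of paths in the contraction) equals $\ell=\tfrac{1}{2}\sum_{v\in V_{\geq 3}}\deg_G(v)=m-|V_2|$. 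Using $m\leq n+c$ and $n=|V_2|+|V_{\geq 3}|$, this gives $\ell\leq |V_{\geq 3}|+c\leq 3c$, as required.

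The one delicate point, which I would make sure to state explicitly, is why the maximal degree-$2$ chains really do yield simple paths that are disjoint modulo endpoints. The endpoints of such a chain lie in $V_{\geq 3}$ precisely because we have excluded the all-degree-$2$ case: in the non-degenerate case, a chain of degree-$2$ nodes that closed up into a cycle would form a connected component disjoint from $V_{\geq 3}$, contradicting connectedness of $G$. Internal nodes have degree exactly $2$ and therefore lie on a unique chain, so different chains share only their endpoints in $V_{\geq 3}$; each chain is simple because repeating an internal vertex would force it to have degree at least $4$. With this combinatorial justification in place, the degree count above seals the two bounds simultaneously.
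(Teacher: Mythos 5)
Your proof is correct and follows essentially the same route as the paper: contract the maximal degree-$2$ chains, treat the all-degree-$2$ (single cycle) case separately, and obtain both bounds from a handshake count combined with $m\leq n+c$. The only cosmetic difference is that you perform the degree count in the original graph and then translate it to the contracted multigraph, whereas the paper applies $3n/2\leq m\leq n+c$ directly to the contraction; the computations are equivalent.
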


\begin{proof}
Let $G$ be a finite connected $c$-sparse graph without nodes of degree 0 or 1.  We claim that $G$ consists of at most $2c$ nodes connected by at most $3c$ paths disjoint modulo endpoints. If $G$ is empty, we are done.  Otherwise, we eliminate vertices of degree 2 that are incident with two different edges by merging these edges into a single edge. This process results in a $c$-sparse multigraph $G_0$, whose edges represent simple paths in $G$. This graph is either a single node with a loop  or all its nodes have degree at least $3$. In the first case it follows that $G$ is a single cycle, and thus a $(1,1)$-skeleton. In the second case, assuming that $G_0$ has $n$ nodes and $m$ edges, we have $3n/2 \leq m \leq n + c$. It follows that $c>0$, $n \leq 2c$,  $m\leq 3 c$.
\end{proof}

\begin{lemma}
\label{lem:skeleton}
If $p$ is satisfied in a $|p|$-sparse graph $G$, then $G$ contains a $(4|p|, 5|p|)$-skeleton $H$, extending the skeleton of $G$, such that all variables of $p$ are mapped to distinguished nodes of $H$ and $G$ can be obtained by attaching finitely many finitely branching trees to $H$. 
\end{lemma}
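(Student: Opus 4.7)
My plan is to iteratively enlarge the skeleton of $G$ into the required $(4|p|,5|p|)$-skeleton $H$, processing one variable of $p$ at a time. Since $G$ is $|p|$-sparse, I first decompose it into its finite connected $|p|$-sparse core $G_0$ together with finitely many finitely branching trees attached to $G_0$, and then extract the skeleton $H_0$ of $G_0$, which by Lemma~\ref{lem:sparse} is a $(2|p|,3|p|)$-skeleton (the degenerate case where $G_0$ is a single node with no incident cycle is handled by simply declaring that node distinguished). I then fix a match of $p$ in $G$; the variables of $p$ are mapped to at most $|p|$ nodes of $G$, which I need to promote to distinguished nodes of the final $H$.

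Setting $H \colonequals H_0$, I process the variables one by one. For each variable with image $v$ there are three cases: (i) $v$ is already a distinguished node of $H$ and nothing needs to be done; (ii) $v$ is an internal node of one of the simple paths of $H$, in which case I split that path at $v$, promoting $v$ to a distinguished node at a cost of one new distinguished node and one new path (one existing path is replaced by two); (iii) $v$ lies outside $H$, in which case it belongs to one of the trees attached at some node $u$ of $G_0$, so I first split the path containing $u$ if $u$ is not yet distinguished (treating it as in case (ii)) and then extend $H$ by the unique finite path from $u$ to $v$ inside the tree, declaring $v$ distinguished. In the worst case, iteration (iii) contributes at most two new distinguished nodes and two new paths per variable.

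Summing over all $|p|$ variables gives at most $2|p|$ extra distinguished nodes and $2|p|$ extra paths, so the final $H$ is a $(4|p|,5|p|)$-skeleton. It remains to verify that $G$ can still be obtained from $H$ by attaching finitely many finitely branching trees: the construction only moves finite path segments into $H$, and what is peeled away from an attached tree when a single root-to-$v$ path is extracted is a finite collection of finitely branching subtrees dangling off the nodes of that path. Since we start with finitely many attached trees and each step produces only finitely many new ones, this is immediate. The only real subtlety, and the part I expect to double-check most carefully, is the bookkeeping in case (iii): one must avoid overcounting when $u$ is already distinguished from a previous iteration, and avoid undercounting when the tree containing $v$ is attached at an internal path node that was not originally distinguished; the rest of the argument is routine.
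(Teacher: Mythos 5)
Your proof is correct and takes essentially the same approach as the paper: the paper's Section 5 sketch is exactly this iterative extension of the $(2|p|,3|p|)$-skeleton, splitting a path or adding the path to the attachment point, at a cost of at most two distinguished nodes and two paths per variable (the appendix proof does the same in one pass by adding, within each attached tree, the variable images together with their least common ancestors). The only nit is that in your case (iii) the attachment node $u$ must be taken relative to the \emph{current} $H$ rather than $G_0$ — it may be an internal node of a path added in an earlier iteration — but your splitting step handles that situation verbatim, so the bound is unaffected.
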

\begin{proof} The skeleton $H_0$ of $G$ is a $(2|p|, 3|p|)$-skeleton. Consider a match of $p$ in $G$. Some variables of $p$ might well be matched to nodes on the paths connecting the distinguished nodes of $H_0$ or in the attached trees. We define $H$ as follows. First, we add to $H$ as distinguished nodes all images of variables of $p$ that lie on the paths connecting distinguished nodes of $H_0$. Next, for each attached tree $T$ that contains an image of a variable of $p$, we add to $H$ as distinguished nodes all the images of variables of $p$ that belong to $T$ together with all their least common ancestors in $T$, as well as the node of $H$ to which the root of $T$ is connected. All ancestors (in $T$) of these nodes are added to $H$ as ordinary nodes. The skeleton $H$ thus obtained has the required properties. 
\end{proof}

\subsection{The main result}

The goal of this section is to prove the following theorem. 

\begin{theorem} \label{thm:sparse-witness}
Given a C2RPQ $p$ and an \ALCIF TBox $\T$ using $k$ concept names and $\ell$ at-most constraints, one can decide in time $O\big(\mathrm{poly}(|\T|)\cdot 2^{\mathrm{poly}\left(|p|,k,\ell\right)}\big)$ if there exists a $|p|$-sparse graph that satisfies $p$ and $\T$.
\end{theorem}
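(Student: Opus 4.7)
The plan is to exploit the skeleton characterization of Lemma~\ref{lem:skeleton}: a $|p|$-sparse witness for $p$ and $\T$ exists if and only if one can exhibit a decorated $(4|p|, 5|p|)$-skeleton together with a certificate that its distinguished nodes admit compatible attached trees and that each symbolic edge can be materialized as a valid labeled path. Since the skeleton has at most $4|p|$ nodes and $5|p|$ symbolic edges, we can afford to guess much of its decoration explicitly and then verify consistency via type-elimination and automata-based reachability, avoiding the 2ATA machinery used by \citeauthor{CalvaneseOS11}.

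First, I would precompute the set of \emph{feasible types}, where a type is a subset of the concept names occurring in $\T$ closed under the Horn propagation rules of $\T$. A type $\tau$ is feasible when it can label the root of some (possibly infinite) tree-shaped model of $\T$; this is determined by a fixed-point elimination that iteratively discards types whose existential obligations $K \sqsubseteq \exists R.K'$ cannot be met by a still-feasible $R$-neighbor consistent with the universal and at-most constraints imposed at $\tau$. Since the number of types is $2^k$ and each elimination step inspects the $\ell$ at-most constraints, this precomputation runs in $O\bigl(\mathrm{poly}(|\T|)\cdot 2^{\mathrm{poly}(k,\ell)}\bigr)$ time.

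Second, I would nondeterministically guess the skeleton itself: the underlying graph, the embedding of the variables of $p$ into the distinguished nodes, the type of each distinguished node, and, for each symbolic edge, its edge label together with the subset of atoms of $p$ it is responsible for witnessing; this amounts to $2^{\mathrm{poly}(|p|,k)}$ guesses. Each guess is then verified locally. For every symbolic edge I would construct the product of the NFAs for the regular expressions it witnesses with the feasibility graph on types and check reachability between nodes of the declared endpoint types, at cost $\mathrm{poly}(|p|)\cdot 2^{\mathrm{poly}(k)}$ per edge. At each distinguished node, existentials not witnessed by a symbolic edge must be witnessed by an attached tree whose root has a feasible type, while at-most constraints are enforced by an additional per-node guess, of size $2^{\mathrm{poly}(|p|,\ell)}$, recording which symbolic-edge successors and which tree-witnessed existentials are identified by the applicable at-most constraints. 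Multiplying all factors yields the claimed bound $O\bigl(\mathrm{poly}(|\T|)\cdot 2^{\mathrm{poly}(|p|,k,\ell)}\bigr)$.

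The main obstacle is precisely the at-most constraints at distinguished nodes: they interact globally, since a single constraint can be triggered by witnesses arriving along different symbolic edges and by witnesses generated in different attached trees of the same node. Because the skeleton is small, however, it is sound and complete to guess upfront the equivalence classes that these at-most constraints induce on the declared successors, and then materialize the remaining witnesses freely in the attached trees, where feasibility of the root type already guarantees that no further at-most conflicts arise between sibling subtrees (which share no successors). Soundness and completeness of the overall procedure then follow from Lemma~\ref{lem:skeleton} together with standard unravelling arguments inside each attached tree.
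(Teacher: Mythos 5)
Your overall architecture (guess a $(4|p|,5|p|)$-skeleton via Lemma~\ref{lem:skeleton}, precompute realizable types by fixed-point elimination, verify symbolic edges with NFA products) matches the paper's, but two steps as you describe them would fail. First, certifying each symbolic edge with ``the subset of atoms of $p$ it is responsible for witnessing'' plus an endpoint-to-endpoint reachability check is not complete: a witnessing path for a single atom $\varphi(x,y)$ may cross several symbolic edges, traverse an edge backwards, enter and leave the expansion of an edge from the same endpoint, revisit the same edge several times in different automaton states, and detour into the trees attached to distinguished nodes or to internal nodes of the materialized paths (two-way expressions can descend into a side tree and climb back via inverse edges). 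What has to be guessed per symbolic edge is therefore a set of behaviours of the form (entry state, exit state, direction), and per distinguished node a set of state pairs realizable by a loop through an attached tree; one then checks a global reachability over the skeleton using these summaries (the paper's ``sufficient for $p$'' test), and separately checks that every promised behaviour is realizable inside the materialized edge or tree --- which is why the types in the paper's elimination carry a progress component of state pairs and directions, not just sets of concept names.

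Second, your treatment of at-most constraints inside the attached trees is unsound as stated. Feasibility of the root type does not guarantee absence of conflicts, because \ALCIF has inverses: a constraint $K \sqsubseteq \exists^{\leq 1} R.K'$ at an internal tree node can be violated jointly by that node's \emph{parent} and one of its children, so the observation that sibling subtrees ``share no successors'' does not address the problem. To handle it, the types in the elimination must record the label of the edge to the parent and the parent's label set, and must explicitly commit to a bounded list (up to $\ell+1$) of children so that every successor falling under an at-most restriction is among the committed ones; this is exactly what the paper's pre-types and the compatibility check of Lemma~\ref{lem:compatible} do, and it is also where the factor $2^{\mathrm{poly}(k,\ell)}$ (rather than $2^{\mathrm{poly}(k)}$) in the type count comes from. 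A smaller point: the theorem is stated for arbitrary \ALCIF TBoxes, so ``closure under Horn propagation rules'' does not define the candidate types; one has to work with arbitrary label sets (or guess the disjuncts), as the paper does.
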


\newcommand{\ett}{\ensuremath{\rotatebox{90}{\scalebox{.8}[1.6]{${\curvearrowright}$}}}}
\newcommand{\ess}{\ensuremath{\rotatebox{90}{\scalebox{.8}[-1.6]{${\curvearrowright}$}}}}
\newcommand{\ets}{\ensuremath{\leftarrow}}
\newcommand{\est}{\ensuremath{\rightarrow}}
\newcommand{\eabove}{\ensuremath{\raisebox{0ex}{\scalebox{.8}[1.6]{${\curvearrowright}$}}}}
\newcommand{\ebelow}{\ensuremath{\raisebox{1ex}{\scalebox{.8}[-1.6]{${\curvearrowright}$}}}}
\newcommand{\eup}{\ensuremath{\uparrow}}
\newcommand{\edown}{\ensuremath{\downarrow}}

\newcommand{\dir}{\ensuremath{\textrm{dir}}}
\newcommand{\ft}{\ensuremath{f_{\textrm{labs}}}}
\newcommand{\fpz}{\ensuremath{f_{\textrm{vars}}}}

\newcommand{\fpn}{\ensuremath{\delta_{\textrm{node}}}}
\newcommand{\fpe}{\ensuremath{\delta_{\textrm{edge}}}}

\newcommand{\fws}{\ensuremath{\beta_{\textrm{src}}}}
\newcommand{\fwt}{\ensuremath{\beta_{\textrm{tgt}}}}

The proof of Theorem~\ref{thm:sparse-witness} is not very hard, but it combines several components and requires developing some machinery. Let us begin with a road map. 

Relying on Lemma~\ref{lem:skeleton}, we guess a $(4|p|, 5|p|)$-skeleton $H$. The distinguished nodes of $H$ are represented explicitly, together with all their labels,  but each of the connecting paths is represented by a single \emph{symbolic edge}. Note that there might be multiple symbolic edges between the same pair of distinguished nodes, representing different paths.  We need to check that $H$ can be completed to a graph $G$ by materializing the symbolic edges into paths and attaching finitely many finitely branching trees in such a way that $G$ is a model of $\T$ and there is a match of $p$ in $G$ that maps variables of $p$ to distinguished nodes of $H$. 

To achieve this, we guess an annotation of $H$ that summarizes how the witnessing paths of $p$ can traverse the parts of $G$ missing from $H$, and which witnesses of distinguished nodes required by $\T$ these parts provide (Section \ref{ssec:annotated}). We then check if these promises of the annotation are sufficient to guarantee that $p$ and $\T$ are satisfied (Section~\ref{ssec:sufficient}). Finally, we verify that the promises of the annotation can be fulfilled: we check if we can attach trees to the distinguished nodes and expand the symbolic edges into finite paths with attached trees in a way that matches the promises of the annotation and respects the TBox $\T$ (Section~\ref{ssec:implementing}). 

\subsection{Annotated skeleta}
\label{ssec:annotated}

Let $\Gamma_p$, $\Sigma_p$, $\Gamma_\T$, $\Sigma_\T$ be the sets of edge and node labels used in $p$ and $\T$, respectively. In what follows we only consider graphs and skeleta using only node labels from $\Gamma_p \cup \Gamma_\T$ and edge labels from $\Sigma_p \cup \Sigma_\T$.

Let $\Phi$ be the set of two-way regular expressions used in $p$. For each $\varphi \in \Phi$ we fix an equivalent linear size non-deterministic automaton $\A_\varphi$ over the alphabet $\Gamma_p\cup\Sigma^\pm_p$ with states $K_\varphi$, initial states $I_\varphi \subseteq K_\varphi$, and final states $F_\varphi\subseteq K_\varphi$. We assume that all $K_\varphi$ are pairwise disjoint and let $\delta = \bigcup_{\varphi\in\Phi} \delta_{\varphi}$.

An \emph{annotation} of skeleton $H$ is given by the following functions.
\begin{itemize}
\item $\fws$ and $\fwt$ record  information about the source and target of the paths represented by each symbolic edge:
they both map each symbolic edge $e$ to $\big(\Sigma_p\cup\Sigma_\T\big)^\pm \times 2^{\Gamma_p\cup \Gamma_\T}$. 
\item $\fpn$ records how the witnessing paths for $p$ may loop in the subtrees attached to the distinguished nodes.
Thus, $\fpn$ maps  every distinguished node to a subset of $\bigcup_{\varphi\in\Phi}  K_\varphi \times K_\varphi$.
\item  $\fpe$ records how the witnessing paths for $p$ progress along paths (and the trees attached to them) represented by the symbolic edges in the skeleton.
Thus, $\fpe$ maps every edge $e$ to a subset of 
$\bigcup_{\varphi\in\Phi}  K_\varphi \times K_\varphi \times \{\ess,\ett,\ets,\est\}$. If $e$ is an edge from $u$ to $v$, then $(s,s',\est) \in \fpe(e)$ indicates that some path enters (the part of the model summarized by) the edge $e$ from $u$ in state $s$, and exits at node $v$ in state $s'$. Similarly,  $(s,s',\ett) \in \fpe(e)$ indicates a loop: some path enters $e$ from $v$ in state $s$, and exits at the same node $v$ in state $s'$, etc.
\end{itemize}

\subsection{Verifying annotated skeleta}
\label{ssec:sufficient}

An annotation of $H$ is \emph{sufficient for TBox} $\T$ if the witnesses recorded by $\fws$ and $\fwt$ respect $\T$; that is, for each distinguished node $u$ of $H$ the graph $G_u$ defined below satisfies the TBox $\T_0$ obtained from $\T$ by dropping all concept inclusions of the form $A \sqsubseteq \exists R.B$.
To construct $G_u$ we begin from $u$ with labels inherited from $H$, and then for each symbolic edge $e$ incident with $u$ we add an $R$-successor $v_e$ of $u$ with label set $\Lambda$, where $(R,\Lambda) = \fws(e)$ if $u$ is the source of $e$ and $(R,\Lambda) = \fwt(e)$ if $u$ is the target of $e$.

An annotation is \emph{sufficient for C2RPQ} $p$ if there exists a function $\eta$ mapping variables of $p$ to distinguished nodes of $H$ such that for each atom $\varphi(x,y)$ of $p$, there exists a finite witnessing sequence $s_0 u_0 s_1 u_1 \dots s_k u_k$ of states and distinguished nodes of $H$ satisfying the following conditions.
\begin{itemize}
\item The witnessing sequence begins and ends correctly; that is, $s_0 \in I_\varphi$, $s_k \in F_\varphi$, $u_0 =\eta(x)$, $u_{k} =\eta(y)$.
\item Each transition step along a symbolic edge (or subtree attached to a distinguished node) updates the state as expected: for each $i< k$ one of the following holds:
\begin{itemize}
    \item $(s_i, s_{i+1},\est) \in \fpe(e)$ for some edge $e$ from $u_i$ to $u_{i+1}$;
    \item $(s_i, s_{i+1},\ets) \in \fpe(e)$ for some edge $e$ from $u_{i+1}$ to $u_{i}$;
    \item $(s_i, s_{i+1},\ess) \in \fpe(e)$ for some edge $e$ from $u_{i}$ to some $u$, and $u_i=u_{i+1}$;
    \item $(s_i, s_{i+1},\ett) \in \fpe(e)$ for some edge $e$ from some $u$ to $u_{i}$, and $u_i=u_{i+1}$;
    \item $(s_i, s_{i+1}) \in \fpn(u_i)$ and $u_i=u_{i+1}$. 
\end{itemize}
\end{itemize}
We point out that the witnessing sequence may traverse a symbolic edge  multiple times. In consequence, each tuple in $\fpe(e)$ must be ``realised'' by the single path represented by $e$ (and the attached trees).

\begin{proposition}
One can decide if a given annotated skeleton is sufficient for $p$ and $\T$ in $\PTIME$.
\end{proposition}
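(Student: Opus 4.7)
The plan is to decompose the sufficiency check into two independent verifications, each runnable in polynomial time, and to argue that the existential over $\eta$ in the sufficiency condition is absorbed into the representation of the annotated skeleton rather than being a fresh CSP to solve at verification time.

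For sufficiency with respect to $\T$, I would iterate over the $O(|p|)$ distinguished nodes $u$ of $H$. For each $u$, materialise the local graph $G_u$ from the problem statement (node $u$ with its inherited labels, plus one successor $v_e$ per incident symbolic edge $e$ with label set prescribed by $\fws(e)$ or $\fwt(e)$), then verify $G_u\models\T_0$ concept-inclusion by concept-inclusion. Since $|G_u|$ is bounded by one plus the degree of $u$ in $H$, hence polynomial in $|p|$, and each \ALCIF concept inclusion in $\T_0$ imposes only a local check at a node of $G_u$, this entire step runs in time $\mathrm{poly}(|\T|,|p|)$.

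For sufficiency with respect to $p$, I would first compute, for every atom $\varphi(x,y)$, the set $S_\varphi\subseteq U\times U$ (with $U$ the set of distinguished nodes of $H$) of pairs admitting a valid witnessing sequence from one endpoint to the other. Construct the product graph $P_\varphi$ with vertex set $K_\varphi\times U$: for each edge $e$ of $H$ from $u_1$ to $u_2$ and each triple $(s,s',d)\in\fpe(e)$ insert the edge dictated by $d\in\{\est,\ets,\ess,\ett\}$ (for instance, $d=\est$ gives $(s,u_1)\to(s',u_2)$, while $d=\ess$ gives $(s,u_1)\to(s',u_1)$), and for each $(s,s')\in\fpn(u)$ insert the self-loop $(s,u)\to(s',u)$. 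Then $(u,v)\in S_\varphi$ iff some $(s,u)$ with $s\in I_\varphi$ reaches some $(s',v)$ with $s'\in F_\varphi$ in $P_\varphi$; this is decided by breadth-first search over a polynomial-size graph.

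It remains to decide existence of $\eta$ mapping variables of $p$ to $U$ with $(\eta(x),\eta(y))\in S_\varphi$ for every atom $\varphi(x,y)$. By Lemma~\ref{lem:skeleton}, the $(4|p|,5|p|)$-skeleton is built from a candidate $|p|$-sparse model in which each variable of $p$ is promoted to a specific distinguished node via a fixed mapping baked into the construction itself; accordingly, the annotated skeleton handed to the procedure already carries $\eta$ as part of its presentation, alongside $\fws,\fwt,\fpn,\fpe$. Under this reading the last step reduces to $O(|p|)$ table lookups in the precomputed $S_\varphi$'s, yielding a PTIME procedure overall. The main obstacle is precisely this interpretive step: treating $\eta$ as intrinsic to the skeleton's data (so that the outer guess in Theorem~\ref{thm:sparse-witness} commits to $\eta$ simultaneously with the skeleton, at no asymptotic cost in the overall guess size) is essential, because otherwise the remaining condition encodes an arbitrary binary CSP over domain $U$ of size $O(|p|)$, which is NP-hard in combined complexity and therefore inconsistent with the PTIME claim.
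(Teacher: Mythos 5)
Your two verification subroutines are exactly the ones the paper uses: sufficiency for $\T$ is checked by materialising the local graphs $G_u$ at the distinguished nodes, and sufficiency for $p$ by reachability in the product of the skeleton with the automata $\A_\varphi$, with edges induced by $\fpe$ and $\fpn$. The genuine divergence is in the treatment of $\eta$. The paper's proof disposes of it in one clause --- it ``guesses'' $\eta$ and then tests reachability from $\{\eta(x)\}\times I_\varphi$ to $\{\eta(y)\}\times F_\varphi$ --- which, read literally, yields an \NP procedure rather than a \PTIME one, since there are exponentially many maps from the variables of $p$ to the $O(|p|)$ distinguished nodes. You make the obstruction explicit and correctly diagnosed: with $\eta$ existentially quantified, the residual condition is a binary CSP over a domain of size $O(|p|)$, and since $p$ need not be acyclic in Section~\ref{sec:satisfiability}, that CSP is \NP-hard in combined complexity. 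Your resolution --- folding $\eta$ into the data of the guessed annotated skeleton --- is not what the paper's definition of an annotation literally says (it lists only $\fws$, $\fwt$, $\fpn$, $\fpe$), but it is the reading under which both the proposition and Theorem~\ref{thm:sparse-witness} go through: the outer algorithm already enumerates exponentially many skeleta, and the additional $O(|p|\log|p|)$ bits needed to fix $\eta$ are absorbed into that enumeration at no asymptotic cost. So your argument is sound and, on this point, more careful than the paper's own proof; the only caveat is that it establishes the proposition under a benign amendment of the definition of an annotated skeleton rather than exactly as stated.
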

\begin{proof}
To check that the annotated skeleton is sufficient for $\T$ it is enough to examine the graphs $G_u$ for each distinguished node $u$ of the skeleton.

Checking that the annotated skeleton is sufficient for $p$ amounts to guessing the function $\eta$ and for each atom $\varphi(x,y)$ running a reachability test in the product graph whose nodes combine distinguished nodes of the skeleton with states from $K_\varphi$, where edges are defined according to the symbolic edges in the skeleton and the triples from $\fpe$. In the reachability test we check if there exists a path beginning in $\{\eta(x)\}\times I_\varphi$ and ending in $\{\eta(y)\}\times F_\varphi$.
\end{proof}

\subsection{Implementing annotated skeleta}
\label{ssec:implementing}

Consider an annotated skeleton $\mathcal{H}=\big(H, \fws, \fwt, \fpe, \fpn\big)$. 
We say that a graph $G$ \emph{implements} $\mathcal{H}$ if $G$ is obtained from $H$ by replacing each symbolic edge $e$ with a path $\pi_e$ connecting the endpoints of $e$ and by attaching finitely many finitely branching trees in a way consistent with the annotations, in the following sense.
\begin{itemize}
\item 
For each symbolic edge $e$ from $u$ to $u'$, the subgraph $G_e$ of $G$ that consists of $\pi_e$ and all trees attached to the internal nodes of $\pi_e$ is correctly summarized in the annotations:
\begin{itemize}
    \item for each $(s,s',d)\in \fpe(e)$ with $s,s'\in K_\varphi$ there is a path in $G_e$ with endpoints $(u,u)$ if $d=\ess$, $(u,u')$ if $d=\,\est\,$, $(u',u')$ if $d=\ett\,$, and $(u',u)$ if $d=\,\ets\,$, on which $\A_\varphi$ moves from state $s$ to state $s'$; 
    \item if $\fws(e) = (R_1, \Lambda_1)$ and $\fws(e) = (R_2, \Lambda_2)$, then the first edge of $\pi_e$ is an $R$-edge, the last edge of $\pi_e$ is an $R_2^-$-edge, the second node on $\pi_e$ has the labels set $\Lambda_1$, and the penultimate node on $\pi_e$ has label set $\Lambda_2$.
\end{itemize}

\item For each distinguished node $u$, the trees attached to $u$ are summarized correctly in the annotations: for each $(s,s')\in \fpn(u)$ with $s,s'\in K_\varphi$ there is a tree $T_{u}^{s,s'}$ attached to $u$ and a path that starts and ends in $u$ and otherwise only visits nodes of $T_{u}^{s,s'}$, on which $\A_\varphi$ moves from state $s$ to state $s'$.

\item $G$ is a model of $\T$.
\end{itemize}

Note that all the missing pieces of the graph are essentially trees (finitely branching, but typically infinite). Indeed, each $T_u^{s,s'}$ simply is a tree, but also $G_e$ can be viewed as a tree: its  root is the source of $e$, the root has exactly one child,  the path $\pi_e$ constitutes a special finite branch ending in the target of $e$ which is a leaf in this tree. Importantly, each $(s,s')\in\fpn(u)$ is witnessed by a finite subgraph of $T^{s,s'}_u$, and each triple $(s,s',d)\in \fpe(e)$ is witnessed by a finite subgraph of $G_e$. The algorithm to check if there exist such $T^{s,s'}_u$ and $G_e$ can be seen as an emptiness test for tree automaton, or as a variant of type elimination.

We first define \emph{types}, which can also be viewed as states of a tree automaton.
We assign to each node of the tree a type that records the following information:
\begin{itemize}
\item a subset of $\Gamma_p\cup\Gamma_\T$, representing the labels of the current node;
\item an element of $\Sigma_p^\pm \cup \Sigma_\T^\pm$ and a subset of $\Gamma_p\cup\Gamma_\T$, representing the label on the edge to the parent and the parent's label set;
\item with $\ell$ the number of at-most restrictions in $\T$, a list of $t\leq\ell+1$ elements of $\Sigma_p^\pm \cup \Sigma_\T^\pm$ and subsets of $\Gamma_p\cup\Gamma_\T$, representing labels on the edges to $t$ children of the current node and the children's label sets;  
\item a Boolean flag indicating whether the current node belongs to the special path (not used for $T_u^{s,s'}$ at all);
\item a subset of $\bigcup_{\varphi\in \Phi} K_\varphi\times K_\varphi \times \{\ebelow,\eabove,\edown,\eup\}$ recording the progress on witnessing $\fpe$ or $\fpn$: 
\begin{itemize}
\item $(s,s',\ebelow)$ indicates  that from state $s$ in the current node we can navigate the current subtree and return to the current node in state $s'$,
\item $(s,s',\eabove)$ indicates  that from state $s$ in the current node we can navigate outside of the current subtree and return to the current node in state $s'$,
\item  $(s,s',\edown)$ indicates that from state $s$ in the current node, we can reach the target node of $e$ in state $s'$,
\item $(s,s',\eup)$ indicates that from state $s$ in target node of $e$ we can reach the current node in state $s'$.
\end{itemize}
Actually, all four kinds of triples are required along the special path, but in the remaining nodes we only need the triples of the form $(s,s',\ebelow)$.
\end{itemize}

By a \emph{pre-type} we shall understand a type with the boolean flag and the progress information dropped; that is, a tuple \[(\Lambda, R', \Lambda', R_1, \Lambda_1,  \dots, R_t, \Lambda_t )\] with $\Lambda, \Lambda', \Lambda_1, \dots, \Lambda_t \subseteq \Gamma_p\cup\Gamma_\T$,  and $R', R_1, \dots, R_t \in \Sigma_p^\pm \cup\Sigma_\T^\pm$, and  $0 \leq t \leq \ell+1$.
In what follows we blur the distinction between conjunctions $K$ of concept names and sets $\Lambda$ of labels, as usual, and write $K \subseteq \Lambda$.

A pre-type $(\Lambda, R', \Lambda', R_1, \Lambda_1, \dots, R_t, \Lambda_t )$ is  \emph{compatible} with $\T$ iff there exists a graph $G$ such that 
\begin{itemize}
    \item there are pairwise different nodes $u, u', u_1, \dots, u_t$ with label sets $\Lambda, \Lambda', \Lambda_1, \dots, \Lambda_t$;
    \item there is an $R'$-edge from $u$ to $u'$ and an $R_i$-edge from $u$ to $u_i$ for all $i\leq t$, and no other edges are incident with $u'$; 
    \item for each $K \sqsubseteq \exists^{\leq 1} R.K'$ in $\T$ with $K\subseteq \Lambda$, every $R$-successor of $u$ that satisfies $K'$ belongs to $\{u', u_1, \dots, u_t\}$; and
    \item $G$ satisfies  $\T$ except that CIs of the form $K \sqsubseteq \exists R.K'$ are not required to be satisfied for $u'$.
\end{itemize}
Note that unlike in the notion of satisfiability used in Appendix~\ref{sec:app-containment}, the witnessing nodes cannot have additional labels, not listed in  $\Lambda, \Lambda', \Lambda_1, \dots, \Lambda_t$. 

\begin{lemma}
\label{lem:compatible}
Given $\T$ and $p$ one can compute the set of pre-types compatible with $\T$ within the time bound stated in Theorem~\ref{thm:sparse-witness}
\end{lemma}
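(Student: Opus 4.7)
The plan is to enumerate all pre-types and test compatibility for each one by reducing to $\ALCIF$ TBox satisfiability. A pre-type $(\Lambda, R', \Lambda', R_1, \Lambda_1, \dots, R_t, \Lambda_t)$ has $t\leq\ell+1$, each $\Lambda_i\subseteq\Gamma_p\cup\Gamma_\T$, and each $R_i\in\Sigma_p^\pm\cup\Sigma_\T^\pm$. Since the label sets have at most $|p|+k$ elements and the number of available roles is bounded by $2(|p|+|\T|)$, the total number of pre-types is $\mathrm{poly}(|\T|)\cdot 2^{\mathrm{poly}(|p|,k,\ell)}$, so it suffices to check each one within time $\mathrm{poly}(|\T|)\cdot 2^{\mathrm{poly}(|p|,k,\ell)}$.

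For a fixed pre-type $\tau=(\Lambda, R', \Lambda', R_1, \Lambda_1, \dots, R_t, \Lambda_t)$, I would construct an $\ALCIF$ TBox $\mathcal{T}_\tau$ extending $\T$ with fresh concept names $P_0,P_1,\dots,P_{t+1}$ flagging the pattern nodes $u,u',u_1,\dots,u_t$. The extension consists of CIs that (a) enforce pairwise disjointness $P_i\sqcap P_j\sqsubseteq\bot$ for $i\neq j$; (b) pin down the label set of each pattern node via $P_i\sqsubseteq A$ for $A$ in its label set and $P_i\sqcap A\sqsubseteq\bot$ otherwise; (c) force the required edges via $P_0\sqsubseteq\exists R'.P_1$ and $P_0\sqsubseteq\exists R_i.P_{i+1}$ for $1\leq i\leq t$; (d) preclude any other incident edge at $u'$ via $P_1\sqsubseteq\forall R.\bot$ for every $R\in\Sigma_p^\pm\cup\Sigma_\T^\pm$ different from $(R')^-$, together with $P_1\sqsubseteq\exists^{\leq 1}(R')^-.\top$; (e) relax the existential requirements at $u'$ by replacing each $K\sqsubseteq\exists R.K'$ in $\T$ with $K\sqcap\lnot P_1\sqsubseteq\exists R.K'$; and (f) enforce the restricted at-most semantics of the pre-type by adding, for each CI $K\sqsubseteq\exists^{\leq 1}R.K'$ in $\T$ with $K\subseteq\Lambda$, the inclusion $P_0\sqsubseteq\nexists R.(K'\sqcap\lnot P_{j_1}\sqcap\cdots\sqcap\lnot P_{j_m})$, where $\{j_1,\dots,j_m\}=\{j\mid R_{j-1}=R,\ K'\subseteq\Lambda_{j-1}\}$ indexes the pattern nodes already specified to be $R$-successors of $u$ carrying the labels of $K'$. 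Compatibility of $\tau$ with $\T$ is then equivalent to the satisfiability of $P_0$ modulo $\mathcal{T}_\tau$: the forward direction interprets each $P_i$ as a singleton drawn from the witnessing graph, while the backward direction picks the pattern witnesses so that every $R$-successor of $u$ forced by an at-most CI in $\T$ is routed by (f) into a matching $P_j$ and thus assigned to the appropriate pattern slot $u_{j-1}$.

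The TBox $\mathcal{T}_\tau$ has size $\mathrm{poly}(|\T|,|p|,\ell)$, uses $k+O(\ell)$ concept names, and $\ell+O(1)$ at-most restrictions. Each satisfiability check is resolved by the $\ALCIF$ TBox satisfiability algorithm of~\cite{CalvaneseOS11}, refined via type-elimination bookkeeping that tracks only subsets of the concept names actually appearing in $\mathcal{T}_\tau$; this yields running time $O(\mathrm{poly}(|\mathcal{T}_\tau|)\cdot 2^{\mathrm{poly}(k,\ell)})=O(\mathrm{poly}(|\T|,|p|)\cdot 2^{\mathrm{poly}(k,\ell)})$ per pre-type, and multiplying by the number of pre-types yields the bound of Theorem~\ref{thm:sparse-witness}. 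The main obstacle is to faithfully encode both the absence of other incident edges at $u'$ \emph{and} the modified at-most semantics while leaving the parameters $k$ and $\ell$ essentially unchanged; this forces us outside $\HornALCIF$ via the use of $\lnot P_1$, but the resulting TBox remains in $\ALCIF$ and its number of at-most restrictions grows only by an additive constant.
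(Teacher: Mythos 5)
Your enumeration of pre-types and the counting are fine, but the step that carries all the weight is asserted rather than proved. The whole point of Lemma~\ref{lem:compatible} (and of Theorem~\ref{thm:sparse-witness}) is an algorithm that runs in time $O\big(\mathrm{poly}(|\T|)\cdot 2^{\mathrm{poly}(|p|,k,\ell)}\big)$ when $\T$ may contain \emph{exponentially many} concept inclusions over only $k$ concept names --- exactly the TBoxes produced by cycle reversing, and exactly the reason the paper cannot simply invoke \cite{CalvaneseOS11}. Your per-pre-type test is a satisfiability check for $P_0$ modulo an \ALCIF TBox $\T_\tau$ of size $\mathrm{poly}(|\T|,|p|,\ell)$; a standard \ALCIF procedure runs in time $2^{\mathrm{poly}(|\T_\tau|)}=2^{\mathrm{poly}(|\T|,|p|)}$, which is doubly exponential in $k$ and blows the budget. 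The refinement you appeal to --- ``type elimination over subsets of the concept names,'' exponential only in $k$ and $\ell$ and polynomial in the number of CIs --- is not an off-the-shelf result and is not routine: plain type elimination does not by itself handle number restrictions (this is precisely why the paper's pre-types record up to $\ell+1$ successors and why the ``repeatable'' notion is introduced), and your axioms (e) and (f) leave the normal form over which such a refinement would be stated, so renormalizing the exponentially many modified existentials of (e) would reintroduce exponentially many concept names and destroy the parameter $k$. In effect, the black box you cite is (a version of) the very statement you are asked to prove. The paper takes a different and self-contained route: a greatest-fixpoint elimination performed directly on pre-types, via the relations ``compatible with $\T$ modulo $\Theta$'' and ``repeatable,'' with correctness shown by constructing the witnessing tree-shaped graph top-down from the surviving pre-types.

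Even granting the oracle, the encoding is not quite faithful to the definition of compatibility. In (f) the excluded set ranges only over the child slots with $R_{j-1}=R$, so when $R=R'$ and $K'\subseteq\Lambda'$ the parent witness $u'$ (flagged by $P_1$), which the definition explicitly allows among $\{u',u_1,\dots,u_t\}$, falsifies your axiom; hence $\T_\tau$ can be unsatisfiable for a compatible pre-type and already the forward direction fails. In the backward direction, a model may contain two distinct $R_j$-successors of the $P_0$-node inside $P_{j+1}$, and only one of them can be the designated $u_j$, so the third bullet of compatibility need not hold for the model as it stands; you would need additional at-most axioms such as $P_0\sqsubseteq\exists^{\leq 1}R_j.P_{j+1}$ (an additive $O(\ell)$, not $O(1)$, increase) or a pruning argument showing the offending edges can be removed without violating $\T$. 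Neither repair appears in your write-up.
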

\begin{proof}
Each pre-type $(\Lambda, R', \Lambda', R_1, \Lambda_1, \dots, R_t, \Lambda_t )$ can be interpreted as a star-shaped graph consisting of nodes $u,u', u_1, \dots, u_n$ with label sets $\Lambda, \Lambda', \Lambda_1, \dots, \Lambda_t$ such that $u'$ is an $R'$-successor of $u$, $u_i$ is an $R_i$-successor of $u$ for all $i\leq t$, and there are no other edges. Thus we can speak of a pre-type satisfying a concept inclusion, etc. 

We say a pre-type $(\Lambda, R', \Lambda', R_1, \Lambda_1, \dots, R_t, \Lambda_t )$ 
is \emph{repeatable} if there is no at-most restriction $K\sqsubseteq \exists^{\leq 1} R.K'$ in $\T$ such that $K\subseteq \Lambda'$, $R=(R')^-$, and $K'\subseteq \Lambda$.

A pre-type $(\Lambda, R', \Lambda', R_1, \Lambda_1, \dots, R_t, \Lambda_t )$ is said to be \emph{compatible with $\T$ modulo} a set $\Theta$ of pre-types if 
\begin{itemize}
    \item $\Theta$ contains a pre-type  $(\Lambda_i, R_i^-, \Lambda, \dots)$ for each $i\leq t$;
    \item the pre-type satisfies all CIs in $\T$ not of the form $K\sqsubseteq\exists R.K'$;
    \item for each concept inclusion $K \sqsubseteq \exists R. K'$ in $\T$ with $K\subseteq \Lambda$, at least one of the following holds:
    \begin{itemize}
        \item $R=R'$ and $K' \subseteq \Lambda'$, or
        \item $R=R_i$ and $K' \subseteq \Lambda_i$ for some $1\leq i\leq t$, or
        \item $R=R_0$ and $K' \subseteq \Lambda_0$ for some repeatable  $(\Lambda_0, R_0^-, \Lambda, \dots)$ from  $\Theta$.
    \end{itemize}
\end{itemize} 

Now, to compute the set of pre-types compatible with $\T$, we start with the set $\Theta=\Theta_0$ of all pre-types, and exhaustively remove those pre-types that are not compatible with $\T$ modulo $\Theta$. This algorithm terminates after at most 
\[|\Theta_0| = \sum_{t=0}^{\ell+1} \big|\Sigma_p^\pm\cup\Sigma^\pm_\T\big|^{t+1} \cdot \Big(2^{|\Gamma_p \cup\Gamma_\T|}\Big)^{t+2}\] iterations. Each iteration takes time polynomial in $|\Theta|$ and $|\T|$. 

The result is the maximum set $\Theta$ of pre-types such that each pre-type from $\Theta$ is compatible with $\T$ modulo $\Theta$. Each pre-type compatible with $\T$ will belong to this set, because the graph witnessing the triple can be used to argue that the triple will not be removed at any iteration. Conversely, each triple from $\Theta$ is compatible with $\T$, because one can construct a witnessing tree-shaped graph top-down, using the witnesses justifying the presence of pre-types in $\Theta$ in the last iteration of the algorithm. 
\end{proof}

\begin{lemma}
The existence of a graph implementing a given annotated skeleton is decidable within the time bound from Theorem~\ref{thm:sparse-witness}.
\end{lemma}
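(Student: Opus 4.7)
The plan is to decompose the implementation question into independent local questions about each distinguished node of $H$ and each symbolic edge, and to solve each of them by an emptiness test for a nondeterministic tree automaton whose states are \emph{types} (a pre-type plus the Boolean flag plus the progress set). Indeed, every missing piece is tree-shaped: each $T_u^{s,s'}$ is a genuine tree, and each $G_e$ can be viewed as a tree whose root is the source of $e$ and whose special finite branch $\pi_e$ ends in a leaf (the target of $e$). Thus implementability reduces to checking, for every distinguished node $u$ and every symbolic edge $e$, whether a tree of types with appropriate root exists.

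Using Lemma~\ref{lem:compatible} one first computes the set $\Theta$ of pre-types compatible with $\T$ in time $2^{\mathrm{poly}(|p|,k,\ell,\log|\T|)}$. A type $\tau$ is then declared \emph{locally consistent with $\Theta$} if its underlying pre-type lies in $\Theta$ and its progress set is deducible from those of its children in the obvious way: a triple $(s,s',\ebelow)$ in $\tau$ either has $s=s'$, or is realised by an $\A_\varphi$-step from $s$ to some state $t$ along the edge to a child $v$, a triple $(t,t',\ebelow)$ at $v$, and a step from $t'$ back to $s'$ along the reverse edge. Analogous local rules propagate triples of the forms $(s,s',\eabove)$, $(s,s',\edown)$, $(s,s',\eup)$ along the special branch. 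These conditions define a bottom-up nondeterministic tree automaton $\mathcal{B}$ whose state space is the set of all types.

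For each distinguished node $u$ of $H$ we check whether there exist root types $\tau_1,\ldots,\tau_m$ accepted by $\mathcal{B}$ that can be attached to $u$ so that (i) together with the neighbours prescribed by the skeleton (adjacent distinguished nodes with their explicit labels, and the second nodes of incident symbolic edges with the labels given by $\fws$ and $\fwt$), the resulting neighbourhood pre-type at $u$ lies in $\Theta$, and (ii) every pair in $\fpn(u)$ is collectively witnessed by the progress sets of $\tau_1,\ldots,\tau_m$. For each symbolic edge $e$ from $u$ to $u'$ we additionally verify that a finite path $\pi_e$ of locally consistent types, with prescribed first and last edge labels and second and penultimate label sets, can be built between $u$ and $u'$, together with attached subtrees accepted by $\mathcal{B}$, so that every triple in $\fpe(e)$ is witnessed. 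Both problems reduce to standard nonemptiness/reachability questions for $\mathcal{B}$ that are decided by the classical type-elimination fixed point.

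The main technical obstacle is ensuring that a single symbolic edge $e$ simultaneously realises every triple in $\fpe(e)$; we handle this by letting the state along the special branch $\pi_e$ carry the set of triples realised so far, and requiring that $\fpe(e)$ is fully achieved by the time the computation reaches the endpoint $u'$. Since the number of types is bounded by $2^{\mathrm{poly}(|p|,k,\ell,\log|\T|)}$, as a type records a pre-type (exponentially many by Lemma~\ref{lem:compatible}), a Boolean flag, and a progress subset of a set of cardinality $O(|p|^2)$, and each emptiness test runs in time polynomial in the number of types and in $|\T|$, the total running time lies within the bound of Theorem~\ref{thm:sparse-witness}.
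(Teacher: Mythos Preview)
Your proposal is essentially the paper's own construction: compute the pre-types compatible with $\T$ via Lemma~\ref{lem:compatible}, enrich them to types carrying the Boolean path-flag and a progress set over $\bigcup_\varphi K_\varphi\times K_\varphi\times\{\ebelow,\eabove,\edown,\eup\}$, and run a bottom-up type-elimination (tree-automaton nonemptiness) fixed point, with dedicated initial and acceptance conditions for each symbolic edge and each distinguished node. The paper spells out the local rules in more detail---in particular the guess-and-verify handling of $\eabove$-triples (the child guesses its $\eabove$-set, the parent checks it against its own $\ebelow$-information and its own $\eabove$-set) and the extension of pre-types to arity $\ell+\deg(u)$ at distinguished nodes---but structurally your outline matches.

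One remark on the complexity accounting. Your bound $2^{\mathrm{poly}(|p|,k,\ell,\log|\T|)}$ hides a factor $|\Sigma_p\cup\Sigma_\T|^{O(\ell)} = |\T|^{O(\ell)}$ coming from the up to $\ell{+}2$ role names recorded in a pre-type, and $|\T|^{O(\ell)}$ is not dominated by $\mathrm{poly}(|\T|)\cdot 2^{\mathrm{poly}(|p|,k,\ell)}$; the paper's own count carries the same factor. To meet the stated bound one should note that every explicitly listed \emph{non-repeatable} child in a pre-type must use a role name occurring in one of the $\ell$ at-most restrictions (this is why it is non-repeatable), so only the parent edge and the single special-path child range over all of $\Sigma_p\cup\Sigma_\T$; the role-name contribution then drops to $\mathrm{poly}(|\T|)\cdot (2\ell)^{O(\ell)}$, which does fit.
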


\begin{proof}
We call a type $(\Lambda, R', \Lambda', R_1, \Lambda_1, \dots, R_t, \Lambda_t, b, \Delta)$ \emph{compatible with $\T$} if the underlying pre-type $(\Lambda, R', \Lambda', R_1, \Lambda_1, \dots, R_t, \Lambda_t)$ is compatible with $\T$. \emph{Repeatable types} are defined analogously, based on the underlying pre-types.  Clearly, Lemma~\ref{lem:compatible} suffices to precompute the set of types compatible with  $\T$. Our task is to check if from these types one can construct the witnessing $G_e$ and $T_u^{s,s'}$. We will build them bottom-up, guaranteeing that each promise related to $p$ is fulfilled in a finite fragment.

A type $(\Lambda, R', \Lambda', R_1, \Lambda_1, \dots, R_t, \Lambda_t, b, \Delta)$ is \emph{compatible with $p$ modulo} a set $\Theta$ of types if there exists types $(\Lambda_i, R_i^-, \Lambda, \dots, b_i, \Delta_i) \in \Theta$ for $ 1 \leq i \leq t$ such that
\begin{itemize}

\item if $b=0$, then $b_i = 0$ for all $1 \leq i \leq t$, else $t\geq 1$, $b_1 = 1$, and $b_i = 0$ for all $1 < i \leq t$;

\item for each $(s,s',\ebelow) \in \Delta$,
\begin{itemize}
    \item $(s, A, s') \in \delta$ for some $A\in\Lambda$, or
    \item $(s,R_i,s_1)\in \delta$, $(s_1,s_2,\ebelow)\in \Delta^*_i$, and $(s_2,R_i^-,s') \in \delta$ for some $s_1,s_2$ and $1 \leq i \leq t$, or
    \item $(s,R_0,s_1)\in\delta$, $(s_1,s_2,\ebelow)\in \Delta^*_0$, and $(s_2,R_0^-,s') \in \delta$  for some $s_1,s_2$ and repeatable $(\Lambda_0, R_0^-, \Lambda, \dots, 0, \Delta_0) \in \Theta$, 
\end{itemize}
where $\Delta_i^*$ is the set of all $\big(s,s',\ebelow\big)$ such that there are states $s=s_1, s_2, \dots, s_m=s'$ with $\big(s_j,s_{j+1},\ebelow\big)\in \Delta_i$ for all $j < m$;

\item if $b=1$, then for each $(s,s',\eup)\in \Delta$, there are  $s_1,s_2$ such that $(s,s_1,\eup) \in \Delta_1$, 
$(s_1,s_2,\ebelow) \in\Delta_1^*$, and $(s_2,R_1^-,s') \in \delta$;

\item if $b=1$, then for each $(s,s',\edown)\in \Delta$, there are  $s_1,s_2$ such that $(s,R_1,s_1) \in \delta$, 
$(s_1,s_2,\ebelow) \in\Delta_1^*$, and $(s_2,s',\edown) \in \Delta_1$;

\item if $b=1$, then for each $(s,s',\eabove)\in \Delta_1$, there are $s_1,\dots, s_m$ such that $(s,R_1^-,s_1), (s_m,R_1,s') \in \delta$ and for all $j<m$, either  $(s_j,s_{j+1},\ebelow) \in \Delta_2^* \cup \dots \cup\Delta_m^*$, or $(s_j,s_{j+1},\eabove) \in \Delta$, or $(s_j,A,S_{j+1}) \in \delta$ for some  $A\in\Lambda$.
\end{itemize}

Let us first see how to decide the existence of $G_e$ for a  given symbolic edge $e$. The algorithm begins with the set $\Theta$ of all ``initial types'', which are
\begin{itemize}
    \item types $(\Lambda, R', \Lambda', b,\Delta)$ such that 
    \begin{itemize}
        \item $\Lambda$ is the label set of the target of $e$,
        \item $(R',\Lambda') = \fws(e)$,
        \item $b=1$,
        \item $\Delta$ consists of all $(s,s',\eabove)$ such that $(s,s',\ett) \in \fpe(e)$, as well as all $(s,s,\eup)$ and $(s,s,\edown)$;
    \end{itemize}
    \item types $(\Lambda, R', \Lambda',\dots, b,\Delta)$ compatible with $\T$ such that 
    \begin{itemize}
        \item $b=0$,
        \item $\Delta = \emptyset$.
    \end{itemize}
\end{itemize}
Then, we exhaustively extend $\Theta$ with types that are compatible with $\T$ and compatible with $p$ modulo $\Theta$.
When no more types can be added, the graph $G_e$ exists iff  $\Theta$ contains a type $(\Lambda, R', \Lambda', \dots,  b, \Delta)$ such that
\begin{itemize}
    \item $\Lambda$ is the label set of the source of the symbolic edge $e$;
    \item $\big((R')^-,\Lambda\big) = \fws(e)$;
    \item $b=1$;
    \item $\Delta$ contains no triples of the form $(s,s',\eabove)$;
    \item for each $(s,s',\ebelow)\in \fpe(e)$ there are states $s_1,s_2$ such that $\big(s, (R')^-, s_1\big) \in \delta$, $(s_1,s_2,\ebelow)\in\Delta^*$, and $\big(s_2, (R')^-, s'\big) \in \delta$;
    \item for each $(s,s',\edown)\in \fpe(e)$ there are states $s_1, s_2$ such that $\big(s, (R')^-, s_1\big) \in \delta$, $(s_1,s_2,\ebelow)\in\Delta^*$, and $(s_2,s',\edown)\in\Delta$;
    \item for each $(s,s',\eup)\in \fpe(e)$ there are states $s_1, s_2$ such that $(s, s_1, \eup) \in \Delta$, $(s_1,s_2,\ebelow)\in\Delta^*$, and $(s_2, R', s')\in\delta$.
\end{itemize}

This number of iterations of the algorithm is bounded by the number of all types, 
\[\sum_{t=0}^{\ell+1} \big|\Sigma_p^\pm\cup\Sigma^\pm_\T\big|^{t+1} \cdot \Big(2^{|\Gamma_p \cup\Gamma_\T|}\Big)^{t+2} \cdot 2 \cdot \left(2^{\left|\bigcup_{\varphi \in \Phi} K_\varphi\times K_\varphi \times \{\ebelow, \eabove, \edown, \eup\}\right|}\right)^{t}.\] 
Each iteration takes time polynomial in $|\Theta|^\ell$ and $|\T|$. 
 The promised complexity bounds follow.

Deciding the existence of the witnessing trees for a node $u$ of the annotated skeleton is very similar. We can reuse the set $\Theta$ computed for any symbolic edge $e$.
The only delicate issue is that we need to account for  $\fws(e')$ for all edges $e'$ outgoing from $u$ and $\fwt(e'')$ for all edges $e''$ incoming to $u$. Essentially, we check if there exists a type $(\Lambda, R_1, \Lambda_1, \ldots,R_t, b, \Delta)$ -- note the missing $R'$ and $\Lambda'$ -- with $b=0$ and $t \leq \ell+\deg(u)$, compatible with $\T$ and compatible with $p$ modulo $\Theta$, except that for $i=1,2, \dots, \deg(u)$, the components $R_i, \Lambda_i$ must be as specified by $\fws(e')$ and $\fwt(e'')$ for outgoing $e'$ and incoming $e''$, and their corresponding types must be  $(\Lambda_i, R_i^-, \Lambda, 0, \emptyset)$, not required to  belong to $\Theta$. This can be done in time polynomial in $|\Theta|^\ell$, $\T$, and $\mathcal{H}$.
\end{proof}

\begin{corollary}
\label{cor:entailment}
Unrestricted entailment of concept inclusions by an \ALCIF TBox $\T$ using $k$ concept names and $\ell$ at-most constraints can be decided in time $O\big(\mathrm{poly}(|\T|)\cdot 2^{\mathrm{poly}\left(k,\ell\right)}\big)$. 
\end{corollary}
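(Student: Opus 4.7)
The plan is to reduce unrestricted entailment of a concept inclusion to unrestricted unsatisfiability of a C2RPQ modulo a slightly enlarged TBox, and then invoke Theorem~\ref{thm:satisfiability}. The key accounting observation is that the reduction blows up the parameters only by additive constants: the number of concept names becomes $k+O(1)$, the number of at-most constraints stays at $\ell+O(1)$, and the resulting query has size polynomial in $k$. Substituting into the bound $O\big(\mathrm{poly}(|\T|)\cdot 2^{\mathrm{poly}(|p|,k,\ell)}\big)$ collapses the exponent to $\mathrm{poly}(k,\ell)$, as desired.

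Concretely, the entailment instances generated by the completion algorithm are of the shapes $K \sqsubseteq \exists R.K'$ and $K \sqsubseteq \exists^{\leq 1} R.K'$, with $K,K'$ conjunctions of concept names from $\T$. For the first shape, I would introduce two fresh concept names $N$ and $B$, and augment $\T$ with the Horn CIs $N \sqsubseteq A$ for every $A \in K$, together with $N \sqsubseteq \forall R.B$ and $B \sqcap A_1'\sqcap\dots\sqcap A_m' \sqsubseteq \bot$ where $K'=A_1'\sqcap\dots\sqcap A_m'$. Then I would ask whether the atomic C2RPQ $p = \exists x.\, N(x)$ is satisfiable modulo the extended TBox: a witness projects to a model of $\T$ containing a node in $K$ with no $R$-successor in $K'$, and conversely any such counterexample to entailment can be turned into a model of the extended TBox by interpreting $N$ as the offending node and $B$ as all its $R$-successors.

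For the second shape, the non-entailment of $K \sqsubseteq \exists^{\leq 1} R.K'$ requires a $K$-node with two \emph{distinct} $R$-successors in $K'$. To force distinctness without introducing negation into the query, I would add two fresh concept names $B_1,B_2$ together with the CI $B_1 \sqcap B_2 \sqsubseteq \bot$, and pose the C2RPQ
\[
  p \;=\; \exists x,y_1,y_2.\; K(x) \,\land\, R(x,y_1) \,\land\, R(x,y_2) \,\land\, (K'\sqcap B_1)(y_1) \,\land\, (K'\sqcap B_2)(y_2),
\]
where the conjunctions of concept names are rendered as trivial atoms in the standard way. The disjointness of $B_1$ and $B_2$ forces $y_1\neq y_2$, and the two directions of the correspondence between satisfiability of $p$ and non-entailment are again immediate.

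The same template handles the remaining \HornALCIF normal forms (for example $\T \models K \sqsubseteq A$ reduces by adding $A \sqcap B \sqsubseteq \bot$ for a fresh $B$ and querying $\exists x.\, K(x)\land B(x)$), so the corollary extends uniformly to arbitrary Horn concept inclusions, and, by the standard encoding of complex \ALCIF concepts via fresh names, to arbitrary \ALCIF concept inclusions as well. No step is genuinely difficult; the only thing to check carefully is the parameter bookkeeping sketched in the first paragraph, which is what makes the application of Theorem~\ref{thm:satisfiability} yield precisely the stated complexity.
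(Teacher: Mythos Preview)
Your proposal is correct and follows essentially the same approach as the paper: reduce entailment of each CI shape to unsatisfiability of a small C2RPQ modulo $\T$ extended by $O(1)$ fresh concept names and no new at-most constraints, then invoke Theorem~\ref{thm:satisfiability}. The paper's encodings differ only cosmetically (e.g., for $K\sqsubseteq\exists R.K'$ it adds $K'\sqsubseteq\forall R^-.B'$ and $B\sqcap B'\sqsubseteq\bot$ and queries $\exists x.\,(A_1\cdots A_n\cdot B)(x,x)$, rather than introducing your auxiliary $N$), and for the at-most case its query is literally the same as yours up to notation.
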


\begin{proof}
The result holds in full generality, but we only sketch the arguments for the two kinds of concept inclusions we need to compute the completion. For existential constraints, note that \[\T \models  A_1  \sqcap \dots \sqcap A_n \sqsubseteq \exists R. K'\] iff the query \[\exists x. (A_1 \cdot \ldots \cdot A_n \cdot B) (x,x) \] is unsatisfiable modulo the TBox \[\T \cup \big\{K' \sqsubseteq \forall R^-. B'\,,\; B\sqcap B' \sqsubseteq \bot\big\}\,,\] where $B$ and $B'$ are fresh concept names. 
For at-most constraints,  \[\T \models  A_1 \sqcap  \dots \sqcap A_n \sqsubseteq \exists^{\leq 1} R. A'_1 \sqcap  \dots \sqcap A'_m\] iff the query 
\begin{align*}
\exists x, y, z. (A_1 \cdot \ldots \cdot A_n) (x,x) 
\land & (R\cdot A'_1\cdot \ldots \cdot A'_m\cdot B)(x,y)\land \\ \land & (R\cdot A'_1\cdot \ldots \cdot A'_m\cdot B')(x,z)
\end{align*} is unsatisfiable modulo the TBox \[\T \sqcup \big\{B \sqcap B' \sqsubseteq \bot\big\}\] where $B$ and $B'$ are fresh concept names.
\end{proof}

\section{Proof of Hardness}
\label{sec:app-hardness}

\newcommand{\Config}{\mathit{Config}}
\newcommand{\Pos}{\mathit{Pos}}
\newcommand{\pos}{\mathit{pos}}
\newcommand{\Accept}{\mathit{Accept}}
\newcommand{\Start}{\mathit{Start}}
\newcommand{\Symbol}{\mathit{Symbol}}
\newcommand{\State}{\mathit{State}}

\begin{theorem}
  \label{theorem:query-containment-hardness}
  Testing containment of Boolean 2RPQs modulo schema is \EXPTIME-hard.
\end{theorem}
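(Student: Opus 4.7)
The plan is to reduce the acceptance problem for polynomial-space alternating Turing machines to the \emph{complement} of containment of Boolean 2RPQs modulo schema; since APSPACE $=$ \EXPTIME and \EXPTIME is closed under complement, this yields the desired hardness. Fix a polynomial-space ATM $M$ and an input $w$, with space bound $N = |w|^c$. I will construct in polynomial time a schema $S$ and Boolean 2RPQs $p$ and $q$ such that there exists $G \in L(S)$ with $G \models p$ and $G \not\models q$ if and only if $M$ accepts $w$. The intended witness graph $G$ encodes a finite accepting computation tree of $M$ on $w$.

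In the encoding, each configuration is represented by a horizontal chain of $N$ cell-nodes connected by \emph{next-cell} edges. The (single) node label of each cell is a tuple recording the tape symbol, a positional role (leftmost / internal / rightmost), and, at the head cell, the current state together with a configuration-type tag (\emph{initial}, \emph{existential}, \emph{universal}, \emph{accepting}); the number of possible labels is polynomial in $|M|+|w|$. Consecutive configurations are linked by \emph{pair} edges aligning cells at identical positions; universal configurations carry two successor chains, distinguished by \emph{branch-left} and \emph{branch-right} labels. The schema $S$ enforces this geometry via the cardinality constraints $\ONE$, $\MAYBE$, $\NONE$: each cell has exactly one next-cell successor unless it is rightmost, existential configurations have exactly one pair-successor chain, universal configurations have exactly two, and accepting configurations have none. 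The query $p$ simply asserts the existence of a leftmost cell of an initial-configuration chain; the query $q$ is the regex-union (using $+$) of polynomially many constant-radius \emph{violation patterns}: the initial configuration chain does not spell $w$, the state/tag of a head cell is inconsistent with its configuration-type tag, the state is non-accepting at a configuration tagged \emph{accepting}, or the ATM transition rule is violated at some window of adjacent cells viewed across their pair edges. Since $S$ already enforces structural well-formedness (no missing branches, no orphan cells, correct termination), $q$ only has to detect local semantic violations.

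The main obstacle is that a pure 2RPQ must witness each violation along a \emph{single} path, whereas checking a transition naively looks like a CQ-style branching comparison between two configurations. This is resolved by letting the schema do the cross-referencing: the mandatory pair edges rigidly align corresponding cells of successive configurations, so every local consistency check reduces to walking a constant number of next-cell and pair edges and reading off a regular pattern of labels along that walk. Such a walk is a 2RPQ of polynomial length, and disjoining polynomially many such walks stays within 2RPQ via the operator $+$. Consequently, $p \not\subseteq_S q$ holds iff some finite graph conforming to $S$ is a defect-free encoding of an accepting computation tree of $M$ on $w$, which occurs iff $M$ accepts $w$; the construction is polynomial, establishing \EXPTIME-hardness.
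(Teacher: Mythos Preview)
Your overall plan matches the paper's: reduce from acceptance of a polynomial-space ATM, encode accepting computation trees as graphs, let $p$ assert an initial configuration and $q$ detect defects. The gap is the claim that ``$S$ already enforces structural well-formedness (\dots\ correct termination)''. Participation constraints bound in-degree \emph{per edge label}, not in total; since your universal configurations use distinct labels (branch-left, branch-right) and existential ones a third, a configuration can receive one incoming edge of each kind without violating any $\MAYBE$ bound. This permits cycles that pass all of your local semantic checks. Concretely, let $M$ have configurations $I\to D\to C$ with $C$ universal, $\delta_1$ sending $C$ back to $D$, and $\delta_2$ sending $C$ to an accepting $E$; the finite graph with edges $I\to D\to C$, $C\xrightarrow{\,\mathrm{bl}\,}D$, $C\xrightarrow{\,\mathrm{br}\,}E$ conforms to your $S$, satisfies your $p$, and triggers none of your listed $q$-patterns, yet $M$ rejects $w$ because the run tree is infinite along the left branch. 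Hence $p\not\subseteq_S q$ does not imply acceptance.

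The paper handles this with two ingredients you omit. Its negative query contains explicit \emph{structural} disjuncts---two incoming transition edges at a node, any incoming edge at the initial configuration---so that in any $G\not\models q$ every configuration has total in-degree at most one and the root has in-degree zero. More importantly, the paper's positive query is not a bare existence assertion: it performs a complete depth-first traversal of the intended tree (moving up only via $\forall_1^-$ immediately followed by $\forall_2$), checking at every node that the configuration is well-formed and at every leaf that it is accepting; such a walk cannot close on a cycle. Your proposal can be repaired by adding a two-incoming-edges disjunct to $q$ and then arguing that the connected component of the initial configuration is forced to be a finite tree whose leaves your out-degree constraints make accepting---but as written the reduction is unsound.
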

We present a reduction of the acceptance problem of an alternating Turing
machine with a polynomial bound on space. We begin by defining a special variant
of alternating Turing machines. We also present a number of conceptual tools
used in the reduction.

\paragraph{Alternating Turing machines}
We consider a variant of alternating Turing machine with the following particularities:
\begin{itemize}
\item there is a single distinguished initial state that the machine never reenters;
\item there are two special states $q_\mathit{yes}$ and $q_\mathit{no}$ that are
  final (no transition allowed to follow)\footnote{The state $q_\mathit{no}$ is
    not necessary for the purposes of our reduction but we include it for the
    sake of completeness of this variant of ATM};
\item the transition table has exactly two transitions for any non-final state
  and any symbol;
\item there exists 3 special symbols: $\square$ for empty tape space, $\rhd$ for
  left tape boundary, and $\lhd$ for right tape boundary; we only assume that
  the input word does not use those symbols and the transition table handles the
  boundary symbols appropriately.
\end{itemize}
It's relatively easy to see that any alternating Turing machine with
polynomially bounded space can converted to the variant above. 

Formally, an \emph{alternating Turing machine} (ATM) is a tuple
$M=(A,K,q_0,\delta_1,\delta_2)$, where $A$ is a finite alphabet, $K$ is a finite
set of states with two distinguished final states $q_\mathit{yes}$ and
$q_\mathit{no}$ and partitioned into three pair-wise disjoint subsets
$ K=K_\forall\cup K_\exists\cup\{q_\mathit{yes},q_\mathit{no}\}$, $q_0\in K$ is
a distinguished initial state, and
$\delta_i: (K\setminus\{q_\mathit{yes},q_\mathit{no}\})\times A\rightarrow (
K\setminus\{q_0\}) \times A \times \{\mathord{-1},\mathord{+1}\}$ are two
transition tables such that $\delta_i(q,x)=(q',y,d)$ satisfies the following two
conditions:
\begin{enumerate}
\item if $x=\rhd$, then $y=\rhd$ and $d=\mathord{+1}$ and 
\item if $x=\lhd$, then $y=\lhd$ and $d=\mathord{-1}$.
\end{enumerate}
A \emph{configuration} of $M$ is a string of the form
$\rhd \cdot w \cdot q \cdot v \cdot \lhd$, where $q\in K$ and $w,v\in\Sigma^*$
Applying a transition
$(q',z,d)\in  K\times A \times\{\mathord{-1},\mathord{+1}\}$ to
the configuration $\rhd \cdot w \cdot x \cdot q \cdot y \cdot v \cdot \lhd$
yields:
\begin{enumerate}
\item $\rhd \cdot w \cdot q' \cdot x \cdot z \cdot v \cdot \lhd$ if
  $d=\mathord{-1}$
\item $\rhd \cdot w \cdot x \cdot z \cdot q' \cdot v \cdot \lhd$ if
  $d=\mathord{+1}$
\end{enumerate}
We consider ATMs with polynomially bounded space, a class of Turing machines
that defines the class $\ASPACE$ known to coincide with $\EXPTIME$. Recall that
a binary tree is a finite prefix-closed subset $T\subseteq \{1,2\}^*$ and a
labeled-tree is a function $\lambda$ that assigns a label to every element (node) of a
tree.

Given an ATM $M$ and a polynomial $\mathit{poly}(n)$, a \emph{run} of $M$
w.r.t.\ $\mathit{poly}$ on an input
$w\in (\Sigma\setminus\{\rhd,\lhd,\square\})^*$ is a binary tree $\lambda$ whose
nodes are labeled with configurations of $M$ such that:
\begin{enumerate}
\item the root node is labeled with
  $\lambda(\varepsilon)=\rhd\cdot q_0 \cdot w \cdot
  \square^{\mathit{poly}(|w|)-|w|}\cdot\lhd$
\item for non-leaf node $n\in\dom(\lambda)$ let
  $\lambda(n)=\rhd\cdot w \cdot q \cdot x \cdot v \cdot\lhd$; for every
  $i\in\{1,2\}$ if $n$ has a child $n\cdot i$, then the configuration
  $\lambda(n\cdot i)$ is obtained by applying the transition $\delta_i(q,x)$ to
  the configuration $\lambda(n)$. Also, if $q\in  K_\forall$, then $n$ has both
  children $n\cdot1$ and $n\cdot2$ and if $q\in  K_\exists$, then $n$ has
  precisely one child,
\item for every leaf node $n\in\dom(\lambda)$ the configuration $\lambda(n)$ uses a
  final state $q_{\mathit{yes}}$ or $q_{\mathit{no}}$.
\end{enumerate}
A run is \emph{accepting} if and only if all its leaves  use the state
$q_{\mathit{yes}}$. The ATM $M$ (with space bound $\mathit{poly}$) accepts a
word $w$, in symbols $M(w)=\mathit{yes}$ if and only if there is an accepting run of $M$
w.r.t.\ $\mathit{poly}$ on $w$. 

\paragraph{Reduction outline}
We present a reduction of the problem of word acceptance by an ATM with
polynomial bound on space to the complement of the problem of containment of
Boolean 2RPQs in the presence of schema. We point out that the class of
\ASPACE-complete problems is closed under complement, and consequently, this
reduction proves that the query containment problem is \EXPTIME-hard.

More precisely, for an ATM $M$, whose space is bounded by $\mathit{poly}(n)$,
and an input word $w$ we construct a schema $S$ and two Boolean 2RPQs $p$
and $q$ such that
\[
  M(w) = \mathit{yes} \iff p \nsubseteq_S q \iff
  \exists G \in L(S).\ G \models p
  \land G \not\models q\,.
\]
In the sequel, we refer to $p$ as the positive query and to $q$ as the
negative query. Naturally, we present a reduction that is polynomial i.e., the
combined size of $p$, $q$, and $S$ is bounded by polynomial in the size of $M$
and $w$.

The reduction constructs a schema $S$ and queries $p$ and $q$ for which the
counter-example of $p\subseteq_S q$ represents an accepting run of $M$ on
$w$. Before we present the reduction in detail, we introduce 3 conceptual
devices that we use in the reduction: nesting queries, encoding disjunction, and
enforcing tree structure.

\paragraph{Nesting queries}
The reduction employs a relatively large and complex queries and throughout the
reduction we employ \emph{nesting of regular path queries} that is expanded as
follows:
\[
  p[q] = p\cdot q \cdot q^-
\]
with the inverse operator being extended to regular path queries in the standard
fashion.
\begin{align*}
  \varnothing^- = {} & \varnothing\,, &
  \epsilon^-  = {} &\epsilon\,,&
  A^- = {} & A\,,\\
  (\varphi_1 \cdot \varphi_2)^- = {} & \varphi_2^- \cdot \varphi_1^-\,,&
  (\varphi_1 + \varphi_2)^- = {}& \varphi_1^- + \varphi_2^-\,,&
  (\varphi^*)^- = {} & (\varphi^-)^*\,.&
\end{align*}
We point out that, in general, this definition is not equivalent to the standard
meaning of nesting of regular expressions but in our reduction nested queries
are evaluated at nodes for which the schema ensures the intended meaning.

\paragraph{Encoding disjunction}
The first conceptual device allows us to express disjunction in schemas, which
we illustrate on the following example. Take two node labels $A$ and $B$ and
suppose we wish to require $A$-nodes to have either one outgoing $a$-edge or one
outgoing $b$-edge to a node with label $B$. The schema formalism allows us to
make the following restriction.
\[
  A \rightarrow a:B^\MAYBE, b:B^\MAYBE\,.
\]
Alone, it is insufficient as it allows nodes that do not fulfill the disjunctive
requirement: a $A$-node that has no outgoing edge or has both outgoing edges. We
remove those cases with the help of a positive and a negative query. Namely, we
define
\[
  p = A[(a+b)]
  \qquad
  \text{and}
  \qquad
  q = A[a][b]
\]
and we observe that in a graph that conforms to the above schema any node with
label $A$ that satisfies $p$ and does not satisfy $q$ has precisely one
outgoing edge.

\paragraph{Enforcing tree structure}
In our reduction we aim at constructing a tree-shaped counter examples and we
use the positive query to diligently enforce disjunction in every node. In
essence, the positive query will traverse the counter-example and impose
satisfaction of a relevant query in every node. We present this device on an
example where we define rooted binary trees. The general shape of the tree
follows the schema in Figure~\ref{fig:tree-schema}.
\begin{figure}[htb]
\begin{tikzpicture}[
  >=stealth',
  punkt/.style={circle,minimum size=0.125cm,draw,fill,inner sep=0pt, outer sep=0.125cm},
  ] 
  \begin{scope}
    \path (0,0) node[punkt] (Node) {} node[left=0.1cm]{\sl Node};
    \path (2,0) node[punkt] (Leaf) {} node[right=0.1cm]{\sl Leaf};
    \draw (Node)
    edge[->,loop above]
    node[above] {$a_1$}
    node[pos=0.2,left] {\small\MAYBE}
    node[pos=0.8,right] {\small\MAYBE}
    (Node);

    \draw (Node)
    edge[<-,loop below]
    node[below] {$a_2$}
    node[pos=0.2,right] {\small\MAYBE}
    node[pos=0.8,left] {\small\MAYBE}
    (Node);

    \draw (Node)
    edge[->,bend left]
    node[above] {$a_1$}
    node[pos=0.2,above] {\small\MAYBE}
    node[pos=0.8,above] {\small\MAYBE}
    (Leaf);

    \draw (Node)
    edge[->,bend right]
    node[below] {$a_2$}
    node[pos=0.2,below] {\small\MAYBE}
    node[pos=0.8,below] {\small\MAYBE}
    (Leaf);

  \end{scope}
\end{tikzpicture}
\caption{\label{fig:tree-schema}Example schema for modeling trees.}
\end{figure}
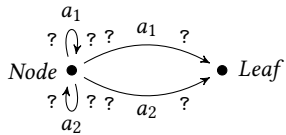

Naturally, the schema alone is insufficient to capture the right structure of
the tree.
Consequently, additional requirements are imposed with the help of the following
negative Boolean query
\[
  q =
  \mathit{Node}[a_1\cdot\mathit{Node}][a_1\cdot\mathit{Leaf}]
  +
  \mathit{Node}[a_2\cdot\mathit{Node}][a_2\cdot\mathit{Leaf}]
  +
  [a_1^-][a_2^-]
\]
that ensures that an inner node does not have two outgoing edges with the same
label and that no node has two incoming edges. We point out that when $q$ is
not satisfied at a node, schema ensures that it has at most one incoming
edge. To enforce the correct tree structure we define the following unary query 
\begin{multline*}
  p_\mathit{Tree}(x) =
  \big(((\mathit{Node}[a_1][a_2] \cdot a_1)^*\cdot\mathit{Leaf}\cdot(a_2^-)^*
  \cdot 
  a_1^-\cdot a_2)^*\cdot{} \\
  \mathit{Leaf} \cdot (a_2^-)^*\big)(x,x)\,.
\end{multline*}
The key observation here is that $a_1^-$ is always followed by $a_2$ and the
query can move up the tree only after a leaf has been reached. This ensures a
proper traversal of the structure, with every node satisfying the pattern
$\mathit{Node}[a_1][a_2]$. Consequently, for any connected graph $G$ that
conforms to the above schema, satisfies $p$, and does not satisfy $q$, $G$ is a
binary tree.

\paragraph{The input of the reduction}
We fix an ATM $M=(A,K,q_0,\delta_1,\delta_2)$ whose space is bounded by
$\mathit{poly}(n)$ and we fix an input word
$w\in(A\setminus\{\rhd,\lhd,\square\})^*$. We let $n=|w|$,
$m=\mathit{poly}(|w|)$, and assume that $A=\{a_1,\ldots,a_k\}$ and that
$K=\{q_0,q_1,\ldots,q_\ell\}$. Throughout the description of the reduction,
unless we say otherwise, we use $a,b$ to range over symbols in $A$, we use $q,p$
to range over states in $K$, and we use $i,j$ to range over tape positions
$\{1,\ldots,m\}$.

\paragraph{The schema}
We construct a schema $S$ whose signature is
\begin{align*}
  & \Sigma_S=\{\Config,\Pos,\mathit{Symb},\mathit{St}\}\,,\\
  & \Gamma_S=\{\forall_1,\forall_2,\exists_1,\exists_2,\pos_1,\ldots,\pos_m\}
    \cup \{a_1,\ldots,a_k\} \cup \{q_0,\ldots,q_\ell\}\,.
\end{align*}
In essence, $\Config$-nodes represent configurations and $\Pos$-nodes represent
tape cells. The edges labeled with $\{\forall_1,\forall_2,\exists_1,\exists_2\}$
are transition edges that connect configurations. The schema $S$ is presented in
Figure~\ref{fig:reduction-schema}.
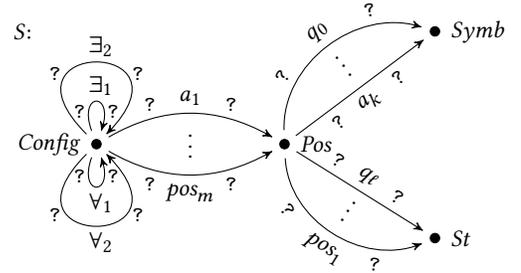
\begin{figure}[htb]
\begin{tikzpicture}[
  >=stealth',
  punkt/.style={circle,minimum size=0.125cm,draw,fill,inner sep=0pt, outer sep=0.125cm},
  ] 
  \begin{scope}
    \node[left] at (-0.75,1.5) {$S$:};
    \path (0,0) node[punkt] (Config) {} node[left=0.1cm]{$\Config$};
    \path (2.5,0) node[punkt] (Pos) {} node[right=0.1cm]{$\Pos$};

    \path (4.5,1.5) node[punkt] (Symb) {} node[right=0.1cm]{\it Symb};
    \path (4.5,-1.25) node[punkt] (St) {} node[right=0.1cm]{\it St};
    
    \draw[->] (Node)
    .. controls +(110:0.75) and +(70:0.75) ..
    node[above=-1pt] {$\exists_{\mathrlap{1}}$}
    node[pos=0.2,left=-2pt] {\small\MAYBE}
    node[pos=0.8,right=-2pt] {\small\MAYBE}
    (Node);

    \draw[->] (Node)
    .. controls +(135:2) and +(45:2) ..
    node[above=-1pt] {$\exists_{\mathrlap{2}}$}
    node[pos=0.3,left=-1pt] {\small\MAYBE}
    node[pos=0.7,right=-1pt] {\small\MAYBE}
    (Node);

    \draw[->] (Node)
    .. controls +(-110:0.75) and +(-70:0.75) ..
    node[below=-1pt] {$\forall_{\mathrlap{1}}$}
    node[pos=0.2,left=-2pt] {\small\MAYBE}
    node[pos=0.8,right=-2pt] {\small\MAYBE}
    (Node);

    \draw[->] (Node)
    .. controls +(-135:2) and +(-45:2) ..
    node[below=-1pt] {$\forall_{\mathrlap{2}}$}
    node[pos=0.3,left=-1pt] {\small\MAYBE}
    node[pos=0.7,right=-1pt] {\small\MAYBE}
    (Node);
    
    \draw[] (Config)
    edge[->,bend right]
    node[below] {$\pos_m$}
    node[pos=0.25,below] {\small\MAYBE}
    node[pos=0.75,below] {\small\MAYBE}
    (Pos);
    
    \path (Config) -- node[below=2pt,sloped]{\smash{\vdots}} (Pos);
    
    \draw[] (Config)
    edge[->,bend left]
    node[above] {$a_1$}
    node[pos=0.25,above] {\small\MAYBE}
    node[pos=0.75,above] {\small\MAYBE}
    (Pos);

    \draw[bend angle=0] (Pos)
    edge[->,bend right]
    node[below,sloped] {$a_k$}
    node[pos=0.25,below,sloped] {\small\MAYBE}
    node[pos=0.75,below,sloped] {\small\MAYBE}
    (Symb);
    
    \path (Pos)
    .. controls +(55:0.75) and +(-165:0.75) ..
    node[sloped]{\smash{\vdots}}
    (Symb);
    
    \draw[bend angle=55] (Pos)
    edge[->,bend left]
    node[above,sloped] {$q_0$}
    node[pos=0.25,above,sloped] {\small\MAYBE}
    node[pos=0.75,above,sloped] {\small\MAYBE}
    (Symb);
    
    \draw[bend angle=0] (Pos)
    edge[->,bend right]
    node[above,sloped] {$q_\ell$}
    node[pos=0.25,above,sloped] {\small\MAYBE}
    node[pos=0.75,above,sloped] {\small\MAYBE}
    (St);
    
    \path (Pos)
    .. controls +(-65:0.75) and +(185:0.75) ..
    node[below,sloped]{\smash{\vdots}}
    (St);
    
    \draw[bend angle=55] (Pos)
    edge[->,bend right]
    node[below,sloped] {$\pos_1$}
    node[pos=0.25,below,sloped] {\small\MAYBE}
    node[pos=0.75,below,sloped] {\small\MAYBE}
    (St);

  \end{scope}
\end{tikzpicture}
\caption{\label{fig:reduction-schema}Schema for the reduction.}
\end{figure}
We introduce macros that illustrate the intended meaning of the remaining edge
labels. The first macro checks that the symbol at position $i$ on the tape is
$a$.
\[
  \Symbol_{i,a} = \Config\left[\pos_i\cdot a\right]\,.
\]
The next one checks that the configuration is a given state $q$ with the head at
a given position $i$.
\[
  \State_{i,q} = \Config\left[pos_i\cdot q\right]\,.
\]
Finally, we also introduce a macro that asserts the state of a configuration
without any constraint on the position of the head.
\[
  \State_{q} = \Config\big[\textstyle\bigplus_i pos_i\cdot q\big]\,.
\]
And analogously, a macro that asserts heads position only
\[
  \mathit{Head}_i = \Config\big[\textstyle\bigplus_q pos_i\cdot q\big]\,.
\]
\paragraph{The negative query}
We define a number of queries that detect violations of good structure of a run;
their union will be used as the negative query. First, we identify
configurations that has two different symbols at a position of the tape.
\[
  q_\mathit{TwoSymbols} =
  \Config\left[\textstyle\bigplus_i\bigplus_{a\neq b} \Symbol_{i,a} \cdot \Symbol_{i,b}\right]\,.
\]
Similarly, we identify configurations with two different heads.
\[
  q_\mathit{TwoHeads} =
  \Config\left[
    \textstyle\bigplus_{i\neq j \lor p \neq q} \State_{i,q} \cdot \State_{j,p}
  \right]\,.
\]
Next, we identify configurations with outgoing transition edges that do not fit
their state.
\[
  q_{\mathit{BadTransitionEdges}} = 
  \Config\left[
  \begin{aligned}
    & \textstyle\bigplus_{q\in K_\forall} \State_q[\exists_1 + \exists_2] + {} \\ 
    & \textstyle\bigplus_{q\in K_\exists} \State_q[\forall_1 + \forall_2] + {} \\
    & \State_{q_\mathit{yes}}[\forall_1 + \forall_2 + \exists_1 + \exists_2] + {} \\
    & \State_{q_\mathit{no}}[\forall_1 + \forall_2 + \exists_1 + \exists_2] \\
  \end{aligned}
\right]\,.
\]
Additionally, identify configurations with existential states that have both
existential outgoing edges (the definition of a run requires precisely one).
\[
  q_{\mathit{TwoExistentialEdges}} =
  \textstyle\bigplus_{q\in K_\exists} \State_q[\exists_1][\exists_2]\,.
\]
The initial configuration, which is the only configuration with state $q_0$,
should be the root of the run and as such it should not have any incoming transition edges. 
\[
  q_{\mathit{BadTreeRoot}} = \State_{q_0}[\exists_1^- + \exists_2^- + \forall_1^- + \forall_2^-]
  \,.
\]
To make sure that the run is a tree, no configuration should have two incoming
transitions (note that the schema forbids more than one incoming edge with the
same label).
\[
  q_{\mathit{BadTreeNode}} = \Config\left[
    \begin{aligned}
      &[\exists_1^-][\exists_2^-] +
      [\exists_1^-][\forall_1^-] +
      [\exists_1^-][\forall_2^-] +{} \\
      &[\exists_2^-][\forall_1^-] +
      [\exists_2^-][\forall_2^-] +
      [\forall_1^-][\forall_2^-]
    \end{aligned}
    \right]\,.
\]
Similar requirements apply to tape: we do not allow tape positions that are
used by two different configurations.
\begin{align*}
  q_{\mathit{BadTape}} =
  & \textstyle\bigplus_{i\neq j} \Pos[\pos_i^-][\pos_j^-] + {}\\
  & \textstyle\bigplus_{p\neq q} \mathit{St}[p^-][q^-] + {}\\
  & \textstyle\bigplus_{a\neq b} \mathit{Symb}[a^-][b^-]\,.
\end{align*}
Finally, we construct the union of the above queries.
\begin{align*}
  q_M = {} & q_\mathit{TwoSymbols} + q_\mathit{TwoHeads} +
           q_{\mathit{BadTransitionEdges}} + {} \\
         & q_{\mathit{TwoExistentialEdges}} +
           q_{\mathit{BadTreeRoot}} + q_{\mathit{BadTreeNode}} +
           q_{\mathit{BadTape}}\,.
\end{align*}
\paragraph{The positive query}
We first construct a query that ensures that a configuration is valid and then
we design a path query that traverses the tree and ensures that each of its
configurations is valid. A valid configuration satisfies the following
queries. It has a head at some position.
\[
  p_{\mathit{Head}} = \Config\left[\textstyle\bigplus_i \mathit{Head}_i\right]\,.
\]
Every position has a symbol.
\[
  p_{\mathit{Tape}} =
  \Config\left[\textstyle\bigplus_{a} Symbol_{1,a}\right]
  \ldots
  \left[\textstyle\bigplus_{a} Symbol_{m,a}\right]\,.
\]
The configuration has the required outgoing transitions and only final states
are accepted in leaves.
\[
  p_{\mathit{Transition}} =
  \Config\left[
    \begin{aligned}
      & \textstyle\bigplus_{q\in K_\forall}\State_q [\forall_1][\forall_2] +{}\\
      & \textstyle\bigplus_{q\in K_\exists}\State_q [\exists_1 + \exists_2] +{}\\
      & \State_{q_\mathit{yes}} + \State_{q_\mathit{no}}
    \end{aligned}
  \right]\,.
\]
The positive query ensuring that transitions are executed properly is more
difficult to define and we decompose it into several macros. First, we define a
macro $\mathit{Move}_{i,q,a}$ that verifies that that a configuration in state
$q$ at position $i$ with symbol $a\in\Sigma$ has the right children
configurations. We define this macro depending on the type of state:

\medskip
\noindent
(1) For $q\in\{q_\mathit{yes},q_\mathit{no}\}$ no children are necessary (the
negative query $q_\mathit{BadTransitionsEdges}$ forbids any)
\begin{align*}
\mathit{Move}_{i,q,a} = {}
&\State_{q}\cdot\Symbol_{i,a}\,.
\intertext{
(2) For $q\in K_\exists$ we check that one of the transitions is implemented
  (the negative query $q_\mathit{TwoExistentialEdges}$ forbids more than
  one). We let $\delta_1(q,a) = (q_1,b_1,d_1)$ and $\delta_2(q,a) = (q_2,b_2,d_2)$.
}
\mathit{Move}_{i,q,a}
= {} & [\State_{i,q}\cdot\Symbol_{i,a}\cdot\exists_1\cdot\State_{i+d_1,q_1}\cdot\Symbol_{i,b_1}]\\
+ {}\ & [\State_{i,q}\cdot\Symbol_{i,a}\cdot\exists_2\cdot\State_{i+d_2,q_2}\cdot\Symbol_{i,b_2}]\,.
\intertext{
(3) For $q\in K_\forall$ both transitions must be implemented. Again we let
$\delta_1(q,a) = (q_1,b_1,d_1)$ and $\delta_2(q,a) = (q_2,b_2,d_2)$.
}
\mathit{Move}_{i,q,a}
= {} & [\State_{i,q}\cdot\Symbol_{i,a}\cdot\forall_1\cdot\State_{i+d_1,q_1}\cdot\Symbol_{i,b_1}]\\
\cdot {}\ \ & [\State_{i,q}\cdot\Symbol_{i,a}\cdot\forall_2\cdot\State_{i+d_2,q_2}\cdot\Symbol_{i,b_2}]\,.
\end{align*}
Now, a transition is executed correctly if the following positive query holds at
a configuration node. 
\[
  p_{\mathit{Execution}}=
  \Config\left[\textstyle\bigplus_{i,q,a} \mathit{Move}_{i,q,a}\right]\,.
\]
To handle the tape we need to make sure that 1) the tape of the initial
configuration contains precisely the input word and 2) that symbols at the
positions without head are copied correctly. For the first, we define the
following macro.
\[
  \mathit{InitTape} = \Symbol_{1,w_1}\cdot\ldots\Symbol_{n,w_n}
  \cdot\Symbol_{n+1,\square}\cdot\ldots\cdot\Symbol_{m,\square}\,.
\]
The next macro verifies that the symbol at a position $i$ of
the tape is a correct copy of its preceding configuration.
\[
  \mathit{PosCopy}_{i} =
  \left[\textstyle\bigplus_{a} \Symbol_{i,a}
  (\exists_1+\exists_2+\forall_1+\forall_2)^-\Symbol_{i,a}\right]\,.
\]
Naturally, when the head in the preceding configuration is at position
$i$, then we must only verify that symbols at positions other
than $i$ are copied.
\begin{align*}
  \mathit{TapeCopy} = 
  \textstyle\bigplus_{i}\big(
  & [(\exists_1+\exists_2+\forall_1+\forall_2)^-\mathit{Head}_i] \cdot{} \\
  & \mathit{PosCopy}_{1}\cdot\ldots\cdot\mathit{PosCopy}_{i-1} \cdot {} \\
  & \mathit{PosCopy}_{i+1}\cdot\ldots\cdot\mathit{PosCopy}_{m}\big)\,.
\end{align*}
Finally, the query that verifies the correctness of the tape follows.
\[
  p_\mathit{TapeCopy} = \Config\left[
    \State_{1,q_0}\cdot\mathit{InitTape} + \mathit{TapeCopy}
    \right]\,.
\]
Now, we take the conjunction of the queries that verify local correctness of a
configuration. 
\[
  p_\mathit{Config} =
  p_\mathit{Head}
  \cdot
  p_\mathit{Tape}
  \cdot
  p_\mathit{Transition}
  \cdot
  p_\mathit{Execution}
  \cdot
  p_\mathit{TapeCopy}\,.
\]
Additionally, we define a configuration that is a leaf (accepting)
\[
  p_\Accept = p_\Config \cdot \State_{q_\mathit{yes}}\,.
\]
And, the initial configuration
\[
  p_\Start = p_\Config \cdot \State_{q_0}\,.
\]
Finally, we define the positive query, based on the ideas of enforcing tree
structure in $p_\mathit{Tree}$. It traverses the counter-example and ensures
that it contains only good configurations.
\begin{align*}
  & p_{M,w} = {}
  p_\Start {} \cdot{} \\
  & \hspace{1ex}\big((p_\Config \cdot (\forall_1 + \exists_1 + \exists_2))^* \cdot
    p_\Accept \cdot (\exists_1^- + \exists_2^- + \forall_2^-)^*
    \cdot \forall_1^-\cdot\forall_2 \big)^*
    \cdot{}\\
  & \hspace{1.8ex}(p_\Config \cdot (\forall_1 + \exists_1 + \exists_2))^* \cdot
    p_\Accept \cdot
    (\exists_1^- + \exists_2^- + \forall_2^-)^* \cdot p_\Start\,.
\end{align*}
Before stating the main proof we present in Figure~\ref{fig:positive-query-automaton} a
conceptual automaton that corresponds to the above Boolean 2RPQ.
\begin{figure}[htb]
\begin{center}
\begin{tikzpicture}[>=latex]

  \node (start) at (-0.5,0.75) {};
  \node[circle,draw] (q0) at (0,0) {$q_0$};
  \node[circle,draw] (q1) at (1.75,0) {$q_1$};
  \node[circle,draw] (q2) at (4.5,0) {$q_2$};
  \node[circle,draw,double] (q3) at (6.25,0) {$q_3$};

  \draw[semithick] (start) edge[->,bend left] (q0);
  \draw[semithick] (q0) edge[->] node[above]{$p_\Start$} (q1);
  \draw[semithick,bend angle=15] (q1) edge[->, bend left] node[above]{$p_\Accept$} (q2);
  \draw[semithick,bend angle=15] (q2) edge[->,bend left] node[below]{$\forall_1^-\cdot\forall_2$} (q1);
  \draw[semithick] (q2) edge[->] node[above]{$p_\Start$} (q3);
  \draw[semithick] (q1) edge[->,loop] node[above=3pt,pos=0.6] {$p_\Config\cdot(\forall_1+\exists_1+\exists_2)$} (q1);
  \draw[semithick] (q2) edge[->,loop] node[above=3.5pt,pos=0.4] {$\forall_2^-+\exists_1^-+\exists_2^-$} (q2);
\end{tikzpicture}
\end{center}
\caption{\label{fig:positive-query-automaton} Conceptual automaton of the positive query $p_{M,w}$.}
\end{figure}
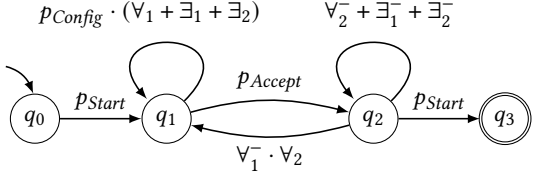
In the proof below, we refer to $p_{i,j}$ as the query defined with the above
automaton whose initial state is $q_i$ and final state is $q_j$. The main claim
follows.
\begin{claim}
  $p_{M,w}\not\subseteq_S q_M$ if and only if $M(w)=\mathit{yes}$.
\end{claim}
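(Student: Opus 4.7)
The plan is to prove both directions by providing explicit constructions relating accepting runs of $M$ on $w$ with counter-examples to the containment $p_{M,w}\subseteq_S q_M$. The key intuition is that $p_{M,w}$ encodes a depth-first traversal of a binary tree of configurations: read the automaton in Figure~\ref{fig:positive-query-automaton} as starting at the root with $p_\Start$, descending via $\forall_1$, $\exists_1$, or $\exists_2$ edges while checking $p_\Config$ at each step, reaching an accepting leaf via $p_\Accept$, then backtracking along inverses of $\forall_2$, $\exists_1$, $\exists_2$ until a $\forall_1^-$ is followed by a $\forall_2$ (i.e., switching from the first to the second universal child), and iterating. The negative query $q_M$ simultaneously rules out all the local errors: non-deterministic tapes, multiple heads, bad transition edges, multiple incoming transitions, and DAG-sharing on tape nodes. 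Together, $p_{M,w}$ guarantees the presence of an accepting run, and $\lnot q_M$ guarantees uniqueness of its syntactic features.

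For the direction $M(w)=\mathit{yes}\Rightarrow p_{M,w}\not\subseteq_S q_M$, I would construct the canonical graph $G$ from an accepting run $\lambda$. Every node $n$ of $\lambda$ contributes a $\Config$-node $c_n$, connected to fresh $\Pos$-nodes $c_{n,1},\ldots,c_{n,m}$ via $\pos_i$ edges. Each $\Pos$-node $c_{n,i}$ receives exactly one outgoing edge labeled by the symbol at position $i$ in $\lambda(n)$, pointing to a fresh $\mathit{Symb}$-node, and, if position $i$ holds the head in state $q$, an additional edge labeled $q$ pointing to a fresh $\mathit{St}$-node. Parent-child edges in $\lambda$ become $\forall_1,\forall_2,\exists_1,\exists_2$-edges exactly according to the transition used. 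By construction this $G$ conforms to $S$, and every configuration node satisfies $p_\Config$, every leaf satisfies $p_\Accept$, and the root satisfies $p_\Start$. A routine induction on the structure of $\lambda$ (matching the loops in the automaton of $p_{M,w}$) shows $G\models p_{M,w}$, while each disjunct of $q_M$ fails by direct inspection.

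For the direction $p_{M,w}\not\subseteq_S q_M\Rightarrow M(w)=\mathit{yes}$, I would fix a graph $G\in L(S)$ with $G\models p_{M,w}$ and $G\not\models q_M$ and a witnessing path $\pi$ for $p_{M,w}$. From $\pi$ I read off a sequence of $\Config$-nodes traversed while the automaton is in state $q_1$ (the ``descent/leaf'' state). Using the absence of $q_\mathit{BadTreeRoot}$ and $q_\mathit{BadTreeNode}$, these configuration nodes form a tree when one keeps only the transition edges actually followed by $\pi$. The local queries embedded in $p_\Config$, together with the absence of $q_\mathit{TwoSymbols}$, $q_\mathit{TwoHeads}$, $q_\mathit{BadTransitionEdges}$, $q_\mathit{TwoExistentialEdges}$, and $q_\mathit{BadTape}$, guarantee that every visited configuration encodes a unique machine configuration, that its transitions conform to $\delta_1,\delta_2$, and that tape copying between parent and child is faithful at every non-head position (with the head position correctly updated by $\mathit{Move}_{i,q,a}$). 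Finally, the structure of $p_{M,w}$ forces the traversal to backtrack up every $\forall_1$-branch to also visit the $\forall_2$-sibling, so every $\forall$-node has both children in the extracted tree; every leaf reached carries $\State_{q_{\mathit{yes}}}$ via $p_\Accept$; and $p_\Start$ at the endpoints pins the root to be the initial configuration. The extracted tree is therefore an accepting run of $M$ on $w$.

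The main obstacle is the $\Rightarrow$ direction and specifically showing that $p_{M,w}$ really forces a \emph{complete} DFS of the run, i.e., that every $\forall$-configuration visited in state $q_1$ is revisited after exploring its $\forall_1$-subtree so that its $\forall_2$-subtree is also entered. This must be argued by decomposing $\pi$ according to the loops of the automaton in Figure~\ref{fig:positive-query-automaton} and showing that the only way to reach the final $p_\Start$ segment is to balance every downward $\forall_1$ step with a subsequent $\forall_1^-\cdot\forall_2$ transition; the absence of $q_\mathit{BadTreeNode}$ (no configuration has two incoming transitions) is what prevents $\pi$ from cheating by re-entering an already-visited subtree through a different edge. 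A secondary technical point is that $G$ may contain extra garbage outside the visited subtree, but since the traversal itself defines the run and satisfaction of $p_\Config$ is local, such extra structure is harmless.
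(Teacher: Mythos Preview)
Your proposal is correct and follows essentially the same approach as the paper's proof. The ``if'' direction is identical: construct the canonical graph from an accepting run and check it against $S$, $p_{M,w}$, and $\lnot q_M$; the paper also proves $G\models p_{M,w}$ by induction on the height of nodes in $\lambda$, showing each $c_n$ satisfies the loop query $p_{1,2}$.

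For the ``only if'' direction the paper organizes the argument slightly differently, and the difference is worth noting because it cleanly dispatches what you call the ``main obstacle.'' Rather than tracking a DFS along the witnessing path and arguing that every $\forall_1$ step is eventually balanced, the paper first observes that the start and end nodes $u_0$, $v_0$ of the witnessing path for $p_{M,w}$ must coincide: both satisfy $p_\Start$, hence carry state $q_0$; by $q_{\mathit{BadTreeRoot}}$ neither has an incoming transition edge; by $q_{\mathit{BadTreeNode}}$ every $\Config$-node has at most one incoming transition edge, so the $\Config$-nodes reachable from $u_0$ form a tree in which $u_0$ is the unique node with in-degree $0$; since $v_0$ is reachable from $u_0$ along $\pi$ and also has in-degree $0$, $v_0=u_0$. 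From this the paper gets a $u_0$-to-$u_0$ path witnessing $p_{1,2}$ and proves, by induction on the distance from $u_0$, that \emph{every} reachable $\Config$-node $v$ admits a $v$-to-$v$ path witnessing $p_{1,2}$ (hence satisfies $p_\Config$, and if a leaf, $p_\Accept$). Your balancing argument is equivalent in content---the in-degree constraints are exactly what makes it work---but making $u_0=v_0$ explicit up front is what turns the informal ``$\pi$ cannot cheat'' into a one-line structural fact, and it removes the need to reason about DFS completeness along the path.
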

\begin{proof}
  For the \emph{if} direction, we take the accepting run $\lambda$ and construct
  the corresponding graph $G$ as follows. The nodes and their labels are as follows.
  \begin{align*}
    \Config^G& = \{c_n \mid n\in\dom(\lambda)\},\\
    \Pos^G& =    \{t_{n,i} \mid n\in\dom(\lambda), 1 \leq i \leq M\}, \\
    \mathit{St}^G & = \{s_n \mid n\in\dom(\lambda)\}, \\
    \mathit{Symb}^G & = \{e_{n,i} \mid n\in\dom(\lambda), 1 \leq i \leq M\}\,.
  \end{align*}
  The edges of $G$ are:
  \begin{enumerate}
    \item $(c_n,\pos_i,t_{n,i})$ for every $n\in\dom(\lambda)$ and
    $i \in\{1,\ldots,M\}$,
    \item $(t_{n,i},q,s_n)$ for every $n\in\dom(\lambda)$ where $q$ is the state of
    configuration $\lambda(n)$;
    \item $(t_{n,i},a,e_{i,n})$ for every $n\in\dom(\lambda)$ and
    $i \in\{1,\ldots,M\}$ where $a$ is the symbol at position $i$ of the tape of
    configuration $\lambda(n)$;
  \item $(c_n,\forall_1, c_{n\cdot 1})$ and $(c_n,\forall_2,c_{n\cdot 2})$ for every
    $n\in\dom(\lambda)$ such that the configuration $\lambda(n)$ is at state
    $q\in K_\forall$;
  \item $(c_n,\exists_j,c_{n\cdot j})$ for every $n\in\dom(\lambda)$ such that
    the configuration $\lambda(n)$ is at state $q\in K_\exists$ and $n$ has a
    child $n\cdot j$ in $\lambda$ for some $j\in\{1,2\}$.
  \end{enumerate}
  It is easy to show that $G$ satisfies the schema $S$, does not satisfy $q$,
  all $\Config$-nodes satisfy $p_\Config$, the root node satisfies $p_\Start$
  and every leaf node satisfies $p_\Accept$.

  With a simple induction, on the height of a node $n\in\dom(\lambda)$, we prove
  that for any $n\in\dom(\lambda)$ the node $c_n$ satisfies the query
  $p_{1,2}$. This shows that the root node $c_\varepsilon$ satisfies the query
  $p_{0,3}=p$.

  For the \emph{only if} direction, we take any $G$ that satisfies $S$,
  satisfies $p$, and does not satisfy $q$. W.l.o.g. we can assume that $G$ is
  connected; otherwise we take any connected component that satisfies $p$. We
  show that $G$ is a tree encoding an accepting run of $M$ on $w$. Note that $q$
  is a Boolean RPQ, and thus a single two-way regular expression. Thus, in the
  sequel we analyze its witnessing paths in $G$ but $p$ should not be confused
  with a binary query; a Boolean RPQ ask the existence of a witnessing path
  without the need to report its endings.

  Take any pair of nodes $u_0$ and $v_0$ such that there is a path
  from $u_0$ to $v_0$ that witnesses $q$ (which is a regular expression). Since
  $G$ does not have a node with two incoming edges ($q_\mathit{BadTreeNode}$ and
  $q_\mathit{BadTape}$ are not satisfied at any node), $u_0$ and $v_0$ are the
  same node. Consequently there is a path from $u_0$ to $u_0$ that witnesses
  $p_{1,2}$ and we show with an induction on the length of the path from $u_0$
  to any reachable $\Config$-node $v$ that there is a path form $v$ to $v$ that
  witnesses $p_{1,2}$, and consequently, $v$ satisfies $p_\Config$. This implies
  that $G$ has the form of a tree, all of its $\Config$-nodes satisfy
  $p_\Config$ and all its leaves satisfy $p_\Accept$. Moreover, we can construct
  an accepting run $\lambda$ from $G$ that shows that $M(w)=\mathit{yes}$.
\end{proof}
\noindent
Finally, we observe that the sizes of $S$, $p$, and $q$ are polynomial in the
size of $M$ and $w$, which proves the main claim.

\noindent
The hardness of containment in the presence of schema implies hardness of
the static analysis problems we study.  
\begin{lemma}
  \label{lemma:static-analysis-hardness}
  Type checking, equivalence, and schema elicitation are \EXPTIME-hard.
\end{lemma}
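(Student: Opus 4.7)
The plan is to reduce Boolean 2RPQ containment modulo schema---\EXPTIME-hard by Theorem~\ref{theorem:query-containment-hardness}---to each of the three problems. Fix $S$ and Boolean 2RPQs $p, q$ over the signature of $S$ for which we wish to decide $p \subseteq_S q$. All three reductions introduce fresh node labels $A, B$, a fresh edge label $r$, and $0$-ary node constructors $f_A, f_B$; note that a Boolean 2RPQ is already an acyclic C2RPQ and so may legally appear in a rule body.

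For equivalence, I would take $T_1 = \{A(f_A()) \leftarrow q\}$ and $T_2 = T_1 \cup \{A(f_A()) \leftarrow p\}$. Each output is either the empty graph or the singleton $\{f_A()\}$ carrying label $A$; $T_1(G)$ is nonempty iff $G \models q$ and $T_2(G)$ is nonempty iff $G \models p \lor q$. Hence $T_1 \equiv_S T_2$ iff $p \subseteq_S q$.

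For type checking, I would use
\[
  T = \{A(f_A()) \leftarrow p,\ A(f_A()) \leftarrow q,\ B(f_B()) \leftarrow q,\ r(f_A(), f_B()) \leftarrow q\}
\]
together with a target schema $S'$ over $\Gamma_{S'}=\{A,B\}$ and $\Sigma_{S'}=\{r\}$ in which $\delta_{S'}(A,r,B)=\PLUS$ and all other cardinalities are set to $\MANY$. For every $G \in L(S)$, the node $f_A()$ carries label $A$ iff $G \models p \lor q$, while $f_B()$ and the $r$-edge appear iff $G \models q$; in particular every created node is labelled. Type checking thus fails exactly when some $G \in L(S)$ satisfies $p$ without $q$, so it succeeds iff $p \subseteq_S q$.

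For schema elicitation, I would reuse the above $T$ and reduce to deciding whether the elicited schema equals a prescribed target. A case split on $T(G)$ across $G \in L(S)$ shows that every entry of the elicited schema $S^*$ is forced by the static shape of $T$ except $\delta_{S^*}(A,r,B)$, which equals $\ONE$ if $p \subseteq_S q$ (whenever an $A$-node is created, $q$ holds and supplies the $r$-edge) and $\MAYBE$ otherwise. Letting $S'$ be the schema matching the ``yes'' case, $S^* = S'$ iff $p \subseteq_S q$; since schema equivalence is decidable in \PTIME (cf.\ the paragraph preceding Theorem~\ref{thm:main}), a subexponential algorithm for schema elicitation would contradict Theorem~\ref{theorem:query-containment-hardness}. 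The main subtlety throughout will be checking that the transformation language's restrictions---a single injective constructor per label with pairwise disjoint ranges, and no negation in rule bodies---do not obstruct these encodings; this is resolved by using $0$-ary constructors to sidestep coordinating variables across rules, and by realising the disjunctive effect of $p \lor q$ through two rules sharing a head.
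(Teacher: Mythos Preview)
Your proposal is correct and follows essentially the same template as the paper's proof: encode the body $q$ into rules that produce the required structure and $p$ into a rule that produces an $A$-node, so that $p\subseteq_S q$ holds precisely when every $A$-node is accompanied by that structure. The only difference is cosmetic: the paper reduces from \emph{unary} 2RPQ containment (via Corollary~\ref{cor:boolean-rpq-containement}) and uses a unary constructor $f_A(x)$ with a self-loop $a(f_A(x),f_A(x))$ for the type-checking gadget, whereas you reduce directly from Boolean 2RPQ containment using $0$-ary constructors and an $r$-edge between two distinct nodes; your choice saves the small detour through unary queries but buys nothing else.
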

\begin{proof}
  We reduce the containment of unary 2RPQs in the presence of schema to the
  problems of interest. Note that by
  Theorem~\ref{theorem:query-containment-hardness} and
  Corollary~\ref{cor:boolean-rpq-containement}, containment of unary acyclic
  2RPQs is \EXPTIME-hard. We take any schema $S$ and two unary 2RPQs $p(x)$ and
  $q(x)$. In all reductions $S$ is the input schema and we assume a single unary
  constructor $\F=\{f_A\}$.

  \smallskip
  \noindent
  We begin by showing that testing $(T,S)\models \bigsqcap \Gamma_T$ is
  \EXPTIME-hard. We take the transformation $T$ defined with the following
  rules.
  \[
    A(f_A(x)) \leftarrow q(x)
    \quad\text{and}\quad
    a(f_A(x),f_A(x)) \leftarrow p(x)\,.
  \]
  We observe that $(T,S)\models \bigsqcap \Gamma_T$ if and only if $p(x)\subseteq_S q(x)$.

  \smallskip
  \noindent
  For equivalence, we define the following two transformations.
  \begin{align*}
    T_1:{}& A(f_A(x)) \leftarrow q(x)\,.\\
    T_2:{}& A(f_A(x)) \leftarrow q(x)\,,\quad A(f_A(x)) \leftarrow p(x)\,.
  \end{align*}
  We observe that $T_1\equiv_S T_2$ if and only if $p(x)\subseteq_S q(x)$.

  \smallskip
  \noindent
  For type checking we define the following transformation and output schema 
  \begin{align*}
    T:{}& A(f_A(x)) \leftarrow p(x)\,,&
    &A(f_A(x)) \leftarrow q(x)\,,\\
    &a(f_A(x),f_A(x)) \leftarrow q(x)\,.\\
    S':{}& A \rightarrow a:A^\ONE\,.
  \end{align*}
  We observe that that $T(S)\subseteq S'$ if and only if $p(x)\subseteq_S q(x)$.

  \smallskip
  \noindent
  To prove that schema elicitation is also \EXPTIME-hard, we take the previous
  transformation $T$, the input schema $S$, and show that $p(x)\subseteq_S q(x)$
  if and only if the $\subseteq$-minimal schema that captures the output graphs is
  precisely $S'$. We observe that deciding equivalence of two schemas is easily
  accomplished in polynomial time and therefore any algorithm for schema
  elicitation must require exponential time. 
\end{proof}

\end{document}